\documentclass[oneside,
bibliography=totoc,%
]{scrartcl}
\pdfoutput=1

\usepackage[a4paper]{geometry}
\usepackage[T1]{fontenc}
\usepackage[utf8]{inputenc}
\usepackage{lmodern}

\usepackage[dvipsnames]{xcolor}

\usepackage{amsmath,amssymb}
\usepackage{amsthm}
\usepackage{dsfont}
\usepackage{mathtools}
\usepackage{tikz,tikz-3dplot}
\usetikzlibrary{patterns,fadings,decorations.pathreplacing}

\usepackage{hyperref}

\usepackage{subcaption}
\usepackage{algorithm,algpseudocode}

\newcounter{parentnumber}
\makeatletter
\newenvironment{subtheorem}[1]{%
  \def\subtheoremcounter{#1}%
  \refstepcounter{#1}%
  \protected@edef\theparentnumber{\csname the#1\endcsname}%
  \protected@edef\theHparentnumber{\@ifundefined{theH#1}%
  \csname the#1\endcsname\csname theH#1\endcsname}%
  \setcounter{parentnumber}{\value{#1}}%
  \setcounter{#1}{0}%
  \expandafter\def\csname the#1\endcsname{\theparentnumber.\Alph{#1}}%
  \expandafter\def\csname theH#1\endcsname{\theparentnumber.\Alph{#1}}%
  \ignorespaces
}{%
  \setcounter{\subtheoremcounter}{\value{parentnumber}}%
  \ignorespacesafterend
}
\makeatother

\newtheorem{theorem}{Theorem}%
\newtheorem{lemma}[theorem]{Lemma}
\newtheorem{corollary}[theorem]{Corollary}
\newtheorem{proposition}[theorem]{Proposition}

\theoremstyle{definition}
\newtheorem{definition}[theorem]{Definition}

\theoremstyle{remark}
\newtheorem{remark}[theorem]{Remark}

\newcommand{\bb}[1]{{\boldsymbol{#1}}}

\newcommand{\tr}{\mathrm{tr}}
\newcommand{\inner}[2]{{{#1} \cdot {#2}}}

\renewcommand{\P}{\mathbb{P}}
\newcommand{\R}{\mathbb{R}}

\newcommand{\N}{\mathbb{N}}
\newcommand{\C}{\mathbb{C}}
\newcommand{\Z}{\mathbb{Z}}
\newcommand{\E}{\mathbb{E}}
\newcommand{\one}{\mathds{1}}
\DeclareMathOperator{\supp}{supp}

\DeclareMathOperator{\relint}{relint}
\DeclareMathOperator{\NP}{\mathcal{N}}
\DeclareMathOperator{\Tr}{Tr}
\newcommand{\dd}{\text{d}}
\DeclareMathOperator{\faces}{faces}
\DeclareMathOperator{\var}{Var}

\newcommand{\bigO}{\mathcal{O}}

\newcommand{\vrt}{\mathrm{vert}}

\DeclareMathOperator{\Aut}{Aut}

\DeclareMathOperator{\Exp}{Exp}

\renewcommand{\Re}{\operatorname{Re}}

\makeatletter
 \tikzoption{canvas is plane}[]{\@setOxy#1}
 \def\@setOxy O(#1,#2,#3)x(#4,#5,#6)y(#7,#8,#9)%
   {\def\tikz@plane@origin{\pgfpointxyz{#1}{#2}{#3}}%
    \def\tikz@plane@x{\pgfpointxyz{#4}{#5}{#6}}%
    \def\tikz@plane@y{\pgfpointxyz{#7}{#8}{#9}}%
    \tikz@canvas@is@plane
   }
 \makeatother

\hypersetup{
  pdfauthor={Michael Borinsky},
  pdftitle={Tropical Monte Carlo quadrature for Feynman integrals},
  pdfsubject={},
  pdfkeywords={},
  linkcolor  = RedViolet!85!black,
  citecolor  = BlueGreen!85!black,
  urlcolor   = YellowOrange!85!black,
  colorlinks = true,
}

\date{}

\begin{document}

\title{Tropical Monte Carlo quadrature for Feynman integrals}
\author{Michael Borinsky\thanks{\textsc{Nikhef, Science Park 105, Amsterdam 1098 XG, The Netherlands --- Preprint nr.~2020-027}}
}

\maketitle

\begin{abstract} 
We introduce a new method to evaluate algebraic integrals over the simplex numerically. This new approach employs techniques from tropical geometry and exceeds the capabilities of existing numerical methods by an order of magnitude.  The method can be improved further by exploiting the geometric structure of the underlying integrand. As an illustration of this, we give a specialized integration algorithm for a class of integrands that exhibit the form of a generalized permutahedron. This class includes integrands for scattering amplitudes and parametric Feynman integrals with tame kinematics. A proof-of-concept implementation is provided with which Feynman integrals up to loop order $17$ can be evaluated.
\end{abstract}

\setcounter{tocdepth}{1}
\tableofcontents
\clearpage

\section{Introduction}
\subsection{Motivation}

Feynman integrals are ubiquitous in various branches of theoretical physics. They are hard to evaluate and predictions for particle physics experiments rely heavily on them. Their evaluation even poses a bottleneck for the analysis of the data from some high accuracy experiments \cite{Heinrich:2020ybq}. This situation has fostered the development of extremely sophisticated and specialized technologies aimed to obtain a manageable analytic expression for a given Feynman integral. The state of the art technique is the \emph{differential equation method} \cite{Kotikov:1990kg,Remiddi:1997ny,Henn:2013pwa}. A slightly less powerful method, which is amendable to more general algebraic integrals, is systematic algebraic integration \cite{Brown:2008um,Panzer:2014caa}. See also \cite{Smirnov:2012gma} for an overview on other methods.

The rapid development of these technologies in the last decades has been driven largely by new deep insights into the underlying mathematical structures. 
Recent advances in the differential equation method were inspired by the simple analytic expressions which can be obtained in supersymmetric quantum field theories via generalized unitarity and recursion relations \cite{Bern:1994zx,ArkaniHamed:2010kv,Drummond:2010cz}.
A program to study the arithmetic properties of parametric Feynman integrals \cite{Bloch:2005bh,Brown:2009ta,Brown:2015fyf} led to the development of systematic algebraic integration algorithms. This arithmetic understanding of the relevant function classes was also one of the driving forces of the differential equation method \cite{henn2015lectures} and is still driving new developments in the especially challenging elliptic regime \cite{Broedel:2018iwv}.

All these technologies aim to obtain a closed form analytic expression for the Feynman integral and they all fail once the underlying graph and the associated physical parameters exceed a certain complexity. In these cases a numerical approach is the only way to proceed \cite{Borowka:2018dsa}.

The most established numerical approach to tackle such integrals is \emph{sector decomposition}. Sector decomposition as a tool for numerical evaluation of Feynman integrals has been developed by Binoth and Heinrich \cite{Binoth:2000ps}. It was subsequently improved by Bogner and Weinzierl \cite{Bogner:2007cr}. Another conceptual innovation of the overall method was achieved by Kaneko and Ueda \cite{Kaneko:2009qx} who brought sector decomposition on a geometric footing. Today, geometric sector decomposition is still the most powerful method for the numerical evaluation of Feynman integrals. It lies at the heart of two popular software tools \cite{Borowka:2015mxa,Smirnov:2013eza}. 
Another promising numerical technique for Feynman integration is \emph{loop-tree duality} \cite{Catani:2008xa}, which is in an active development phase (see for instance \cite{Runkel:2019yrs,Capatti:2019ypt} and the references therein).

In contrast to analytic evaluation methods the mathematical structures exhibited by Feynman integrals are an essentially untapped resource in the context of numerical evaluation. Most numerical techniques are completely oblivious to the rich specific structure of the integrals as they are designed to be applicable to arbitrary algebraic integrands.
For this reason, the major objective of this paper is to use some of these mathematical structures to improve the numerical evaluation of Feynman graphs and to show that these dormant resources can be harnessed.
The overall endeavour behind this objective consists of making progress towards the following two goals:

The first goal is to make numerical evaluation techniques more applicable to real world phenomenology. There are integrals which contribute to interesting measurable processes, but cannot be calculated analytically with available methods. For these integrals numerical evaluation is currently the only way to make predictions for experiments. 
Numerical methods naturally come with a caveat: they suffer from long evaluation times or they are limited in accuracy. 
Feynman integrals usually need to be evaluated a large number of times in a big parameter space. This is not difficult if an analytic expression for the integral is known, which can be evaluated sufficiently fast, but poses a tough challenge for numerical methods which sacrifice evaluation speed and accuracy for generality. The task for this goal is therefore to increase the performance of numerical methods. 

The second goal is to obtain reliable data in the \emph{large-order regime} where analytic methods hopelessly fail. There are many indications that the large-order behaviour of perturbation theory is deeply intertwined with non-perturbative phenomena \cite{le2012large}. The analysis of the large-order behaviour of perturbation theory in quantum mechanics by Bender and Wu \cite{bender1969anharmonic} has sparked an extremely fruitful branch of research in theoretical and mathematical physics \cite{AIHPA_1999__71_1_1_0,le2012large}. Non-perturbative analytic calculations in quantum field theory are plagued with various gaps in our understanding of the underlying mathematics \cite{McKane:2018ocs}. A repetition of an explicit Bender-Wu like numerical analysis for perturbative quantum field theory is very desirable as it would shed some light into a highly unexplored territory. Unfortunately, this is extremely challenging as the evaluation of large numbers of Feynman integrals of order $\sim 100$ would be necessary. 
It is hopeless to approach this task using the naive method of evaluation Feynman integrals one by one. New methods with which whole classes of diagrams can be evaluated at once need to be developed. The growing understanding of the geometry of amplitude integrals could lead the way to such methods. It is also necessary that these methods are \emph{computationally efficient}: The demand for computing resources shall at most depend polynomially on the size of the problem (i.e.~the respective order in the perturbative expansion).

This paper achieves some progress towards both these goals.  The strategy is to employ \emph{tropical geometry} \cite{maclagan2015introduction} for numerical quadrature. Panzer \cite{Panzer:2019yxl} recently showed that a tropical version of a Feynman graph's \emph{period} behaves similar to the period itself and anticipated that this tropical version may be used for explicit numerical evaluation. Tropical geometry has also recently been used in the context of string theory and scattering amplitudes \cite{Cachazo:2019ngv,Arkani-Hamed:2019mrd}. 

We will introduce a new Monte Carlo algorithm with which the numerical evaluation of Feynman integrals can be significantly accelerated. It is based on the established geometric sector decomposition principle and can be applied to general algebraic integrals such as the one below in eq.~\eqref{eq:integral}. The improvement comes from a stratified sampling approach to Monte Carlo quadrature, which is driven by the (tropical) geometric structure of the algebraic integrand. We will call this method \emph{tropical sampling}. Tropical sampling effectively decouples the complexity of the underlying integral from the achievable accuracy within the Monte Carlo approach. 

Even though this new tropical-geometric technique offers a significant improvement over the traditional procedure already for general algebraic integrals, a lot more can be achieved if further information on the geometric structures of the integral is used. 

As an example of this, we will give a specialized algorithm for cases where the integrand at hand exhibits the form of a \emph{generalized permutahedron} \cite{postnikov2009permutohedra} in a certain sense which will be defined later. Many integrals in quantum field theory and string theory fall under this category. For instance, integrands for complete amplitudes in various theories \cite{Arkani-Hamed:2017mur} and Feynman integrals with generic Euclidean kinematics are of this kind \cite{schultka2018toric}. 

A proof-of-concept implementation of the resulting algorithm is provided. With this implementation high dimensional Feynman integrals can be numerically integrated using widely available hardware. High dimensional explicitly means that integrals corresponding to Feynman graphs with around $20$ edges can be estimated up to $10^{-3}$ relative accuracy in a couple of CPU-seconds and integrals for graphs with up to $30$ edges in about half an hour. Ultimately, this approach is not CPU but memory constrained when the complexity increases. For instance, for a graph with $30$ edges already $16 \text{ GB}$ of computer memory are required to run the algorithm. To integrate an $18$ loop $\varphi^4$-theory four-point graph with the implementation $1 \text{ TB}$ of memory would be necessary. 

Although both algorithms are not \emph{efficient} in the strong sense, as exponential runtime and memory requirements start to dominate at some point, there is hope for the existence of an algorithm that evaluates a Feynman graph of loop order $n$ up to a given accuracy with runtime and memory demands bounded by a polynomial in $n$ (see Section~\ref{sec:openquestions} (1)).

\subsection{Algebraic integrals over the simplex}
The central object of study in this article is the integral
\begin{align} \label{eq:integral} I = \int_{\P_{>0}^{n-1}} \frac{\prod_{i} a_{i}(\bb{x})^{\nu_i}}{\prod_{j} b_{j}(\bb{x})^{\rho_j}} \Omega \quad \text{ with } \end{align}
\begin{itemize}
\item
the positive orthant of real projective space as integration domain,
\begin{align*} \P_{>0}^{n-1} = \{\bb x= [x_1:\ldots:x_n] \in \P^{n-1}(\R): x_k > 0 \text{ for all } k =1,\ldots,n\}, \end{align*}
\item the differential form 
\begin{align*} \Omega = \sum_{k=1}^n (-1)^{n-k} \frac{\dd x_1}{x_1} \wedge \ldots \wedge \widehat{\frac{\dd x_k}{x_k} } \wedge \ldots \wedge \frac{\dd x_n}{x_n}, \end{align*}
\item the sets of homogeneous polynomials $\{a_1,a_2,\ldots\},\{b_1,b_2,\ldots\} \subset \C[x_1, \ldots, x_n]$, s.t. 
\begin{align} \label{eq:homogeneous} \sum_i \nu_i \deg a_i = \sum_j \rho_j \deg b_j, \end{align}
\item where the coefficients $\nu_i,\rho_j \in \C$ have non-negative real part: $\Re \nu_i, \Re \rho_j \geq 0$ and
\item 
a fixed branch choice for the each of the non-integer powers, e.g.\ $a_i(\bb x)^{\nu_i} \mapsto e^{\nu_i \log a_i(\bb x)}$.
\end{itemize}
It follows that $I$ is a \emph{projective integral} over the \emph{projective simplex}. The differential form $\Omega$, which is homogeneous degree $0$, is also called the \emph{canonical form} on this simplex \cite{Arkani-Hamed:2017tmz}.  

Parametric Feynman integrals in quantum field theory are of the same type as the integral in eq.~\eqref{eq:integral} \cite{Nakanishi:110324}. In string theory this type of integrals plays an equally  important role \cite{Green:1987sp}. Integrals over \emph{positive geometries}, which appear in the theory of scattering amplitudes can also be brought into this form \cite{ArkaniHamed:2010kv,Arkani-Hamed:2017mur,Arkani-Hamed:2017tmz}.

The integral in eq.~\eqref{eq:integral} can be written as an integral over the positive orthant of $\R^n$: $\R^{n-1}_{>0} = \{(x_1,\ldots,x_{n-1})\in \R^{n-1}: x_k > 0\}$ by picking an \emph{affine chart} for projective space, for instance 
\begin{align} \label{eq:integral_euler_mellin} I = \int_{\R_{>0}^{n-1}} \left. \frac{\prod_{i} a_{i}(\bb{x})^{\nu_i}}{\prod_{j} b_{j}(\bb{x})^{\rho_j}} \right|_{x_n=1} \prod_{k=1}^{n-1} \frac{\dd x_k}{x_k}, \end{align}
by pulling back the diffeomorphism $\R^{n-1}_{>0} \rightarrow \P^{n-1}_{>0}, (x_1,\ldots,x_{n-1}) \mapsto [x_1, \ldots, x_{n-1}, 1]$.

Such an integral is called a \emph{generalized Euler-Mellin integral}. Continuing a program started by Gelfand, Kapranov and Zelevinsky (GKZ) \cite{gelfand1990generalized}, these integrals have been studied extensively by Nilsson and Passare and others \cite{nilsson2013mellin,berkesch2014euler}. 
This analysis is compatible with the geometric sector decomposition approach as was shown by Schultka \cite{schultka2018toric}, who studied these integrals using \emph{toric geometry}.  Recently, generalized Euler-Mellin integrals gained new attention in the context of positive geometries, scattering amplitudes and string theory \cite{Arkani-Hamed:2019mrd,He:2020onr}. Along these lines also the analysis of Feynman integrals as GKZ-type hypergeometric functions has recently gained a lot of attention \cite{delaCruz:2019skx,Klausen:2019hrg,Feng:2019bdx}. 

As every non-homogeneous polynomial in $n-1$ variables can be homogenized by introducing a new variable, every generalized Euler-Mellin integral such as the one in eq.~\eqref{eq:integral_euler_mellin} is equivalent to an integral of the projective form in eq.~\eqref{eq:integral}. 
For our considerations it will be more convenient to work with the projective form.

\subsection{Outline of the paper}
After introducing the necessary preliminaries from polyhedral geometry and numerical integration in Section~\ref{sec:preliminaries}, we will establish the most important tool in this paper in Section~\ref{sec:trop_approx}: 
an approximation of a multivariate polynomial, which is obtained by setting all its coefficients to $1$ and replacing $+$ by $\max$. Starting for instance with the polynomial $p(x_1, x_2, x_3) = a x_1^2 x_2 + b x_1 x_2 x_3 + c x_3^3$, we get the `approximation' $p^\tr(x_1,x_2,x_3) = \max(x_1^2 x_2, x_1 x_2 x_3, x_3^3)$. 
This procedure is inspired from and closely related to \emph{tropical geometry}. Hence, $p^\tr$ will be called the \emph{tropical approximation} of $p$. 
This tropical approximation will be the subject of the main Theorem~\ref{thm:approx} of this article, where it will be proven that $p^\tr$ can indeed by used to approximate the polynomial $p$ in a certain sense, as long as $p$ is \emph{completely non-vanishing}. In the rest of the article we will apply this property in various contexts.

In Section~\ref{sec:tropsecdec}, we will reformulate Kaneko-Ueda geometric sector decomposition in a tropical geometric framework. This reformulation will enable us to introduce the new tropical sampling algorithm in Section~\ref{sec:trop_sampling}. This algorithm is significantly more efficient than traditional Monte Carlo methods as the achievable accuracy is effectively decoupled from the complexity of the integral. Only the runtime of a preprocessing step still depends heavily on the complexity of the integral. 

This new algorithm can be improved further if more is known about the structure of the integrand polynomials $\{a_i\}$ and $\{b_j\}$. As an example for this, we will specialize to the case where the Newton polytopes of these polynomials are \emph{generalized permutahedra} in Section~\ref{sec:genperm}. Generalized permutahedra are a family of polytopes with a rich combinatorial structure. Many polytopes are from this family including \emph{associahedra}, the Newton polytopes of Symanzik polynomials and other polytopes at play in the theory of scattering amplitudes \cite{Arkani-Hamed:2017mur}. We will use results by Postnikov \cite{postnikov2009permutohedra}, Aguiar, Ardila \cite{aguiar2017hopf} and Fujishige, Tomizawa \cite{fujishige1983note} to formulate a specialized algorithm. This second new algorithm has more favorable runtime and memory requirements and is easier to implement. 

Even though the improved algorithm can be applied to all generalized permutahedra integrands, as for instance the ones for complete scattering amplitudes introduced by Arkani-Hamed, Bai, He and Yan \cite{Arkani-Hamed:2017mur}, we will specify to Feynman integrals in Section~\ref{sec:feynman} for illustrative purposes. 

The first, general tropical sampling algorithm can always be applied to Feynman integrals regardless of their explicit form. The second algorithm can only be applied if the Newton polytopes of the Symanzik polynomials are generalized permutahedra. We will use results by Brown \cite{Brown:2009ta} and Schultka \cite{schultka2018toric} which ensure that this is the case as long as we are in a non-exceptional Euclidean kinematic region. Subsequently, we will discuss some experimental results which have been obtained using a proof-of-concept implementation of the second algorithm. 

We conclude with a selection of future research directions resulting from this project in the last Section~\ref{sec:openquestions}.

\subsection*{Acknowledgements}
I am indebted to David Broadhurst, Iain Crump, Alejandro Morales and Erik Panzer for helpful discussions during the workshops at SFU in 2016 and at the University of Waterloo in 2018 both organized by Karen Yeats, the Programme on `Algorithmic and Enumerative Combinatorics' at the Erwin-Schrödinger Institute in 2017 and the `Summer school on structures in local quantum field theory' at the École de physique des Houches in 2018 organized by Dirk Kreimer.
This work also greatly benefited from discussions with Francis Brown, Gudrun Heinrich, Franz Herzog, Dirk Kreimer, Oliver Schnetz, Konrad Schultka, Jos Vermaseren and Karen Yeats.
I also wish to thank Erik Panzer for comments and suggestions on an early version of this manuscript and Jos Vermaseren for granting me generous access to his computing resources.

This work has been supported by the NWO Vidi grant 680-47-551 `Decoding Singularities of Feynman graphs'.
\section{Preliminaries}
\label{sec:preliminaries}
\subsection{Notation for polytopes and multivariate polynomials}
The integral in eq.~\eqref{eq:integral} is convergent if the sets of polynomials $\{a_i\}$ and $\{b_j\}$ fulfill certain properties, which essentially have been determined by Nilsson and Passare \cite{nilsson2013mellin}. In this section, we will briefly review these properties and introduce the necessary vocabulary from polyhedral and tropical geometry.

To keep the notation simple, we will identify the space of linear forms on $\R^n$ with $\R^n$ via the usual scalar product $\inner{\bb v}{\bb w} = \sum_{k=1}^n v_k w_k$. A polytope is the intersection of a finite number of \emph{half-spaces} in $\R^n$. A subset $F \subset \mathcal P$ of a polytope $\mathcal P \subset \R^n$ is a face of $\mathcal P$ if there is a vector $\bb y \in \R^n$ and a scalar $\xi \in \R$ such that $\mathcal P$ is contained in the half-space $\{\bb v\in \R^n: \inner{\bb y}{\bb v} \leq \xi \}$ and $F$ is the intersection of $\mathcal P$ with the \emph{hyperplane} $\{\bb v\in \R^n: \inner{\bb y}{\bb v} = \xi \}$. Equivalently, a face of a polytope is a subset of $\mathcal P$ which maximizes a given linear functional $\bb y \in \R^n$, $F = \{ \bb v \in \mathcal P: \inner{\bb y}{\bb v} = \max_{\bb w \in \mathcal P} \inner{\bb y}{\bb w} \}$. We will assume polytopes to be bounded, i.e.~$\max_{\bb w \in \mathcal P} \inner{\bb y}{\bb w} < \infty$ for all $\bb y\in \R^n$.

For a pair of non-negative real numbers $\lambda, \mu \geq 0$ the weighted Minkowski sum of two polytopes $\mathcal{P}, \mathcal{Q} \subset \R^n$ is $\lambda \mathcal{P} + \mu \mathcal{Q} = \left\{ \lambda \bb{v} + \mu \bb{w} : \bb{v} \in \mathcal{P}, \bb{w} \in \mathcal{Q} \right\} \subset \R^n$.  The relative interior, $\relint(\mathcal P)$, of a polytope $\mathcal P$ is the interior of $\mathcal P$ determined in the subtopology of the affine hull of $\mathcal P$, which is the affine subspace of minimal dimension that contains $\mathcal P$.

We can write a generic multivariate polynomial $p$ in the variables $x_1, \ldots, x_n$ as
\begin{align*} p(x_1,\ldots, x_n) = p(\bb x) = \sum_{\bb{\ell} \in \supp(p)} c_{\bb{\ell}} \prod_{k=1}^n x_k^{\ell_k} = \sum_{\bb{\ell} \in \supp(p)} c_{\bb{\ell}} \bb x^{\bb \ell}, \end{align*}
where $\supp(p)$, the \emph{support} of $p$, is the set of all multi-indices $(\ell_1, \ldots, \ell_n) = \bb{\ell} \in \Z^{n}$ such that $c_{\bb{\ell}} \neq 0$. We will make regular use of the multiplicative multi-index notation $\bb x^{\bb \ell} = \prod_{k=1}^n x_k^{\ell_k}$ as above.
The Newton polytope of $p$ is the convex hull of the elements in $\supp(p)$ interpreted as vectors in $\R^n$:
\begin{align*} \NP_{p}= \left\{ \sum \limits_{\bb{\ell} \in \supp(p)} \lambda_{\bb{\ell}} \bb{\ell} : \sum \limits_{\bb{\ell} \in \supp(p)} \lambda_{\bb{\ell}} = 1 \text{ and } \lambda_{\bb{\ell}} \geq 0 \right\} \subset \R^n. \end{align*}
The Newton polytope $\NP_p$ of a homogeneous polynomial $p$ in $n$ variables is at most ($n-1$)-dimensional as it is contained in the hyperplane $\{\bb{v} \in \R^n : \inner{\one}{\bb{v}} = \deg{p}\} \supset \NP_p$, where $\one$ is the only-ones-vector $\one=(1,\ldots, 1)\in \R^n$.

\begin{definition}
\label{def:truncated_polynomial}
For each face $F$ of $\NP_p$ associated to a polynomial $p$, we define the truncated polynomial $p_F$ by
\begin{align*} p_F(\bb x) = \sum_{\bb{\ell} \in F \cap \supp(p)} c_{\bb{\ell}} \bb x^{\bb \ell}. \end{align*}
\end{definition}

To ensure convergence of integrals such as the one in eq.~\eqref{eq:integral}, the following property of polynomials is useful:
\begin{definition}
A polynomial $p \in \C[x_1, \ldots, x_n]$ is \emph{completely non-vanishing} on a domain $X$ if for each face $F \subset \NP_p$, the truncated polynomial $p_F$ does not vanish on $X$.
\end{definition}

With this terminology at hand we can give a convergence criterion for the integral in eq.~\eqref{eq:integral}.
\begin{theorem}
\label{thm:convergence}
We define the polytopes $\mathcal A,\mathcal B \subset \R^n$ as  the weighted Minkowski sums
\begin{align*} \mathcal{A} &= \sum_{i} (\Re \nu_i) \NP_{a_i} & \mathcal{B} &= \sum_{j} (\Re \rho_j) \NP_{b_j} \end{align*}
of the Newton polytopes of the numerator and denominator polynomials $\{a_i\}$ and $\{b_j\}$.

The integral in eq.~\eqref{eq:integral} is convergent if
\begin{enumerate}
\item[\normalfont \textbf{R1}]
the denominator polytope $\mathcal B$ is $(n-1)$-dimensional,
\item[\normalfont \textbf{R2}]
the numerator polytope $\mathcal A$ is contained in the relative interior of $\mathcal B$: $\mathcal A \subset \relint \mathcal B$,
\item[\normalfont \textbf{R3}]
all the denominator polynomials $\{b_j\}$ are completely non-vanishing on $\P^{n-1}_{>0}$.
\end{enumerate}
\end{theorem}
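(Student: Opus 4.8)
The plan is to bound $|I|$ by the integral of the modulus of the integrand over the open simplex and to prove that this dominating integral is finite, using a geometric sector decomposition in the spirit of Kaneko--Ueda \cite{Kaneko:2009qx} (see also the toric reformulation \cite{schultka2018toric}) that turns the integrand into a product of powers of the new integration variables; convergence then becomes a positivity statement about the resulting exponents, which \textbf{R1} and \textbf{R2} will supply, while \textbf{R3} provides the needed uniform lower bounds for the denominator. Write $f(\bb x) = \prod_i a_i(\bb x)^{\nu_i}/\prod_j b_j(\bb x)^{\rho_j}$ for the integrand. Since $|z^{\nu}| \le |z|^{\Re\nu}\,e^{\pi|\Im\nu|}$ for a fixed branch, $|f(\bb x)|$ is at most a global constant times $\prod_i |a_i(\bb x)|^{\Re\nu_i}/\prod_j |b_j(\bb x)|^{\Re\rho_j}$, so it suffices to show that the integral of this product against $|\Omega|$ over $\P^{n-1}_{>0}$ is finite. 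Each $b_j$ is its own truncation to the top face $\NP_{b_j}$, hence is nonvanishing on $\P^{n-1}_{>0}$ by \textbf{R3}, so the only potential source of divergence is the boundary of the closed simplex.

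First I would set up the sectors. Choose a complete unimodular fan $\Sigma$ in $\R^n/\R\one \cong \R^{n-1}$ refining the normal fan of the Minkowski sum $\sum_i \NP_{a_i} + \sum_j \NP_{b_j}$. A top-dimensional cone $\sigma$ of $\Sigma$, with primitive lattice-basis generators $\bb u_1, \ldots, \bb u_{n-1}$, determines a sector through the monomial substitution $\log x_k = \sum_\alpha (\bb u_\alpha)_k \log y_\alpha$ (taking representatives with $(\bb u_\alpha)_n = 0$, so $x_n = 1$), which maps $(0,1]^{n-1}$ bijectively onto a region of $\P^{n-1}_{>0}$; by completeness of $\Sigma$ these regions cover $\P^{n-1}_{>0}$ with pairwise overlaps of measure zero, and by unimodularity $\Omega$ pulls back to $\pm\prod_\alpha \dd y_\alpha / y_\alpha$. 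As $\sigma$ is contained in a single full-dimensional cone of the normal fan of $\sum_i\NP_{a_i}+\sum_j\NP_{b_j}$, it lies in the normal cone of a unique vertex of each $\NP_{a_i}$ and each $\NP_{b_j}$, so every $a_i$ and $b_j$ monomializes on the sector: $a_i(\bb x(\bb y)) = \bb y^{\bb m_i}\,\widetilde a_i(\bb y)$ with $(\bb m_i)_\alpha = \min_{\bb p \in \NP_{a_i}} \inner{\bb u_\alpha}{\bb p}$ and $\widetilde a_i$ a polynomial in $\bb y$ with non-negative exponents and $\widetilde a_i(\bb 0) \neq 0$, and likewise $b_j(\bb x(\bb y)) = \bb y^{\bb n_j}\,\widetilde b_j(\bb y)$ with $(\bb n_j)_\alpha = \min_{\bb p \in \NP_{b_j}} \inner{\bb u_\alpha}{\bb p}$.

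The crux of the proof, which I expect to be the main obstacle, is turning the combinatorial hypothesis \textbf{R3} into uniform analytic estimates. I would show each $\widetilde b_j$ is nonvanishing on the \emph{entire} closed cube $[0,1]^{n-1}$. Restricting $\widetilde b_j$ to the face of the cube where the coordinates in a chosen subset $T \subseteq \{1,\ldots,n-1\}$ vanish gives, up to a monomial prefactor, the truncated polynomial $(b_j)_F$ of $b_j$ for the face $F = \{\bb p \in \NP_{b_j} : \inner{\bb u_\alpha}{\bb p} = (\bb n_j)_\alpha \text{ for all } \alpha \in T\}$, rewritten through the monomial substitution of the lower-dimensional sector spanned by the remaining rays; evaluated at a point where the remaining coordinates are positive this equals the value of $(b_j)_F$ at a genuine point of $\P^{n-1}_{>0}$, which is nonzero by complete non-vanishing. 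Since the relative interiors of the faces of $[0,1]^{n-1}$ partition it --- the case $T = \emptyset$ recovering that $b_j$ itself does not vanish on $\P^{n-1}_{>0}$ --- the polynomial $\widetilde b_j$ has no zero on $[0,1]^{n-1}$, hence $|\widetilde b_j| \ge \delta_j > 0$ there by compactness; meanwhile $\prod_i |\widetilde a_i|^{\Re\nu_i}$ is continuous, hence bounded above, on the cube, and no hypothesis on the $a_i$ is needed here because $\Re\nu_i \ge 0$. Therefore on each sector $|f|\,|\Omega| \le C_\sigma \prod_\alpha y_\alpha^{M_\alpha^\sigma - 1}\,\dd y_\alpha$ with $M_\alpha^\sigma = \sum_i (\Re\nu_i)(\bb m_i)_\alpha - \sum_j (\Re\rho_j)(\bb n_j)_\alpha$, and this is integrable over $(0,1]^{n-1}$ exactly when $M_\alpha^\sigma > 0$ for every $\alpha$.

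It remains to deduce $M_\alpha^\sigma > 0$ from \textbf{R1} and \textbf{R2}. By additivity of support functions over weighted Minkowski sums, $M_\alpha^\sigma = \min_{\bb p\in\mathcal A}\inner{\bb u_\alpha}{\bb p} - \min_{\bb q\in\mathcal B}\inner{\bb u_\alpha}{\bb q}$. Here $\bb u_\alpha$ is a nonzero primitive vector of $\R^n/\R\one$, so by \textbf{R1}, which makes $\mathcal B$ full-dimensional in the hyperplane $\{\bb v\in\R^n : \inner{\one}{\bb v} = \sum_j(\Re\rho_j)\deg b_j\}$, the linear form $\bb v \mapsto \inner{\bb u_\alpha}{\bb v}$ is non-constant on $\mathcal B$ and attains its minimum over $\mathcal B$ on a proper face of $\mathcal B$; by \textbf{R2} the minimizer of this form over the compact set $\mathcal A$ lies in $\relint\mathcal B$, hence off that proper face, so $\min_{\bb p\in\mathcal A}\inner{\bb u_\alpha}{\bb p} > \min_{\bb q\in\mathcal B}\inner{\bb u_\alpha}{\bb q}$, i.e.\ $M_\alpha^\sigma > 0$. (Were \textbf{R1} to fail, some nonzero $\bb u_\alpha$ would be constant on $\mathcal B \supseteq \mathcal A$, forcing $M_\alpha^\sigma = 0$ and a logarithmic divergence --- which is why the hypothesis is there.) Summing over the finitely many sectors gives $\int_{\P^{n-1}_{>0}} |f|\,|\Omega| \le \sum_\sigma C_\sigma \prod_\alpha (M_\alpha^\sigma)^{-1} < \infty$, so $I$ converges absolutely. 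I would treat as routine the existence of the unimodular refinement, the bookkeeping of representatives and signs in the monomialization, and the explicit constant absorbing the contributions of the imaginary parts of the exponents.
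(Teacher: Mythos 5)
Your proof is correct and follows the same geometric sector decomposition philosophy as the paper, but it reaches the key compactness estimate by a noticeably different route. The paper first establishes the global tropical approximation bound $C\,p^\tr(\bb x) \leq |p(\bb x)| \leq C'\,p^\tr(\bb x)$ on all of $\R^n_{>0}$ (Theorems~\ref{thm:approx_upper} and \ref{thm:approx_lower}, proved via an induction on face codimension in Proposition~\ref{prop:exp_bound_CF} and the quantitative cone estimate of Lemma~\ref{lmm:cone_limit}), then writes the integrand as $\frac{\prod a_i^\tr}{\prod b_j^\tr}\cdot R_{a/b}$ and applies its sector decomposition (Theorem~\ref{thm:secdec}, based on a simplicial refinement of $\mathcal F_{\mathcal{AB}}$ with explicit determinant prefactors) to the tropical prefactor, with $R_{a/b}$ bounded uniformly by Corollary~\ref{crll:homo_approx}. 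You instead take a unimodular refinement and monomialize each $a_i$ and $b_j$ directly on each sector, then show the residual polynomial $\widetilde b_j$ has no zero on the entire closed cube $[0,1]^{n-1}$ by identifying its restriction to each vanishing locus $\{y_\alpha=0:\alpha\in T\}$, up to a positive monomial prefactor, with the truncated polynomial $(b_j)_F$ evaluated at a genuine point of $\P^{n-1}_{>0}$, which \textbf{R3} forbids from vanishing; compactness then gives the lower bound in one step, without the inductive machinery. For the purposes of Theorem~\ref{thm:convergence} alone your face-of-the-cube argument is leaner; what the paper's more elaborate tropical approximation buys is a reusable uniform pointwise bound on $R_{a/b}$ over the whole domain, which the Monte Carlo quadrature in Section~\ref{sec:trop_sampling} genuinely needs (it must evaluate $R_{a/b}$ at randomly sampled points, not just bound a sector integrand). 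Your deduction of $M_\alpha^\sigma > 0$ from \textbf{R1} and \textbf{R2} via the support-function additivity matches the role of Lemma~\ref{lmm:ABsupportlimit} in the paper. One small point that both you and the paper leave implicit: the bound $|z^\nu|\le |z|^{\Re\nu}e^{\pi|\Im\nu|}$ presumes the fixed branch keeps $|\arg a_i(\bb x)|$ and $|\arg b_j(\bb x)|$ bounded as $\bb x$ ranges over the (noncompact) $\P^{n-1}_{>0}$; this is automatic for a principal branch, which is the paper's suggested choice.
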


\begin{remark}
\label{rmk:homogeneous}
Because each of the $\{a_i\}$ and $\{b_j\}$ polynomials is homogeneous, neither $\mathcal A$ nor $\mathcal B$ is full dimensional in $\R^n$. 
The condition in eq.~\eqref{eq:homogeneous}, which implies that $\sum_i \Re \nu_i \deg a_i = \sum_j \Re \rho_j \deg b_j$, guarantees that $\mathcal A$ and $\mathcal B$ both lie in the same hyperplane $\mathcal A, \mathcal B \subset \{\bb{v} \in \R^n : \inner{\one}{\bb{v}} = \xi\}$, where $\xi = \sum_i \Re \nu_i \deg a_i = \sum_j \Re \rho_j \deg b_j$ in which $\mathcal B$ is required to be full-dimensional by requirement \normalfont{R1}.
\end{remark}

A similar theorem in the equivalent context of Euler-Mellin integrals was proven in \cite{nilsson2013mellin} (see also \cite{berkesch2014euler}). The \emph{tropical approximation} that we will introduce later will lead to an alternative proof of Theorem~\ref{thm:convergence} which we postpone to Section~\ref{sec:tropsecdec}. %
\subsection{Monte Carlo quadrature}
We will be interested in situations where the dimension $n$ of the integral in eq.~\eqref{eq:integral} is `not small'. The dimension of an integral is small from the perspective of numerical quadrature if fast-converging deterministic quadrature methods are feasible. The computational demands of deterministic quadrature methods such as Gauss-quadrature grow exponentially with the dimension. For this reason, it is necessary to use non-deterministic methods which do not suffer from an exponential slow-down if $n$ is not small. Monte Carlo quadrature is the most elementary of these. 
The working principle behind it is the following fact:
\begin{theorem}[Monte Carlo quadrature (see for instance \cite{hammersley1964monte})]
\label{thm:montecarlo}
If $\bb x^{(1)},\ldots,\bb x^{(N)}$ are independent random variables with probability density measure $\mu$, i.e.\ $1 = \int_\Gamma \mu$ and $\mu>0$ on the domain $\Gamma$ and
\begin{gather*} G^{(N)} = \frac{1}{N} \sum_{\ell=1}^N f(\bb x^{(\ell)}), \intertext{then} \E[ G^{(N)} ] = \E [ f(\bb x) ] = \int_\Gamma f(\bb x) \mu \\ \text{and } \var[ G^{(N)} ] = \frac{1}{N} \var[ f(\bb x) ] \text{ where } \var[f(\bb x)] = \int_\Gamma | f(\bb x) - \E[ f(\bb x) ] |^2 \mu, \end{gather*}
provided that the integrals in the last two lines exist.
\end{theorem}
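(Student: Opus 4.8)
The plan is to use linearity of the expectation for the first identity and bilinearity of the covariance together with independence for the second. First I would note that each $\bb x^{(\ell)}$ has the same law, given by the density $\mu$ on $\Gamma$, so that $\E[f(\bb x^{(\ell)})] = \int_\Gamma f(\bb x)\, \mu = \E[f(\bb x)]$ for every $\ell$ (this integral exists by hypothesis). Linearity of the expectation then gives
\[ \E[G^{(N)}] = \frac{1}{N} \sum_{\ell=1}^N \E[f(\bb x^{(\ell)})] = \frac{1}{N} \cdot N \cdot \E[f(\bb x)] = \E[f(\bb x)], \]
which is the first claim.

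For the variance, I would write $m_f := \E[f(\bb x)]$ and $Y_\ell := f(\bb x^{(\ell)}) - m_f$; the $Y_\ell$ are then independent, identically distributed and centered, and $G^{(N)} - \E[G^{(N)}] = \frac{1}{N}\sum_{\ell=1}^N Y_\ell$. Expanding the squared modulus,
\[ \var[G^{(N)}] = \E\Bigl[\, \bigl| \tfrac{1}{N}\sum_{\ell=1}^N Y_\ell \bigr|^2 \Bigr] = \frac{1}{N^2} \sum_{\ell=1}^N \sum_{m=1}^N \E\bigl[ Y_\ell \overline{Y_m} \bigr]. \]
For the off-diagonal terms $\ell \neq m$, independence of $\bb x^{(\ell)}$ and $\bb x^{(m)}$ makes the expectation factor (the product is integrable by Cauchy--Schwarz since the second moments are assumed finite), so $\E[Y_\ell \overline{Y_m}] = \E[Y_\ell]\,\overline{\E[Y_m]} = 0$. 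Each of the $N$ diagonal terms equals $\E[|Y_\ell|^2] = \int_\Gamma |f(\bb x) - m_f|^2\, \mu = \var[f(\bb x)]$. Hence $\var[G^{(N)}] = \frac{1}{N^2} \cdot N \cdot \var[f(\bb x)] = \frac{1}{N}\var[f(\bb x)]$.

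Since $f$ is in general complex-valued, the only point that requires any care --- and the closest thing to an obstacle --- is the bookkeeping of complex conjugates in the variance computation: one works throughout with $|z|^2 = z\bar z$ and the given definition $\var[f(\bb x)] = \int_\Gamma |f(\bb x) - \E[f(\bb x)]|^2\, \mu$, not with $\E[(f - \E f)^2]$. With all the relevant integrals assumed to exist and independence justifying the factorization of the cross terms, there is no genuinely difficult step here; the substance of the paper lies instead in later choosing the sampling measure $\mu$ and the integrand representation $f$ so that $\var[f(\bb x)]$ is as small as possible.
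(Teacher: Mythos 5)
Your proof is correct. Note that the paper itself gives no proof of this theorem; it is stated as a standard fact with a citation to \cite{hammersley1964monte}, so there is nothing in the paper to compare your argument against. Your argument is the standard one: linearity of expectation for the mean, and for the variance the centering trick $Y_\ell = f(\bb x^{(\ell)}) - \E[f(\bb x)]$ together with independence to kill the cross terms in $\E\bigl[\bigl|\tfrac{1}{N}\sum_\ell Y_\ell\bigr|^2\bigr]$. Your remark about the complex-valued bookkeeping ($|z|^2 = z\bar z$ rather than $z^2$) is exactly the right point to flag, and it matches the paper's own aside that $\E[\cdot]$ may be complex while $\var[\cdot]$ is real and non-negative.
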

This theorem may be applied to approximate the integral $\int_\Gamma f(\bb x) \mu$ as long as we have a way of generating samples from the distribution $\mu$.  The condition that the integral for the variance shall exist effectively restricts the set of functions $f$, which can be integrated numerically, to the set of \emph{square integrable functions} under the measure $\mu$ over the domain $\Gamma$. Note that in our convention of the statement of Theorem~\ref{thm:montecarlo} the expectation value $\E[ \cdot ]$ may be complex, but the variance $\var[ \cdot]$ is always real and non-negative.

The integral in eq.~\eqref{eq:integral} is not directly amendable to Monte Carlo quadrature as the differential form $\Omega$ over the domain $\P_{>0}^{n-1}$ as defined for eq.~\eqref{eq:integral} is not a probability distribution: it is not normalizable. Even if we use the affine representation in eq.~\eqref{eq:integral_euler_mellin} and map $\R_{>0}^n$ onto a bounded domain via a variable transformation (for instance by $x \mapsto x/(1+x)$ which maps $\R_{>0}\rightarrow (0,1)$ smoothly and injectively), the resulting integral will, in the general case, not be square integrable. A pragmatic solution to this problem is \emph{sector decomposition} \cite{Binoth:2000ps}, where the integral $I$ is expressed as a sum of integrals, which are each individually directly amendable to Monte Carlo integration. 

\subsection{Sector decomposition}
\label{sec:secdec}

In the context of quantum field theory, sector decomposition goes back to Hepp and Speer who used the technique to prove the finiteness of renormalized Feynman integrals \cite{Hepp:1966eg,Speer:1975dc}. Even though Hepp/Speer sector decomposition can be employed to deal with the singularities of Euclidean Feynman integrals, it turned out to be insufficient to handle more general singularities which appear in Minkowski space Feynman integrals (see \cite{Heinrich:2008si} and \cite[Chapter~4]{Smirnov:2012gma} for reviews on the topic). 
A more general approach was pioneered by Binoth and Heinrich \cite{Binoth:2000ps}, who introduced a recursive algorithm that decomposes general integrals of the form in eq.~\eqref{eq:integral} (or equivalently as in eq.~\eqref{eq:integral_euler_mellin}) into a set of \emph{sector integrals}: 
\begin{align} \label{eq:secdec_integral} I &= \sum_{s \in S} I_s & I_s &= C_s \int_{[0,1]^{n-1}} \bb x^{\bb m^{(s)}} \frac{\prod_i {\widetilde p}^{\nu_i}_{s,i}(\bb x)}{\prod_j {\widetilde q}^{\rho_j}_{s,j}(\bb x)} \prod_{k=1}^{n-1} \frac{\dd x_k}{x_k}, \end{align}
such that the auxiliary polynomials ${\widetilde p}_{s,i}(\bb x)$ and ${\widetilde q}_{s,j}(\bb x)$ do not vanish inside the integration domain $[0,1]^{n-1}$ (at least as long as all the coefficients of the initial denominator polynomials $b_j$ are positive, which implies that these polynomials are completely non-vanishing) and $C_s$ is a prefactor for each sector $s\in S$. If all components of the vector $\bb m^{(s)}$ are positive, i.e.\ $m^{(s)}_k > 0$ for all $k\in\{1,\ldots, n-1\}$, a simple reparametrization $\bb \xi = \bb x^{\bb m^{(s)}}$ produces an integral with a bounded integrand 
\begin{align} \label{eq:secdec_integral_trafo} I_s = C_s \int_{[0,1]^{n-1}} \frac{\prod_i {\widetilde p}^{\nu_i}_{s,i}(\bb x(\bb \xi))}{\prod_j {\widetilde q}^{\rho_j}_{s,j}(\bb x(\bb \xi))} \prod_{k=1}^{n-1} \dd \xi_k, \end{align}
to which basic Monte Carlo as described in Theorem~\ref{thm:montecarlo} can immediately be applied using the uniform measure $\mu = \prod_{k=1}^{n-1} \dd \xi_k$ on the unit $(n-1)$-cube $\Gamma = [0,1]^{n-1}$. 

Although a problem of the method which impeded the recursion from terminating was solved by Bogner and Weinzierl \cite{Bogner:2007cr}, this class of algorithms suffers from a proliferation in the numbers of sectors with rising complexity of the underlying polynomials. Moreover, a new set of polynomials ${\widetilde p}_{s,i}, {\widetilde q}_{s,i}$ is associated to each sector. This means that we are not necessarily dealing with a partition of the integration domain, but a non-trivial distribution of the volume of the integral into each of the sectors $s\in S$. 

A both conceptual and practical innovation was achieved by Kaneko and Ueda, who reinterpreted this decomposition as a geometric problem \cite{Kaneko:2009qx}. This geometric viewpoint results in more economical decompositions, in terms of the total number of sectors (see \cite{Smirnov:2008aw} for a comparison of different methods), while also arguably being conceptually more elegant. %
\subsection{Analytic continuation}
By Theorem~\ref{thm:convergence}, the convergence of the integral in eq.~\eqref{eq:integral} depends on the values of the parameters $\{\nu_i\}$ and $\{\rho_j\}$.  Provided that there is an extended domain of such parameters where the integral is convergent, we can interpret it as a function of these parameters and perform an analytic continuation. It turns out that this analytic continuation is a \emph{meromorphic function} in these parameters \cite{nilsson2013mellin,berkesch2014euler}. 

A sector decomposition as in eq.~\eqref{eq:secdec_integral} provides a pragmatic way to perform this analytic continuation. 
A violation of the condition $\mathcal A \subset \relint \mathcal B$ in Theorem~\ref{thm:convergence} corresponds to a component of $\bb m^{(s)}$ in the integral in eq.~\eqref{eq:secdec_integral} being non-positive, i.e.\ $m^{(s)}_k \leq 0$ for some sector $s\in S$. Hence, if we assume that the polynomials ${\widetilde p}_{s,i}, {\widetilde q}_{s,i}$ are non-vanishing on the integration domain, then the associated integral $I_s$ is divergent. Performing an analytic continuation of this integral, interpreted as a function of the coefficients of $\bb m^{(s)}$ is a simple task. A standard approach is to integrate over a Pochhammer contour instead of the unit interval in eq.~\eqref{eq:secdec_integral_trafo}. This avoids the singularity at the integration boundary and agrees with the integral over the unit interval if convergence is ensured (see for instance \cite[Section~12-43]{whittaker1963course}).

For explicit computations it is sufficient to compute a Taylor expansion of the rational function in eq.~\eqref{eq:secdec_integral_trafo} up to an appropriate order, integrate the analytically continued expansion terms analytically and the remainder term numerically. See for instance \cite[Part~III]{Binoth:2000ps}, where this process is described in detail.

A more sophisticated strategy to perform this analytic continuation is based on iteratively performing `directed integration by parts' on the integral in eq.~\eqref{eq:integral} and thereby extending the domain of $\{\nu_i\}$, $\{\rho_j\}$ parameters in which the integral converges. The inner workings of this procedure are of geometric nature and make use of the structure of the polytopes $\mathcal A, \mathcal B$. See \cite[Theorem~2]{nilsson2013mellin} and thereafter and also \cite[Theorem~2.4]{berkesch2014euler} for a description of this method. This approach gives, after being applied to a given integral as the one in eq.~\eqref{eq:integral}, a sum of integrals of the same type where each integral has a larger region of convergence than the original one. A similar procedure has been developed independently in \cite{von2015quasi} for the special case of parametric Feynman integrals. 

In this work, we will therefore assume that the integral has been subjected to such a procedure and we can assume that we are within the region of convergence in terms of the $\{\nu_i\}, \{\rho_j\}$ parameters.
\section{The tropical approximation}
\label{sec:trop_approx}

For the considerations in this article the following \emph{tropical approximation} of a polynomial will be central:
\begin{definition} 
\label{def:trop}
For a polynomial $p \in \C[x_1, \ldots, x_n]$ define 
$p^\tr(\bb x) = \max \limits_{\bb{\ell} \in \supp(p)} \bb x^{\bb \ell}$.
\end{definition}
Such an object has been defined by Panzer~\cite{Panzer:2019yxl} for the Kirchhoff polynomial to study the \emph{Hepp-bound}, a graph invariant relevant for Feynman period integral calculations. We adopt Panzer's notation and denote tropically approximated polynomials with a superscript $^\tr$.

To give some additional motivation to consider this `tropical approximation' suppose that a polynomial $p$ has only real and positive coefficients and interpret it as a function $p:\R_{> 0}^n \rightarrow \R_{> 0}$. The \emph{tropical limit} is
\begin{align*} \lim_{\xi \rightarrow \infty} p(x_1^\xi,\ldots, x_n^\xi)^{\frac{1}{\xi}} = \lim_{\xi \rightarrow \infty} \left( \sum_{\bb{\ell} \in \supp(p)} c_{\bb{\ell}} \bb x^{\xi \bb \ell} \right)^{\frac{1}{\xi}} = \max_{\bb{\ell} \in \supp(p)} \bb x^{\bb \ell} = p^\tr(\bb x). \end{align*}
This way, $p^\tr$ can be seen as a \emph{deformed} version of $p$: the function $p(\bb x^\xi)^{\frac{1}{\xi}}$ interpolates between $p$ and $p^\tr$ with $\xi$ between $1$ and $\infty$. A limit as $\xi \rightarrow \infty$ with the associated phenomenon of transforming a very smooth object---in this case a polynomial---into a function with non-differentiable singularities, is something commonly encountered in physics. For instance, the \emph{thermodynamical limit} is of similar nature. These kind of limits give rise to numerous \emph{critical phenomena}. Also the weak string coupling limit $\alpha' \rightarrow 0$ shows this behavior \cite{Arkani-Hamed:2019mrd}.

In our case $p^\tr$ is of `simpler' nature as the original polynomial $p$. Information is lost while going from a polynomial to its tropical approximation, as $p^\tr$ only depends on the support of $p$. In fact, $p^\tr$ is nothing but a realization of a geometric object: the Newton polytope of the polynomial $p$.

To make this explicit, change to logarithmic coordinates $y_k = \log x_k$ in Definition~\ref{def:trop} and use the fact that the Newton polytope is the convex hull of the support of the underlying polynomial. We find that
\begin{align} \label{eq:logPtr} \log p^\tr(\bb{x}) = \max \limits_{\bb{\ell} \in \supp(p)} \log \bb x^{\bb \ell} = \max \limits_{\bb{\ell} \in \supp(p)} \sum_{k=1}^n y_k \ell_k = \max \limits_{\bb{\ell} \in \supp(p)} \inner{\bb{y}}{\bb{\ell}} = \max_{\bb{v} \in \NP_p} \inner{\bb{y}}{\bb{v}}, \end{align}
which is a piece-wise linear function $\R^n \rightarrow \R, \bb y \mapsto \max_{\bb{v} \in \NP_p} \inner{\bb{y}}{\bb{v}}$. 
This function is the \emph{support function} of the Newton polytope $\NP_p$ \cite{henk200416}.
Often it is useful to write $p^\tr$ in exponential form using the support function:
\begin{proposition}
\label{prop:trsupport}
$p^\tr(e^\bb{y}) = e^{\max_{\bb v \in \NP_p}\inner{\bb y}{\bb v}} $,
\end{proposition}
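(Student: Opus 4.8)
The plan is to simply substitute the exponential coordinates into Definition~\ref{def:trop} and track the monotonicity of the exponential. First I would write $x_k = e^{y_k}$, so that for a multi-index $\bb\ell \in \supp(p)$ one has $\bb x^{\bb\ell} = \prod_{k=1}^n e^{y_k \ell_k} = e^{\sum_{k=1}^n y_k \ell_k} = e^{\inner{\bb y}{\bb\ell}}$. Plugging this into $p^\tr(\bb x) = \max_{\bb\ell \in \supp(p)} \bb x^{\bb\ell}$ gives $p^\tr(e^{\bb y}) = \max_{\bb\ell \in \supp(p)} e^{\inner{\bb y}{\bb\ell}}$.

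Next I would use that $t \mapsto e^t$ is strictly increasing on $\R$, so it commutes with taking a maximum over a finite set: $\max_{\bb\ell \in \supp(p)} e^{\inner{\bb y}{\bb\ell}} = e^{\max_{\bb\ell \in \supp(p)} \inner{\bb y}{\bb\ell}}$. This is legitimate because $\supp(p)$ is a finite set, so the maximum is attained.

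Finally I would invoke the standard fact — already recorded in eq.~\eqref{eq:logPtr} — that a linear functional over a polytope attains its maximum at a vertex, and every vertex of $\NP_p$ lies in $\supp(p)$, hence $\max_{\bb\ell \in \supp(p)} \inner{\bb y}{\bb\ell} = \max_{\bb v \in \NP_p} \inner{\bb y}{\bb v}$. Combining the three displayed equalities yields $p^\tr(e^{\bb y}) = e^{\max_{\bb v \in \NP_p} \inner{\bb y}{\bb v}}$, as claimed.

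There is no real obstacle here: the statement is just eq.~\eqref{eq:logPtr} re-expressed in multiplicative rather than additive form, and the only point requiring a word of justification is the interchange of $\exp$ with $\max$, which is immediate from monotonicity and finiteness of the support. I would therefore present the argument as a short two-line computation rather than a structured proof.
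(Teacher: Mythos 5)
Your argument is exactly the one the paper uses: you rederive eq.~\eqref{eq:logPtr} by passing to logarithmic coordinates, commute $\exp$ with the finite maximum, and identify the maximum over $\supp(p)$ with the support function of $\NP_p$. This matches the paper's one-line justification, so the proposal is correct and takes essentially the same route.
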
%
\noindent
where we used the notation $e^\bb{y}=(e^{y_1}, \ldots, e^{y_n})$  to denote the component-wise exponential.

This support function is also a \emph{tropicalization} of the polynomial $p$, which uses the trivial valuation on $\C$ to tropicalize. Because much of the algebro-geometrical information of the polynomial $p$ carries over to its tropicalization, tropical geometry developed into a fruitful branch of algebraic geometry in the recent years \cite{maclagan2015introduction}.  
It is tempting to call $p^\tr$ the tropicalization of $p$. Unfortunately, this name is reserved for $\bb y \mapsto \log p^\tr(e^{\bb y})$ and we will use the name \emph{tropical approximation} instead. The motivation for this is that besides the fact that $p^\tr$ is a simplification of $p$, it can also be used to \emph{approximate} $p$.

\subsection{The approximation property}
The main theorem of this article is the following \emph{approximation property} of $p^\tr$ with respect to the polynomial $p$: 
\begin{subtheorem}{theorem}
\label{thm:approx}
\begin{theorem}
\label{thm:approx_upper}
For every polynomial $p \in \C[x_1, \ldots, x_n]$ there is a constant $C>0$ such that
\begin{align*} | p(\bb{x}) | &\leq C p^\tr (\bb{x}) \text{ for all } \bb{x} \in \R^{n}_{> 0}. \end{align*}
\end{theorem}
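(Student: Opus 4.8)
The plan is to reduce the claim to a one-line application of the triangle inequality. First I would write $p$ in terms of its support, $p(\bb x) = \sum_{\bb \ell \in \supp(p)} c_{\bb \ell} \bb x^{\bb \ell}$, where we may assume $p \neq 0$ so that $\supp(p)$ is finite and nonempty and the maximum defining $p^\tr$ is well-defined. The key elementary observation is that on the positive orthant $\R^n_{>0}$ every monomial $\bb x^{\bb \ell}$ is a strictly positive real number, so for each $\bb \ell \in \supp(p)$ we have $\bb x^{\bb \ell} \leq \max_{\bb m \in \supp(p)} \bb x^{\bb m} = p^\tr(\bb x)$.

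Combining this with the triangle inequality gives, for all $\bb x \in \R^n_{>0}$,
\[
|p(\bb x)| \;\leq\; \sum_{\bb \ell \in \supp(p)} |c_{\bb \ell}| \, \bb x^{\bb \ell} \;\leq\; \Bigl( \sum_{\bb \ell \in \supp(p)} |c_{\bb \ell}| \Bigr) \, p^\tr(\bb x),
\]
so that $C := \sum_{\bb \ell \in \supp(p)} |c_{\bb \ell}|$ does the job. This $C$ is finite because $\supp(p)$ is finite, and strictly positive because every coefficient indexed by $\supp(p)$ is nonzero by definition of the support; crucially it depends only on $p$ and not on $\bb x$, as required.

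There is essentially no obstacle here: the only points requiring care are that $\supp(p)$ be nonempty (handled by assuming $p\neq 0$) and that one genuinely restricts to $\R^n_{>0}$, where the monomials are positive, so that bounding each $\bb x^{\bb \ell}$ by the pointwise maximum over $\supp(p)$ is legitimate; on all of $\R^n$ the statement would fail because monomials can be negative and $p^\tr$ need not dominate cancellations. The substantive half of the approximation — a reverse bound of the form $p^\tr(\bb x) \leq C' \, |p(\bb x)|$, which is where the hypothesis that $p$ is completely non-vanishing must enter — is presumably the content of the companion statement following this one; the present statement is the easy direction and needs none of that machinery.
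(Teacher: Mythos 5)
Your proof is correct and takes exactly the same route as the paper, which simply records the constant $C = \sum_{\bb \ell \in \supp(p)} |c_{\bb \ell}|$; you have just spelled out the triangle-inequality and pointwise-maximum steps that the paper leaves implicit.
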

\begin{proof}
$C = \sum_{\bb \ell \in \supp(p)} |c_{\bb \ell}|$.
\end{proof}

\begin{theorem}
\label{thm:approx_lower}
If $p \in \C[x_1, \ldots, x_n]$ is completely non-vanishing on $\R^{n}_{>0}$, then there is a constant $C>0$ such that
\begin{align*} C p^\tr(\bb{x}) \leq | p(\bb{x}) | \text{ for all } \bb{x} \in \R^{n}_{> 0}. \end{align*}
\end{theorem}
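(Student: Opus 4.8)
The plan is to pass to logarithmic coordinates $\bb x = e^{\bb y}$ and prove, using Proposition~\ref{prop:trsupport}, the equivalent statement that the continuous function $g(\bb y) := |p(e^{\bb y})|/p^\tr(e^{\bb y}) = |p(e^{\bb y})|\,e^{-h(\bb y)}$, with $h(\bb y) = \max_{\bb v\in\NP_p}\inner{\bb y}{\bb v}$, is bounded below by a positive constant on $\R^n$; that constant will be the desired $C$. First I would record two reductions. The function $g$ is everywhere positive, since taking $F = \NP_p$ in the definition of \emph{completely non-vanishing} shows that $p$ itself does not vanish on $\R^n_{>0}$. And, setting $W := \mathrm{span}(\supp(p) - \supp(p))$, a direct check shows that $p(e^{\bb y})$ and $p^\tr(e^{\bb y})$ acquire the \emph{same} exponential factor under translations of $\bb y$ by vectors in $W^\perp$, so $g$ is invariant along $W^\perp$ and it suffices to bound it below on $W$. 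I would then induct on $d := \dim\NP_p$: the base case $d = 0$ is a monomial $p = c_{\bb\ell}\,\bb x^{\bb\ell}$, for which $g \equiv |c_{\bb\ell}| > 0$.

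For the inductive step, assume $d \ge 1$ and the statement for all completely non-vanishing polynomials whose Newton polytope has dimension below $d$, and argue by contradiction: suppose $\inf_{\bb y\in W} g = 0$ and pick $\bb y^{(m)} \in W$ with $g(\bb y^{(m)}) \to 0$. Since $g$ is positive and continuous it is bounded below on every ball, so $\|\bb y^{(m)}\| \to \infty$, and after passing to a subsequence the unit vectors $\bb y^{(m)}/\|\bb y^{(m)}\|$ converge to some $\bb u \in W$ with $\|\bb u\| = 1$. Because $\bb u \in W \setminus \{0\}$ and $\NP_p$ affinely spans a translate of $W$, the linear functional $\bb v \mapsto \inner{\bb u}{\bb v}$ is non-constant on $\NP_p$, so the face $F$ on which it attains its maximum over $\NP_p$ is a \emph{proper} face; hence $\dim\NP_{p_F} = \dim F < d$. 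Moreover $p_F$ is again completely non-vanishing, since every face of $F$ is a face of $\NP_p$ and $(p_F)_{F'} = p_{F'}$ for $F' \subseteq F$. The inductive hypothesis thus yields a constant $C_F > 0$ with $|p_F(e^{\bb z})| \ge C_F\, e^{h_F(\bb z)}$ for all $\bb z$, where $h_F$ is the support function of $F = \NP_{p_F}$.

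It remains to show that $g(\bb y^{(m)})$ is asymptotically controlled by this lower bound for $p_F$. I would decompose $\bb y^{(m)} = \alpha_m\bb u + \bb z^{(m)}$ with $\bb z^{(m)} \perp \bb u$; convergence of the directions to $\bb u$ gives $\alpha_m \to +\infty$ and, crucially, $\|\bb z^{(m)}\| = o(\alpha_m)$. Splitting $\supp(p)$ into its part on $F$ and the complement — on which $\inner{\bb u}{\bb\ell}$ is smaller by at least some fixed $\delta > 0$ — I would obtain, for all large $m$, the factorisations $p(e^{\bb y^{(m)}}) = e^{\alpha_m h(\bb u)}(p_F(e^{\bb z^{(m)}}) + E_m)$ and $h(\bb y^{(m)}) = \alpha_m h(\bb u) + h_F(\bb z^{(m)})$, where $|E_m| \le \mathrm{const}\cdot e^{-\alpha_m\delta + M\|\bb z^{(m)}\|}$ with $M = \max_{\bb\ell\in\supp(p)}\|\bb\ell\|$, and $h_F(\bb z^{(m)}) \ge -M\|\bb z^{(m)}\|$; the second factorisation uses $\|\bb z^{(m)}\| = o(\alpha_m)$ to guarantee that the face of $\NP_p$ realising the maximum in $h(\bb y^{(m)})$ lies in $F$. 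Dividing, $g(\bb y^{(m)}) \ge |p_F(e^{\bb z^{(m)}})|\,e^{-h_F(\bb z^{(m)})} - |E_m|\,e^{M\|\bb z^{(m)}\|} \ge C_F - |E_m|\,e^{M\|\bb z^{(m)}\|}$, and the last term is at most $\mathrm{const}\cdot e^{-\alpha_m\delta + 2M\|\bb z^{(m)}\|} \to 0$. Hence $\liminf_m g(\bb y^{(m)}) \ge C_F > 0$, contradicting $g(\bb y^{(m)}) \to 0$, so $\inf_W g > 0$ and the induction closes.

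The hard part will be the asymptotic bookkeeping in the last step: the minimising sequence may run off to infinity along the direction $\bb u$ while simultaneously escaping to infinity in the directions orthogonal to $\bb u$, and the induction is precisely what absorbs this, the orthogonal motion being governed one dimension down by $p_F$. Everything hinges on the estimate $\|\bb z^{(m)}\| = o(\alpha_m)$, which is what makes both the error $E_m$ and the shift in the face realising $h$ negligible on the scale $e^{\alpha_m h(\bb u)}$. I expect the care needed to make the two factorisations and this estimate uniform in large $m$ — together with checking that complete non-vanishing descends to all truncations $p_F$ — to be the most technical ingredient, while the overall structure (logarithmic coordinates, invariance along $W^\perp$, induction on $\dim\NP_p$, compactness of the sphere of directions) is routine.
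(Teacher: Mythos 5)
Your proof is correct, but it is organized quite differently from the one in the paper, so the comparison is worth drawing out. The paper proves Theorem~\ref{thm:approx_lower} via Proposition~\ref{prop:exp_bound_CF}, which is a fully quantitative statement and is established by induction on the \emph{codimension} of the face $F$, covering $\R^n$ by modified normal cones $\widetilde{\mathcal C}_F = \mathcal C_F \setminus \bigcup_{\dim F' > \dim F}(\mathcal C_{F'}+B_{R'})$ and controlling the off-face contribution uniformly on each region via the geometric gap estimate of Lemma~\ref{lmm:cone_limit}. The constant $C$ is thus assembled explicitly, one codimension at a time, and the paper even remarks on this constructiveness afterwards. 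Your proof replaces this explicit covering by a compactness argument: you induct on $\dim\NP_p$, and in the inductive step you assume a minimizing sequence for $g = |p(e^{\bb y})|/p^\tr(e^{\bb y})$ exists and extract a limit direction $\bb u$ on the unit sphere of $W$, letting $F$ be the face maximized by $\bb u$ and pushing the analysis down to $p_F$ (whose Newton polytope $F$ has strictly smaller dimension, precisely because $\bb u\in W\setminus\{0\}$ makes $\inner{\bb u}{\cdot}$ non-constant on $\NP_p$). The decomposition $\bb y^{(m)} = \alpha_m\bb u + \bb z^{(m)}$ with $\|\bb z^{(m)}\| = o(\alpha_m)$ plays the role of the paper's split $\bb y = \bb s + \bb t$ with $\bb s\in\widetilde{\mathcal C}_F$ and $\bb t\in B_R$, and your error estimate $|E_m|e^{M\|\bb z^{(m)}\|} \le \mathrm{const}\cdot e^{-\alpha_m\delta + 2M\|\bb z^{(m)}\|}\to 0$ is the qualitative counterpart of the paper's $e^{-R'C'}$ suppression. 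The underlying mathematical content---that near any direction $\bb u$ the polynomial is governed by the truncation $p_F$ to the face $F$ maximized by $\bb u$, that $p_F$ inherits complete non-vanishing because $(p_F)_{F'} = p_{F'}$ for $F'\subseteq F$, and that the off-face terms are exponentially subdominant---is identical. What your route buys is brevity and a smaller toolbox (no need for the somewhat delicate Lemma~\ref{lmm:cone_limit} or the explicit cone-shrinking bookkeeping); what it loses is the explicit, effective bound on $C$ that the paper emphasizes is available from its construction, since your argument is by contradiction and yields no quantitative handle on the constant.
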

\end{subtheorem}

This property trivially extends to homogeneous polynomials which are naturally considered to be functions on projective space:
\begin{corollary}
\label{crll:homo_approx}
For every homogeneous polynomial $p \in \C[x_1, \ldots, x_n]$ there is a constant $C>0$ such that
\begin{align*} | p(\bb{x}) | &\leq C p^\tr (\bb{x}) \text{ for all } \bb{x} \in \P^{n-1}_{> 0}. \end{align*}
and if $p$ is additionally completely non-vanishing on $\P^{n-1}_{>0}$, then there is a constant $C>0$ such that
\begin{align*} C p^\tr(\bb{x}) \leq | p(\bb{x}) | \text{ for all } \bb{x} \in \P^{n-1}_{> 0}. \end{align*}
\end{corollary}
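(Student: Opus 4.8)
The plan is to deduce both inequalities on $\P^{n-1}_{>0}$ directly from their already-established counterparts on $\R^n_{>0}$ in Theorem~\ref{thm:approx}, exploiting the homogeneity of $p$. First I would record the elementary scaling behaviour: if $p$ is homogeneous of degree $d$, then every $\bb\ell\in\supp(p)$ lies in the hyperplane $\{\bb v\in\R^n:\inner{\one}{\bb v}=d\}$, so for any $\lambda>0$ one has $p^\tr(\lambda\bb x)=\max_{\bb\ell\in\supp(p)}(\lambda\bb x)^{\bb\ell}=\lambda^{d}\,p^\tr(\bb x)$. Since also $|p(\lambda\bb x)|=\lambda^{d}|p(\bb x)|$ and $p^\tr(\bb x)>0$ on $\R^n_{>0}$, the quotient $|p(\bb x)|/p^\tr(\bb x)$ is invariant under $\bb x\mapsto\lambda\bb x$ with $\lambda>0$; hence each of the two asserted inequalities holds for all $\bb x\in\P^{n-1}_{>0}$ precisely when it holds for one (equivalently, every) positive representative of each projective point.

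For the upper bound this is immediate: Theorem~\ref{thm:approx_upper} supplies $|p(\bb x)|\le C\,p^\tr(\bb x)$ for all $\bb x\in\R^n_{>0}$ with $C=\sum_{\bb\ell\in\supp(p)}|c_{\bb\ell}|$, and by the scaling remark this inequality is unchanged if $\bb x$ is replaced by a positive multiple, so it holds verbatim on $\P^{n-1}_{>0}$.

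For the lower bound I would first note that, for homogeneous $p$, the condition ``completely non-vanishing on $\P^{n-1}_{>0}$'' coincides with ``completely non-vanishing on $\R^n_{>0}$''. Indeed, every face $F$ of $\NP_p$ is contained in $\{\bb v:\inner{\one}{\bb v}=d\}$, so the truncation $p_F$ of Definition~\ref{def:truncated_polynomial} is again homogeneous of degree $d$; and a homogeneous polynomial vanishes at $\bb x\in\R^n_{>0}$ iff it vanishes at every $\lambda\bb x$, $\lambda>0$, i.e.\ iff it vanishes at the corresponding point of $\P^{n-1}_{>0}$. Thus the hypothesis of the corollary lets us apply Theorem~\ref{thm:approx_lower} on $\R^n_{>0}$, yielding $C>0$ with $C\,p^\tr(\bb x)\le|p(\bb x)|$ for all $\bb x\in\R^n_{>0}$, and scale-invariance upgrades this to all of $\P^{n-1}_{>0}$.

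I do not anticipate a genuine obstacle: the entire argument is the single observation that $|p|/p^\tr$ is homogeneous of degree $0$. The only step that merits a line of care is checking that the ``completely non-vanishing'' hypothesis transfers between $\R^n_{>0}$ and $\P^{n-1}_{>0}$, which rests on the truncations $p_F$ inheriting homogeneity from $p$.
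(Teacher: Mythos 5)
Your argument is correct and is essentially the paper's own proof, only written out in more detail: you verify explicitly that $|p|/p^\tr$ is homogeneous of degree $0$ and that the complete non-vanishing hypothesis transfers between $\R^n_{>0}$ and $\P^{n-1}_{>0}$, both of which the paper states tersely as ``the inequality ... is homogeneous, therefore it trivially extends.''
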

Strictly speaking $p(\bb{x})$ and $p^\tr(\bb{x})$ are not well-defined objects for $\bb{x} \in \P^{n-1}_{> 0}$. The quotient $p(\bb{x})/p^\tr(\bb x)$, on the other hand, makes sense for all projective $\bb{x} \in \P^{n-1}_{> 0}$. Inequalities as the one above, which can be written as quotients of homogeneous objects, shall be interpreted accordingly, in this obvious sense as statements on these quotients. 
\begin{proof}
If $p$ is homogeneous and completely non-vanishing on $\P^{n-1}_{>0}$, it is also completely non-vanishing on $\R_{>0}^{n}$. The inequality in Theorems~\ref{thm:approx_upper} and \ref{thm:approx_lower} is homogeneous, therefore it trivially extends to $\P^{n-1}_{>0}$.
\end{proof}

By Theorem~\ref{thm:approx}, $p^\tr$ can indeed be used to `approximate' $p$, i.e.\ it provides a \emph{lower and an upper bound} of $p$ with appropriate prefactors, as long as $p$ is completely non-vanishing. 

Theorem~\ref{thm:approx_lower} is substantially harder to prove than Theorem~\ref{thm:approx_upper}. This proof of Theorem~\ref{thm:approx_lower} will be given in the next Section~\ref{sec:cones_and_fans}. Only a special case is also trivial: If the polynomial $p$ has only positive coefficients (which implies that $p$ is completely non-vanishing on $\R_{>0}^n$), there is a simple lower bound for $|p(\bb x)|$: take for instance $C = \min_{\bb \ell \in \supp(p)} c_{\bb \ell}$. 
For such a lower bound to exist it is not necessary for the polynomial to have only positive coefficients; it is sufficient for the polynomial to be completely non-vanishing. In fact, the existence of such a lower bound is also necessary for a polynomial to be completely non-vanishing, which can be proven using a similar argument as in the proof of Theorem~\ref{thm:approx_lower} below. 
\subsection{Cones and normal fans}
\label{sec:cones_and_fans}

A \emph{(polyhedral) cone} is a subset of $\R^n$ that is closed under linear combinations with only non-negative scalars, e.g.\ $\mathcal C = \{ \sum_{k} \lambda_k \bb u^{(k)} : \lambda_k \geq 0 \}$ for some set of given vectors $\bb u^{(1)}, \bb u^{(2)}, \ldots \in \R^n$.
A \emph{fan} in $\R^n$ is a family $\mathcal{F} = \{ \mathcal C_1, \mathcal C_2, \ldots \}$ of cones with the property that every face of a cone in $\mathcal F$ is also in $\mathcal F$ and that the intersection of two cones $\mathcal C_1, \mathcal C_2 \in \mathcal F$ is a face of both $\mathcal C_1$ and $\mathcal C_2$. The \emph{normal cone} associated to a face $F$ of the polytope $\mathcal P$ is the set of all linear functionals that are maximal on the respective face, 
\begin{gather} \label{eq:normal_cone} \mathcal C_F = \left \{ \bb y \in \R^n : \inner{\bb y}{\bb v} = \max_{\bb w \in \mathcal{P}} \inner{\bb y}{\bb w} \text{ for all } \bb v \in F \right\}         . \end{gather}
The set of normal cones of a polytope is its \emph{normal fan}: $\mathcal{F}_N = \{ \mathcal C_F : F \in \faces(\mathcal P), F \neq \emptyset \}$. The normal fan is always complete, that means $\R^n = \biguplus_{\mathcal C \in \mathcal F_N} \relint \mathcal C$, where $\uplus$ denotes the disjoint union. If a face $F$ has dimension $d$, then the associated normal cone $\mathcal C_F$ has dimension $n-d$, where $n$ is the dimension of the ambient space. The \emph{maximal cones} in the fan are the cones of maximal dimension. Figure~\ref{fig:polytope} and \ref{fig:normal_fan} depict a polytope and its normal fan. Note that the maximal cones can be associated to the vertices of the polytope.

For the proof of Theorem~\ref{thm:approx_lower}, it is convenient to have three further properties of the tropical approximation $p^\tr$ and the associated polytopes at hand:

\begin{lemma}
\label{lmm:truncated_homogenous}
For each face $F$ of the Newton polytope $\NP_p$ of a polynomial $p$, the truncated polynomial $p_F$ fulfills
\begin{align*} p_F(e^{\bb s+\bb t}) = p^\tr(e^{\bb s}) p_F(e^{\bb t}) \text{ for all } \bb s \in \mathcal C_F \text{ and } \bb t \in \R^n. \end{align*}
\end{lemma}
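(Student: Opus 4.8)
The plan is to unwind both sides using the exponential form of the tropical approximation (Proposition~\ref{prop:trsupport}) and the definition of the truncated polynomial (Definition~\ref{def:truncated_polynomial}), and to reduce everything to the defining property of the normal cone $\mathcal C_F$ in eq.~\eqref{eq:normal_cone}. Write $p_F(\bb x) = \sum_{\bb \ell \in F \cap \supp(p)} c_{\bb \ell} \bb x^{\bb \ell}$. For $\bb x = e^{\bb s + \bb t}$ with $\bb s \in \mathcal C_F$ and $\bb t \in \R^n$, the multi-index monomial factorizes as $e^{(\bb s + \bb t)\cdot\bb \ell} = e^{\bb s \cdot \bb \ell}\, e^{\bb t \cdot \bb \ell}$, so
\begin{align*}
p_F(e^{\bb s+\bb t}) = \sum_{\bb \ell \in F \cap \supp(p)} c_{\bb \ell}\, e^{\bb s \cdot \bb \ell}\, e^{\bb t \cdot \bb \ell}.
\end{align*}

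The key observation is that for every $\bb \ell \in F \cap \supp(p)$ — in particular $\bb \ell \in F$ — the defining property of the normal cone gives $\bb s \cdot \bb \ell = \max_{\bb w \in \NP_p} \bb s \cdot \bb w$ for all $\bb s \in \mathcal C_F$; indeed, this is exactly the condition \eqref{eq:normal_cone} defining $\mathcal C_F$ (recall $F \subset \NP_p$ and the vertices/support points of $p$ lying in $F$ are points of $\NP_p$). Hence $e^{\bb s \cdot \bb \ell}$ is the \emph{same} constant for every $\bb \ell$ appearing in the sum, namely $e^{\max_{\bb w \in \NP_p} \bb s \cdot \bb w} = p^\tr(e^{\bb s})$ by Proposition~\ref{prop:trsupport}. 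Pulling this common factor out of the sum yields
\begin{align*}
p_F(e^{\bb s+\bb t}) = p^\tr(e^{\bb s}) \sum_{\bb \ell \in F \cap \supp(p)} c_{\bb \ell}\, e^{\bb t \cdot \bb \ell} = p^\tr(e^{\bb s})\, p_F(e^{\bb t}),
\end{align*}
which is the claimed identity.

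There is essentially no obstacle here: the lemma is a bookkeeping consequence of the fact that a linear functional in the normal cone $\mathcal C_F$ is constant (and maximal) on $F$, combined with the monomial factorization of the exponential. The only point requiring a modicum of care is checking that $\max_{\bb w \in \NP_p} \bb s \cdot \bb w$ — which a priori is the support function of the whole Newton polytope — agrees with $\bb s \cdot \bb \ell$ for the support points $\bb \ell \in F$; this is immediate from \eqref{eq:normal_cone}, since $\mathcal C_F$ is defined precisely so that its elements attain their maximum over $\NP_p$ on all of $F$. Note also that the identity holds for \emph{all} $\bb t \in \R^n$ because the factorization $e^{(\bb s+\bb t)\cdot \bb \ell} = e^{\bb s\cdot\bb \ell}e^{\bb t\cdot\bb \ell}$ is unconditional, and $\bb s$ is only constrained through its pairing with points of $F$.
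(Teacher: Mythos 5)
Your proof is correct and follows exactly the same route the paper sketches in its one-line proof, namely unwinding Definition~\ref{def:truncated_polynomial}, the normal cone condition in eq.~\eqref{eq:normal_cone}, and Proposition~\ref{prop:trsupport}; you have simply written out the bookkeeping in full.
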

\begin{proof}
Use Definition~\ref{def:truncated_polynomial}, eq.~\eqref{eq:normal_cone} and Proposition~\ref{prop:trsupport}.
\end{proof}

\begin{lemma}
\label{lmm:submult}
$p^\tr(e^{\bb s+\bb t}) \leq p^\tr(e^{\bb s}) p^\tr(e^{\bb t}) \text{ for all } \bb s, \bb t \in \R^n$.
\end{lemma}
\begin{proof}
This follows from Proposition~\ref{prop:trsupport} and 
\begin{gather*} \max_{\bb v \in \mathcal P} \inner{(\bb s + \bb t)}{\bb v} \leq \max_{\bb v \in \mathcal P} \inner{\bb s}{\bb v} +\max_{\bb v \in \mathcal P} \inner{\bb t}{\bb v}. \qedhere \end{gather*}
\end{proof}

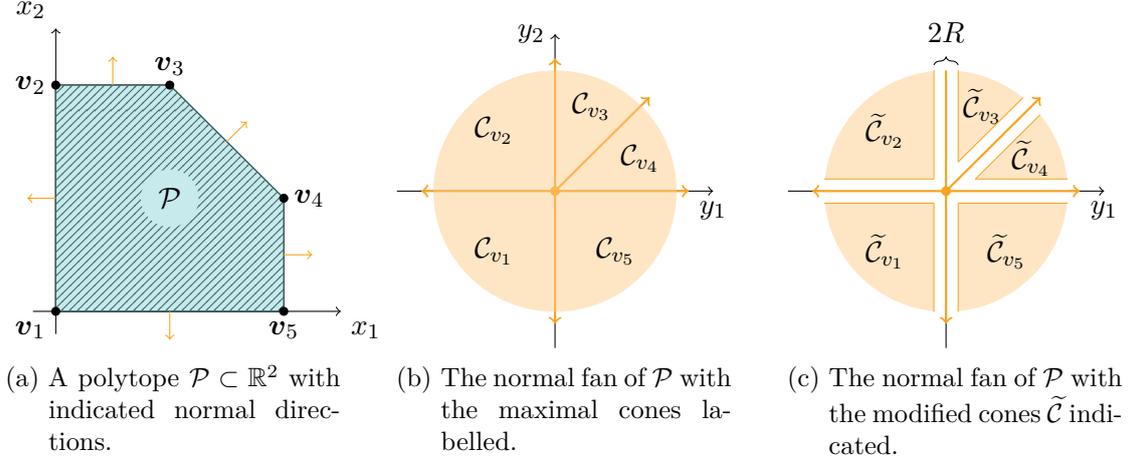
\begin{figure}
    \begin{subfigure}[t]{0.3\textwidth}
            \centering
        \begin{tikzpicture}[scale=1.5] \draw[thin,->] (-.2,0) -- (2.5,0) node[anchor=north west]{$x_1$}; \draw[thin,->] (0,-.2) -- (0,2.5) node[anchor=south east]{$x_2$}; \draw (0,0) -- (0,2) -- (1,2) -- (2,1) -- (2,0) -- (0,0); \draw[pattern=north east lines] (0,0) -- (0,2) -- (1,2) -- (2,1) -- (2,0) -- (0,0); \node(P) at (1,1) [circle,fill=white] {$\phantom{\mathcal P}$}; \filldraw[Aquamarine!50, opacity = 0.5] (0,0) -- (0,2) -- (1,2) -- (2,1) -- (2,0) -- (0,0); \node(P) at (1,1) [circle] {$\mathcal P$}; \draw[YellowOrange,->] (0,1) -- (-.25,1); \draw[YellowOrange,->] (.5,2) -- (.5,2.25); \draw[YellowOrange,->] (1.5,1.5) -- (1.6767766953,1.6767766953); \draw[YellowOrange,->] (2,.5) -- (2.25,.5); \draw[YellowOrange,->] (1,0) -- (1,-.25); \filldraw (0,0) circle(1pt) node[below left] {$\bb v_1$}; \filldraw (0,2) circle(1pt) node[left] {$\bb v_2$}; \filldraw (1,2) circle(1pt) node[above] {$\bb v_3$}; \filldraw (2,1) circle(1pt) node[right] {$\bb v_4$}; \filldraw (2,0) circle(1pt) node[below] {$\bb v_5$}; \end{tikzpicture}
    \caption{A polytope $\mathcal P \subset \R^2$ with indicated normal directions.}
    \label{fig:polytope}
    \end{subfigure}%
    \hfill
    \begin{subfigure}[t]{0.3\textwidth}
            \centering
        \begin{tikzpicture}[scale=1.6] \draw[thin,->] (-1.3,0) -- (1.3,0) node[anchor=north]{$y_1$}; \draw[thin,->] (0,-1.3) -- (0,1.3) node[anchor=east]{$y_2$}; \draw[thick,YellowOrange,->] (0,0) -- (-1.1,0); \draw[thick,YellowOrange,->] (0,0) -- (0,1.1); \draw[thick,YellowOrange,->] (0,0) -- (0.70710678118+0.070710678118,0.70710678118+0.070710678118); \draw[thick,YellowOrange,->] (0,0) -- (1.1,0); \draw[thick,YellowOrange,->] (0,0) -- (0,-1.1); \filldraw[YellowOrange] (0,0) circle(1pt); \filldraw[draw=none,fill = YellowOrange!50, opacity = .5] (0,0) -- ([shift={(180:1)}]0,0) arc (180:270:1) -- (0,0); \node(C1) at (-.5,-.5) [circle] {$\mathcal C_{v_1}$}; \filldraw[draw=none,fill = YellowOrange!50, opacity = .5] (0,0) -- ([shift={(90:1)}]0,0) arc (90:180:1) -- (0,0); \node(C2) at (-.5,.5) [circle] {$\mathcal C_{v_2}$}; \filldraw[draw=none,fill = YellowOrange!50, opacity = .5] (0,0) -- ([shift={(45:1)}]0,0) arc (45:90:1) -- (0,0); \node(C3) at (.3,.7) [circle] {$\mathcal C_{v_3}$}; \filldraw[draw=none,fill = YellowOrange!50, opacity = .5] (0,0) -- ([shift={(0:1)}]0,0) arc (0:45:1) -- (0,0); \node(C4) at (.7,.27) [circle] {$\mathcal C_{v_4}$}; \filldraw[draw=none,fill = YellowOrange!50, opacity = .5] (0,0) -- ([shift={(-90:1)}]0,0) arc (-90:0:1) -- (0,0); \node(C5) at (.5,-.5) [circle] {$\mathcal C_{v_5}$}; \end{tikzpicture}
    \caption{The normal fan of $\mathcal P$ with the maximal cones labelled.}
    \label{fig:normal_fan}
    \end{subfigure}%
    \hfill
    \begin{subfigure}[t]{0.3\textwidth}
            \centering
        \begin{tikzpicture}[scale=1.6] \draw[thin,->] (-1.3,0) -- (1.3,0) node[anchor=north]{$y_1$}; \draw[thin,->] (0,-1.3) -- (0,1); \filldraw[draw=none,fill = YellowOrange!50, opacity = .5] (0,0) -- ([shift={(180:1)}]0,0) arc (180:270:1) -- (0,0); \node(C1) at (-.5,-.5) [circle] {$\widetilde{\mathcal C}_{v_1}$}; \filldraw[draw=none,fill = YellowOrange!50, opacity = .5] (0,0) -- ([shift={(90:1)}]0,0) arc (90:180:1) -- (0,0); \node(C2) at (-.5,.5) [circle] {$\widetilde{\mathcal C}_{v_2}$}; \filldraw[draw=none,fill = YellowOrange!50, opacity = .5] (0,0) -- ([shift={(45:1)}]0,0) arc (45:90:1) -- (0,0); \node(C3) at (.3,.7) [circle] {$\widetilde{\mathcal C}_{v_3}$}; \filldraw[draw=none,fill = YellowOrange!50, opacity = .5] (0,0) -- ([shift={(0:1)}]0,0) arc (0:45:1) -- (0,0); \node(C4) at (.7,.27) [circle] {$\widetilde{\mathcal C}_{v_4}$}; \filldraw[draw=none,fill = YellowOrange!50, opacity = .5] (0,0) -- ([shift={(-90:1)}]0,0) arc (-90:0:1) -- (0,0); \node(C5) at (.5,-.5) [circle] {$\widetilde{\mathcal C}_{v_5}$}; \coordinate (v1) at (-.1,-.1); \coordinate (v1l) at (-1,-.1); \coordinate (v1u) at (-.1,-1); \coordinate (v2) at (-.1,.1); \coordinate (v2l) at (-1,.1); \coordinate (v2o) at (-.1,1); \coordinate (v3) at (.1,0.241421356237); \coordinate (v3o) at (.1,1); \coordinate (v3ro) at (0.63285178118,0.77427313742); \coordinate (v4) at (0.241421356237,.1); \coordinate (v4r) at (1,.1); \coordinate (v4ro) at (0.77427313742,0.63285178118); \coordinate (v5) at (.1,-.1); \coordinate (v5r) at (1,-.1); \coordinate (v5u) at (.1,-1); \filldraw[white] (v1) -- (v1l) -- (v2l) -- (v2) -- (v2o) -- (v3o) -- (v3) -- (v3ro) -- (v4ro) -- (v4) -- (v4r) -- (v5r) -- (v5) -- (v5u) -- (v1u); \draw[YellowOrange] (v1) -- (v1l); \draw[YellowOrange] (v1) -- (v1u); \draw[YellowOrange] (v2) -- (v2l); \draw[YellowOrange] (v2) -- (v2o); \draw[YellowOrange] (v3) -- (v3o); \draw[YellowOrange] (v3) -- (v3ro); \draw[YellowOrange] (v4) -- (v4r); \draw[YellowOrange] (v4) -- (v4ro); \draw[YellowOrange] (v5) -- (v5r); \draw[YellowOrange] (v5) -- (v5u); \draw [decorate,decoration={brace,amplitude=3pt,mirror,raise=2pt},yshift=0pt] (v3o) -- (v2o) node [black,midway,yshift=0.5cm] {$2 R$}; \draw[thick,YellowOrange,->] (0,0) -- (-1.1,0); \draw[thick,YellowOrange] (0,0) -- (0,1); \draw[thick,YellowOrange,->] (0,0) -- (0.70710678118+0.070710678118,0.70710678118+0.070710678118); \draw[thick,YellowOrange,->] (0,0) -- (1.1,0); \draw[thick,YellowOrange,->] (0,0) -- (0,-1.1); \filldraw[YellowOrange] (0,0) circle(1pt); \end{tikzpicture}
    \caption{The normal fan of $\mathcal P$ with the modified cones $\widetilde {\mathcal C}$ indicated.}
    \label{fig:normal_fan_tilde}
    \end{subfigure}%
    \caption{A polytope and its normal fan.}
    \label{fig:normal_fan_poly}
\end{figure}

\begin{lemma}
\label{lmm:cone_limit}
If $\mathcal P \subset \R^n$ is a polytope, then there exists a constant $C > 0$ such that
\begin{gather*} \max_{\bb v \in \mathcal P} \inner{\bb y}{\bb v} - \inner{\bb y}{\bb k} \geq C R \\ \text{ for all } R \geq 0 \text{ and all faces } F \subset \mathcal P \text{ with } \bb k \in \vrt (\mathcal P) \setminus F \\ \text{ and } \bb y \in \mathcal C_{F} \setminus \bigcup_{\dim F' > \dim F} \left( \mathcal C_{F'} + B_R \right), \end{gather*}
where $\vrt(\mathcal P)$ is the set of vertices of $\mathcal P$, the union is over all faces of $\mathcal P$ of higher dimension than $F$ and $B_R \subset \R^n$ is a ball of size $R$.
\end{lemma}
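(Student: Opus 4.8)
The plan is to recognise that, \emph{on the normal cone $\mathcal C_F$}, the quantity $\max_{\bb v\in\mathcal P}\inner{\bb y}{\bb v}-\inner{\bb y}{\bb k}$ is \emph{linear} in $\bb y$, and then to reduce the statement to a single Hoffman‑type error bound for a polyhedral cone. Fix a nonempty face $F$ of $\mathcal P$ and a vertex $\bb k\in\vrt(\mathcal P)\setminus F$, choose any vertex $\bb f_0$ of $F$, and set $\ell(\bb y):=\inner{\bb y}{\bb f_0-\bb k}$. For $\bb y\in\mathcal C_F$ one has $\inner{\bb y}{\bb f_0}=\max_{\bb v\in\mathcal P}\inner{\bb y}{\bb v}$ by \eqref{eq:normal_cone}, so the left‑hand side of the lemma equals $\ell(\bb y)$; moreover $\ell\geq 0$ on $\mathcal C_F$ because $\bb k\in\mathcal P$, and $\ell$ does not vanish identically on $\mathcal C_F$ because for $\bb y$ in the relative interior of $\mathcal C_F$ the face of $\mathcal P$ on which $\bb y$ is maximal is exactly $F$, so $\inner{\bb y}{\bb k}<\inner{\bb y}{\bb f_0}$ as $\bb k\notin F$. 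Hence $Z:=\mathcal C_F\cap\ker\ell=\{\bb y\in\mathcal C_F:\ell(\bb y)=0\}$ is the intersection of $\mathcal C_F$ with the supporting hyperplane $\ker\ell$, so it is a proper face of $\mathcal C_F$; by the fan property of the normal fan recalled in Section~\ref{sec:cones_and_fans} it equals $\mathcal C_{F''}$ for some face $F''$ of $\mathcal P$, and since $\dim Z<\dim\mathcal C_F$ and $\dim\mathcal C_G=n-\dim G$ for every face $G$, this forces $\dim F''>\dim F$.

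The heart of the argument is then the bound
\[
\ell(\bb y)\ \geq\ c_{F,\bb k}\,\mathrm{dist}(\bb y,Z)\qquad\text{for all }\bb y\in\mathcal C_F,
\]
with a constant $c_{F,\bb k}>0$ depending only on $\mathcal C_F$ and $\ell$. I would prove this from the Minkowski--Weyl description of the polyhedral cone, $\mathcal C_F=\{\sum_i\lambda_i\bb u_i:\lambda_i\geq 0\}$ with finitely many generators $\bb u_i$. After reordering so that $\ell(\bb u_i)=0$ for $i\leq p$ and $\ell(\bb u_i)>0$ for $i>p$ (there is at least one generator of the latter kind, since $\ell\not\equiv 0$ on $\mathcal C_F$), any representation $\bb y=\sum_i\lambda_i\bb u_i$ of a point $\bb y\in\mathcal C_F$ has $\sum_{i\leq p}\lambda_i\bb u_i\in Z$ (it lies in $\mathcal C_F$ and is annihilated by $\ell$), so $\mathrm{dist}(\bb y,Z)\leq\bigl|\sum_{i>p}\lambda_i\bb u_i\bigr|\leq(\max_j|\bb u_j|)\sum_{i>p}\lambda_i$, while $\ell(\bb y)=\sum_{i>p}\lambda_i\ell(\bb u_i)\geq(\min_{j>p}\ell(\bb u_j))\sum_{i>p}\lambda_i$; combining the two estimates gives the displayed inequality with $c_{F,\bb k}=\min_{j>p}\ell(\bb u_j)/\max_j|\bb u_j|$.

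To conclude: if $\bb y\in\mathcal C_F\setminus\bigcup_{\dim F'>\dim F}(\mathcal C_{F'}+B_R)$ then in particular $\bb y\notin\mathcal C_{F''}+B_R$, i.e.\ $\mathrm{dist}(\bb y,Z)=\mathrm{dist}(\bb y,\mathcal C_{F''})\geq R$, so $\max_{\bb v\in\mathcal P}\inner{\bb y}{\bb v}-\inner{\bb y}{\bb k}=\ell(\bb y)\geq c_{F,\bb k}R$. Taking $C:=\min c_{F,\bb k}$ over the finitely many pairs $(F,\bb k)$ with $F$ a nonempty face of $\mathcal P$ and $\bb k\in\vrt(\mathcal P)\setminus F$ (and $C:=1$ if there is no such pair) yields a single constant that works uniformly in $R$, $F$, $\bb k$ and $\bb y$.

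The only non‑bookkeeping step is the displayed error bound. A compactness proof — minimising $\ell(\bb y)/\mathrm{dist}(\bb y,Z)$ over a cross‑section of $\mathcal C_F$ — is awkward because numerator and denominator both vanish along $Z$, which is why I prefer the explicit generator computation above: it avoids the $0/0$ behaviour near $Z$ and exhibits $c_{F,\bb k}$ concretely. The second point that needs a little care is the identification of the zero locus $Z$ with one of the cones $\mathcal C_{F'}$ having $\dim F'>\dim F$, so that the hypothesis $\bb y\notin\mathcal C_{F'}+B_R$ can be applied to it; but this follows directly from the normal‑fan dictionary, as indicated in the first paragraph.
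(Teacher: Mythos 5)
Your argument is correct, and it takes a genuinely different route from the paper's. Both proofs start from the same observation: on $\mathcal C_F$ the left-hand side equals the linear functional $\ell(\bb y)=\inner{\bb y}{(\bb f_0-\bb k)}$ for any fixed vertex $\bb f_0\in F$, and $\ell\geq 0$ there. The paper then orthogonally projects $\bb y$ onto the supporting hyperplane $\ker\ell$; the projection distance is $\ell(\bb y)/\|\bb f_0-\bb k\|$, and since the segment from $\bb y$ to its foot must meet \emph{some} higher-dimensional face $\mathcal C_{F'}$ of $\mathcal C_F$ (a face that depends on $\bb y$), that distance dominates $\mathrm{dist}(\bb y,\mathcal C_{F'})\geq R$, giving the explicit constant $C=\min_{\bb v\neq\bb w\in\vrt(\mathcal P)}\|\bb v-\bb w\|$. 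You instead fix, once and for all, the exposed face $Z=\mathcal C_F\cap\ker\ell$, recognise it as $\mathcal C_{F''}$ for a specific $F''\supsetneq F$ via the fan structure, and prove a Hoffman-type error bound $\ell(\bb y)\geq c_{F,\bb k}\,\mathrm{dist}(\bb y,Z)$ directly from a finite generating set of $\mathcal C_F$, splitting generators according to whether $\ell$ vanishes on them. Your route is somewhat more self-contained: it uses one fixed reference cone instead of a $\bb y$-dependent one, and it avoids the (true but slightly informal) ``the segment must exit through a face of higher dimension'' step, as well as the $0/0$ trouble a naive compactness argument would face along $Z$. The trade-off is that your constant $c_{F,\bb k}=\min_{j>p}\ell(\bb u_j)/\max_j\|\bb u_j\|$ depends on a choice of generators and is less geometrically transparent than the paper's pairwise vertex-distance constant. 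Both arguments are sound.
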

A set of modified cones $\widetilde{\mathcal C}_F = \mathcal C_{F} \setminus \bigcup_{\dim F' > \dim F} \left( \mathcal C_{F'} + B_R \right)$ is depicted in Figure~\ref{fig:normal_fan_tilde}.
Note that the size of the `gaps' between the modified cones is the diameter $2R$ of the ball $B_R$.
\begin{proof}
For a given face $F \subset \mathcal P$ and $\bb k \in \vrt(\mathcal P) \setminus F$, choose some vertex $\bb v \in \vrt(F)$ and consider the hyperplane $H_{\bb v} = \{ \bb y \in \R^n : \inner{\bb y}{(\bb k - \bb v)} = 0 \}$.
By definition of the normal cone, this hyperplane will not intersect with the interior of $\mathcal C_F$ as $\inner{\bb y}{\bb v} \geq \inner{\bb y}{\bb w}$ for all $\bb w \in \mathcal P$ and $\bb y \in \mathcal C_F$.
We can project any point $\bb y \in \mathcal C_F$ onto $H_{\bb v}$ using the orthogonal projection 
$\bb y^\perp = \bb y - (\bb k -\bb v) \frac{\inner{\bb y}{(\bb k -\bb v)}}{\|\bb k -\bb v\|^2} \in H_{\bb v}$. The line segment from $\bb y$ to $\bb y^\perp$ will intersect a face of $\mathcal C_F$.
Let $\mathcal C_{F'}$ be this face. Clearly, $\dim F' > \dim F$.
If $\bb y \not \in \mathcal C_{F'} + B_R$, then the vector $\bb y$ must have a larger distance than $R$ from all points in $\mathcal C_{F'}$.
By construction, the orthogonal projection $\bb y^\perp$ is at least as far away from $\bb y$ as the closest point in $\mathcal C_{F'}$.
Hence, $\| \bb y^\perp -\bb y\| = | \inner{\bb y}{(\bb v - \bb k)} | /\|\bb v - \bb k\| > R$ and it follows that 
\begin{align*} \inner{\bb y}{(\bb v - \bb k)} > R \|\bb v - \bb k\| \text{ for all } \bb y \in \mathcal C_F \setminus (\mathcal C_{F'} + B_R), \bb v \in \vrt(F) \text{ and } \bb k \in \vrt (\mathcal P) \setminus F, \end{align*}
where we used that $\inner{\bb y}{(\bb v - \bb k)} \geq 0$ for all $\bb v\in F$, $\bb k \in \mathcal P$ and $\bb y \in \mathcal C_F$.
To prove the statement, choose $C = \min_{\bb v\neq\bb w \in \vrt(\mathcal P) } \| \bb v - \bb w \|$.
\end{proof}

Theorem~\ref{thm:approx_lower} follows now as a Corollary from the following 
\begin{proposition}
\label{prop:exp_bound_CF}
If $p \in \C[x_1,\ldots,x_N]$ is completely non-vanishing on $\R_{>0}^n$ and $R \geq 0$, then
there is a constant $C > 0$ such that 
\begin{align} \label{eq:exp_bound_CF} C p^\tr(e^{\bb{s}+\bb{t}}) \leq | p(e^{\bb s + \bb t } ) | \text{ for all } \bb s \in \mathcal C_F \text{ and } \bb t \in B_R \text{ for each face $F \subset \NP_p$,} \end{align}
where $B_R\subset \R^n$ is a ball of radius $R$.

\end{proposition}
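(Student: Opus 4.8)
The plan is to prove the proposition by induction on the dimension of the face $F$, taking as base case the full face $F=\NP_p$ and working down toward the vertices of $\NP_p$; the inductive statement is the assertion of the proposition itself, read as a family indexed by pairs $(F,R)$. The key point that makes the induction close up is that the radius $R$ is allowed to grow at each step, which is exactly why the proposition is formulated with an arbitrary ball $B_R$ rather than with a single point.

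For the base case $F=\NP_p$ one has $p_F=p$, and for $\bb s\in\mathcal C_{\NP_p}$ every vertex of $\NP_p$ maximizes the linear form $\bb w\mapsto\inner{\bb s}{\bb w}$, so $p^\tr(e^{\bb s})=e^{M}$ with $M=\max_{\bb v\in\NP_p}\inner{\bb s}{\bb v}$ by Proposition~\ref{prop:trsupport}. Lemma~\ref{lmm:truncated_homogenous} then gives $p(e^{\bb s+\bb t})=e^{M}p(e^{\bb t})$, and Lemma~\ref{lmm:submult} together with $p^\tr(e^{\bb s})=e^{M}$ gives $p^\tr(e^{\bb s+\bb t})\le e^{M}p^\tr(e^{\bb t})$. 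Since $p=p_{\NP_p}$ is non-vanishing on $\R^n_{>0}$ (as $p$ is completely non-vanishing) and both $|p|$ and $p^\tr$ are continuous and positive on the compact set $\{e^{\bb t}:\bb t\in B_R\}$, the infimum of $|p(e^{\bb t})|/p^\tr(e^{\bb t})$ over that set is a positive constant that serves as $C$.

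For the inductive step I fix $(F,R)$, assume the claim for all $(F',R')$ with $\dim F'>\dim F$, and introduce an auxiliary radius $r$ to be chosen large at the end. Splitting $\mathcal C_F=\widetilde{\mathcal C}_F\uplus\big(\mathcal C_F\cap\bigcup_{\dim F'>\dim F}(\mathcal C_{F'}+B_r)\big)$ with $\widetilde{\mathcal C}_F=\mathcal C_F\setminus\bigcup_{\dim F'>\dim F}(\mathcal C_{F'}+B_r)$ as in Lemma~\ref{lmm:cone_limit}, the second piece is immediate: if $\bb s\in\mathcal C_{F'}+B_r$ then $\bb s+\bb t=\bb s'+\bb u$ with $\bb s'\in\mathcal C_{F'}$ and $\bb u\in B_{r+R}$, so the inductive hypothesis for $(F',r+R)$ applies directly. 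On $\widetilde{\mathcal C}_F$ I write $|p(e^{\bb s+\bb t})|\ge|p_F(e^{\bb s+\bb t})|-\sum_{\bb\ell\in\supp(p)\setminus F}|c_{\bb\ell}|\,e^{\inner{\bb s+\bb t}{\bb\ell}}$. The first term equals $e^{M}|p_F(e^{\bb t})|$ by Lemma~\ref{lmm:truncated_homogenous} and is at least $e^{M}m$, where $m>0$ is a lower bound for $|p_F(e^{\bb t})|$ uniform over $\bb t\in B_R$ and over the finitely many faces $F$. For each monomial in the sum, writing $\bb\ell$ as a convex combination of vertices of $\NP_p$ and using that $F$ is a face, some vertex $\bb k\notin F$ occurs with a weight $\lambda>0$; Lemma~\ref{lmm:cone_limit} applied with radius $r$ gives $\inner{\bb s}{\bb k}\le M-C_0r$, hence $\inner{\bb s}{\bb\ell}\le M-\lambda C_0r\le M-\delta r$ with $\delta>0$ depending only on $p$. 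With $|\inner{\bb t}{\bb\ell}|\le\rho R$ on the compact ball, the whole sum is bounded by $\big(\sum|c_{\bb\ell}|\big)e^{M}e^{\rho R-\delta r}$, which falls below $\tfrac12 m\,e^{M}$ once $r$ is taken large enough, uniformly in $F$. Thus $|p(e^{\bb s+\bb t})|\ge\tfrac12 m\,e^{M}$, while $p^\tr(e^{\bb s+\bb t})\le e^{M}p^\tr(e^{\bb t})$ is at most $e^{M}$ times the maximum of $p^\tr$ over $B_R$; dividing yields a constant for this piece, and the minimum of the two constants (also over the finitely many faces $F'$ entering the first piece) finishes the step.

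The main obstacle is the bookkeeping forced by the competition between the two radii: translating by $\bb t\in B_R$ can partly cancel the exponential suppression $e^{-\delta r}$ of the non-dominant monomials, so $r$ must be chosen large relative to $R$, and this is precisely what requires the inductive statement to hold for all radii simultaneously and dictates its shape. A minor additional care point is that Lemma~\ref{lmm:cone_limit} only bounds vertices of $\NP_p$, so the passage from an arbitrary support point $\bb\ell\notin F$ to a vertex $\bb k\notin F$ via a convex combination, together with the lower bound $\lambda\ge\lambda_{\min}>0$ on the occurring barycentric weight (finite support), must be made explicit.
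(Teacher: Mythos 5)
Your proposal follows the same strategy as the paper: induction from the top face $\NP_p$ down (equivalently, on codimension of $F$), with the base case handled by Lemmas~\ref{lmm:truncated_homogenous} and~\ref{lmm:submult} plus compactness, and the inductive step handled by splitting $\mathcal C_F$ into $\widetilde{\mathcal C}_F$ and the $B_r$-collar of the lower-dimensional normal cones, estimating on $\widetilde{\mathcal C}_F$ via the triangle inequality and Lemma~\ref{lmm:cone_limit}, and then choosing $r$ large. The structure, the role of the growing radius $R$ in making the induction close up, and the final choice of constants are the paper's.

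Where you go beyond the paper is in the bookkeeping around Lemma~\ref{lmm:cone_limit}. The paper applies that lemma directly to all $\bb k\in\supp(p)\setminus F$, although the lemma as stated is only proved for $\bb k\in\vrt(\NP_p)\setminus F$; if $p$ has support points in the relative interior of $\NP_p$ or of a facet, these need not be vertices, and the constant from the lemma does not literally apply. Your treatment — expressing such $\bb\ell$ as a convex combination of vertices, noting that a face is convex so at least one vertex $\bb k\notin F$ carries positive weight $\lambda$, and taking the minimum weight $\lambda_{\min}>0$ over the finitely many support points and faces — correctly repairs this, at the cost of a degraded constant $\delta=\lambda_{\min}C_0$. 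You also make explicit the step the paper leaves implicit when invoking the induction hypothesis: that $\bb s$ near $\mathcal C_{F'}$ plus $\bb t\in B_R$ decomposes as $\bb s'+\bb u$ with $\bb s'\in\mathcal C_{F'}$ and $\bb u\in B_{r+R}$, so the hypothesis is used at the enlarged radius $r+R$. Both are careful refinements of the same argument rather than a different route, and your proof is correct.
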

\begin{proof}
We are going to prove this by induction in the codimension of $F$. Starting with codimension $0$, i.e.\ $F = \NP_p$, we have by Definition~\ref{def:truncated_polynomial}, Lemma~\ref{lmm:truncated_homogenous} and Lemma~\ref{lmm:submult},
\begin{align*} | p(e^{\bb{s} + \bb{t}})| = | p_{\NP_p}(e^{\bb{s} + \bb{t}})| = p^\tr( e^\bb{s}) | p(e^{\bb{t}}) | \geq p^\tr(e^{\bb s + \bb t}) \frac{|p(e^{\bb{t}})| }{p^\tr(e^{\bb t})} \text{ for all } \bb s \in \mathcal C_{\NP_p} \text{ and } \bb t \in \R^n. \end{align*}
Because $p(e^{\bb t})$ is non-vanishing on the compact domain $\bb t \in B_R$, we can choose the constant $C = \min_{\bb t \in B_R} |p(e^{\bb{t}})|/p^\tr(e^{\bb t}) > 0$ to get the desired bound. 

Suppose $F$ is of codimension $d$ and eq.~\eqref{eq:exp_bound_CF} holds for all faces up to codimension $d-1$. 
By the induction hypothesis, for each $R' > 0$ there exists a constant $C >0$ such that eq.~\eqref{eq:exp_bound_CF} is fulfilled in a ball of size $R'$ around all cones $\mathcal C_{F'}$ of lower dimension, $\dim \mathcal C_{F'} < \dim \mathcal C_F$. We therefore only need to prove the existence of such a constant for the smaller domain,
$ \widetilde {\mathcal C}_F = \mathcal C_F \setminus \bigcup_{\dim F' > \dim F} ( \mathcal C_{F'} + B_{R'} ) $, where the union is over all faces of $\NP_p$ of higher dimension than $F$. See Figure~\ref{fig:normal_fan_tilde} for an illustration of this smaller domain.

By definition of the truncated polynomial, we can write $p(e^{\bb{s} + \bb{t}})$ as,
\begin{align*} p(e^{\bb{s} + \bb{t}}) &= p_F(e^{\bb s + \bb{t}}) + \sum_{\bb{k} \in \supp(p) \setminus F} c_{\bb k} e^{\inner{(\bb{s} + \bb{t})}{\bb k}} \text{ for all } \bb s \in \widetilde{\mathcal C}_{F} \text{ and } \bb t \in \R^n, \end{align*}
and estimate using Proposition~\ref{prop:trsupport}, Lemma~\ref{lmm:truncated_homogenous}, Lemma~\ref{lmm:submult} and Lemma~\ref{lmm:cone_limit},
\begin{align*} |p(e^{\bb{s} + \bb{t}})| &\geq p^\tr(e^{\bb s}) \left( |p_F(e^{\bb{t}})| - \sum_{\bb{k} \in \supp(p) \setminus F} |c_{\bb k}| e^{\inner{\bb{t}}{\bb k}} e^{\inner{\bb{s}}{\bb k} - \max_{\bb v \in \NP_p}\inner{\bb s}{\bb v}} \right) \\ &\geq p^\tr(e^{\bb s+ \bb t}) \left( \frac{|p_F(e^{\bb{t}})|}{p^\tr(e^{\bb t})} - e^{-R' C' } \frac{\sum_{\bb{k} \in \supp(p) \setminus F} |c_{\bb k}| e^{\inner{\bb{t}}{\bb k}} }{p^\tr(e^{\bb t})} \right) \\ &\geq p^\tr(e^{\bb s+ \bb t}) \left( \frac{|p_F(e^{\bb{t}})|}{p^\tr(e^{\bb t})} - e^{-R' C' } \sum_{\bb{k} \in \supp(p) \setminus F} |c_{\bb k}| \right) \text{ for all } \bb s \in \widetilde{\mathcal C}_{F} \text{ and } \bb t \in \R^n, \end{align*}
where $C'$ is the constant we obtained from Lemma~\ref{lmm:cone_limit}.
We can choose a constant $C$ such that $2 C = \min_{\bb t \in B_R} |p_F(e^{\bb{t}})|/p^\tr(e^{\bb t}) > 0$ and 
$R' > \frac{1}{C'} \log (\sum_{\bb{k} \in \supp(p) \setminus F} |c_{\bb k}|/C)$. This gives the desired bound.
\end{proof}

\begin{proof}[Proof of Theorem~\ref{thm:approx_lower}] 
Use Proposition~\ref{prop:exp_bound_CF} and the completeness of the normal fan. %
\end{proof}

\begin{remark}
In the proofs of Lemma~\ref{lmm:cone_limit} and Proposition~\ref{prop:exp_bound_CF}, we actually constructed explicit bounds for the constants in Theorem~\ref{thm:approx} which depend on the geometry of the relevant polytopes and polynomials. These explicit bounds might be useful for further considerations, but we will not make use of them in this article.
\end{remark}

\section{Geometric sector decomposition}
\label{sec:tropsecdec}

The rough overall plan of our take on the integral in eq.~\eqref{eq:integral} is as follows: we can trivially `factorize' its integrand and write it as
\begin{align*} I = \int_{\P_{>0}^{n-1}} \frac{\prod_{i} a_{i}^\tr(\bb{x})^{\Re \nu_i}}{\prod_{j} b_{j}^\tr(\bb{x})^{\Re \rho_j}} \left( \frac{\prod_{i} a_{i}(\bb x)^{\nu_i}/a_{i}^\tr(\bb x)^{\Re \nu_i} }{\prod_{j} b_{j}(\bb x)^{\rho_j}/b_{j}^\tr(\bb x)^{\Re \rho_j}} \right) \Omega . \end{align*}
The second factor is bounded by Corollary~\ref{crll:homo_approx} as long as the $\{b_j\}$ polynomials are completely non-vanishing. The first term has a geometric interpretation in terms of the polytopes 
$\mathcal{A} = \sum_{i} (\Re \nu_i) \NP_{a_i}$ and $\mathcal{B} = \sum_{j} (\Re \rho_j) \NP_{b_j}$ as defined in Theorem~\ref{thm:convergence}.

As before it will be handy to change to logarithmic coordinates to expose this geometric interpretation. The component-wise exponential $\Exp: \R^{n} \rightarrow \R^{n}_{>0}, \bb y \mapsto e^{\bb y} = (e^{y_1}, \ldots, e^{y_n})$, extends to a smooth bijective map $\Exp: \R^n/\one \R \rightarrow \P^{n-1}_{>0}$, as $\Exp$ respects the respective equivalence relation. That means if $\bb x = e^{\bb y}$ and $\bb x' = e^{\bb y'}$ with $\bb y, \bb y' \in \R^n$,
then $\bb y' = \bb y + \mu \one \text{ for some } \mu \in \R$ if and only if
$\bb x' = \lambda \bb x \text{ for some }\lambda \in \R_{>0}. $
For this reason the quotient $\R^n/\one \R$ is also called \emph{tropical projective space}.

By Proposition~\ref{prop:trsupport} and the definition of the weighted Minkowski sum with $\bb x = e^{\bb y}$
\begin{align} \begin{aligned} \label{eq:atrbtr_exp} \frac{\prod_{i} a_{i}^\tr(\bb x)^{\Re \nu_i}}{\prod_{j} b_{j}^\tr(\bb x)^{\Re \rho_j}} &= \exp \left( \sum_{i} \Re \nu_i \max_{\bb v\in \NP_{a_i}}\inner{\bb y}{\bb v} - \sum_{j} \Re \rho_j \max_{\bb v\in \NP_{b_j}}\inner{\bb y}{\bb v} \right) \\ &= \exp\left(\max_{\bb v\in \mathcal A}\inner{\bb y}{\bb v} - \max_{\bb v \in \mathcal B}\inner{\bb y}{\bb v} \right). \end{aligned} \end{align}
If $\mathcal A$ and $\mathcal B$ fulfill the
requirements \normalfont{R1} and \normalfont{R2} of Theorem~\ref{thm:convergence}, this exponent is falling sufficiently fast for large $\bb y$ for the integral in eq.~\eqref{eq:integral} to be convergent.

\begin{lemma}
\label{lmm:ABsupportlimit}
Let $\mathcal A$ and $\mathcal B$ be the polytopes defined in Theorem~\ref{thm:convergence}. If $\mathcal A$ and $\mathcal B$ fulfill the
requirements \normalfont{R1} and \normalfont{R2} of Theorem~\ref{thm:convergence},
then there is a constant $\varepsilon >0$ such that 
\begin{align*} \max_{\bb v \in \mathcal B} \inner{\bb y}{\bb v} - \max_{\bb v \in \mathcal A} \inner{\bb y}{\bb v} \geq \varepsilon \|\bb y\|_{\R^n/\one \R} \text{ for all } \bb y \in \R^n/\one \R, \end{align*}
where $\| \bb y \|_{\R^n/\one \R}=\inf_{\mu \in \R} \| \bb y+\mu \one\|$ is the norm on the quotient space $\R^n/\one \R$ induced from the standard norm $\|\cdot \|$ on $\R^n$.
\end{lemma}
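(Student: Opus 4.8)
The plan is to use a homogeneity/compactness argument on tropical projective space. First I would observe that the function
\[
g(\bb y) = \max_{\bb v \in \mathcal B} \inner{\bb y}{\bb v} - \max_{\bb v \in \mathcal A} \inner{\bb y}{\bb v}
\]
is well-defined on $\R^n/\one\R$: by Remark~\ref{rmk:homogeneous} both $\mathcal A$ and $\mathcal B$ lie in the hyperplane $\{\inner{\one}{\bb v} = \xi\}$, so adding $\mu\one$ to $\bb y$ shifts both maxima by the same amount $\mu\xi$, leaving $g$ invariant. Moreover $g$ is positively homogeneous of degree $1$ (since support functions are) and continuous (a difference of support functions, each of which is a max of finitely many linear functionals, hence piecewise linear and continuous). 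So it suffices to show $g(\bb y) > 0$ for all $\bb y \in \R^n/\one\R$ with $\bb y \neq 0$, and then take $\varepsilon = \min g$ over the compact unit sphere of $\R^n/\one\R$, extending by homogeneity.

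The key step is therefore positivity of $g$ away from the origin. Suppose toward a contradiction that $g(\bb y) \le 0$ for some $\bb y$ not in $\one\R$, i.e.\ $\max_{\bb v\in\mathcal A}\inner{\bb y}{\bb v} \ge \max_{\bb v\in\mathcal B}\inner{\bb y}{\bb v}$. Since $\mathcal A \subset \mathcal B$ (requirement R2 gives $\mathcal A \subset \relint\mathcal B \subset \mathcal B$), the reverse inequality always holds, so in fact
\[
\max_{\bb v\in\mathcal A}\inner{\bb y}{\bb v} = \max_{\bb v\in\mathcal B}\inner{\bb y}{\bb v} =: \xi_{\bb y}.
\]
Let $\bb a \in \mathcal A$ attain this value. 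Then $\bb a$ lies in the face of $\mathcal B$ on which the functional $\bb y$ is maximal; call this face $F_{\bb y}$. Now I use R2 more strongly: $\mathcal A \subset \relint\mathcal B$, so $\bb a \in \relint\mathcal B$. But a point of $\relint\mathcal B$ can only lie in the face $F_{\bb y}$ of $\mathcal B$ if $F_{\bb y} = \mathcal B$ itself; that is, $\bb y$ is constant on all of $\mathcal B$. Since $\mathcal B$ is $(n-1)$-dimensional (R1) and contained in (in fact spanning, within) the hyperplane $\{\inner{\one}{\bb v}=\xi\}$, the only functionals constant on $\mathcal B$ are multiples of $\one$, i.e.\ $\bb y \in \one\R$, contradicting $\bb y \neq 0$ in $\R^n/\one\R$. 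Hence $g(\bb y) > 0$ everywhere off the origin, and the compactness argument finishes the proof.

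The main obstacle is making the "$\bb a \in \relint\mathcal B$ and $\bb a$ maximizes $\bb y$ over $\mathcal B$ forces $\bb y$ constant on $\mathcal B$" step airtight, together with correctly handling the ambient dimension. Concretely: if $\bb y$ is not constant on $\mathcal B$, then $F_{\bb y} = \{\bb v \in \mathcal B : \inner{\bb y}{\bb v} = \xi_{\bb y}\}$ is a proper face of $\mathcal B$, hence contained in the relative boundary of $\mathcal B$, which is disjoint from $\relint\mathcal B$; this contradicts $\bb a \in F_{\bb y} \cap \relint\mathcal B$. The dimension bookkeeping (that $\mathcal B$ spans the hyperplane, so constancy of a functional on $\mathcal B$ is equivalent to proportionality to $\one$ modulo the ambient identifications) is exactly what R1 and Remark~\ref{rmk:homogeneous} supply. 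One should also double-check that the infimum defining $\|\bb y\|_{\R^n/\one\R}$ is attained and that the induced unit sphere is compact, which is standard for quotients of $\R^n$ by a subspace. With these points in place the estimate follows, and it is the desired exponential-decay input for convergence of the integral in eq.~\eqref{eq:integral} via eq.~\eqref{eq:atrbtr_exp}.
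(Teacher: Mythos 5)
Your proof is correct, but it takes a genuinely different route from the paper's. You establish that $g(\bb y) = \max_{\mathcal B}\inner{\bb y}{\cdot} - \max_{\mathcal A}\inner{\bb y}{\cdot}$ is continuous, homogeneous of degree one, and strictly positive off the origin (the last via a face argument: a maximizer in $\mathcal A \subset \relint\mathcal B$ that also maximizes over $\mathcal B$ would lie on a proper face unless $\bb y$ is constant on $\mathcal B$, i.e.\ $\bb y \in \one\R$), and then you minimize $g$ over the compact unit sphere of $\R^n/\one\R$. The paper instead gives a one-step constructive argument: because $\mathcal A \subset \relint\mathcal B$ and $\mathcal B$ is full-dimensional in the hyperplane $H_\xi$, there is a ball $B_\varepsilon$ (lying in the direction orthogonal to $\one$) with $\mathcal A + B_\varepsilon \subset \mathcal B$, whence $\max_{\mathcal A}\inner{\bb y}{\cdot} + \varepsilon\|\bb y\|_{\R^n/\one\R} = \max_{\mathcal A+B_\varepsilon}\inner{\bb y}{\cdot} \leq \max_{\mathcal B}\inner{\bb y}{\cdot}$ by additivity of support functions under Minkowski sum. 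The paper's approach is shorter, produces an explicit value of $\varepsilon$ (the inradius of the gap between $\mathcal A$ and $\partial\mathcal B$), and fits the author's stated interest elsewhere in tracking explicit constants; your compactness route is more elementary in that it avoids knowing that support functions add under Minkowski sums, and it generalizes readily to settings where the explicit geometry of the outer parallel body is harder to describe. Both are valid proofs.
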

\begin{proof}
First note that the inequality in the statement is well-defined for $\bb y \in \R^n/\one \R$, as $\mathcal A$ and $\mathcal B$ both lie in the same hyperplane $\mathcal A, \mathcal B \subset H_\xi = \{\bb{v} \in \R^n : \inner{\one}{\bb{v}} = \xi\}$ for some $\xi \in \R$. Therefore, $\inner{\bb y}{\bb v} = \inner{\bb y}{\bb w}$ for all $\bb v \in \mathcal A$, $\bb w \in \mathcal B$ and $\bb y \in \one \R$.

As $\mathcal B$ is full-dimensional in $H_\xi$ (see Remark~\ref{rmk:homogeneous}) 
and $\mathcal A \subset \relint \mathcal B$,
we can Minkowski add a ball $B_\varepsilon$ to $\mathcal A$ such that $\mathcal A + B_\varepsilon \subset \mathcal B$, provided that this ball only extends in the subspace orthogonal to $\one \R$ which is parallel to $H_\xi$ and $\varepsilon$ is sufficiently small. Let $B_\varepsilon = \{ \bb v \in \R^n : \inner{\one}{\bb v} = 0 \text{ and } \|\bb v\| \leq \varepsilon \}$ be such a ball.  The resulting convex set $\mathcal A + B_\varepsilon$ is the \emph{outer parallel body} of $\mathcal A$ restricted to its affine hull.

Observe that $\max_{\bb v \in B_\varepsilon} \inner{\bb y}{\bb v} = \varepsilon \inf_{\mu \in \R} \| \bb y+\mu \one\|= \varepsilon \| \bb y \|_{\R^n/\one \R}$. By the definition of the Minkowski sum and because $\mathcal A + B_\varepsilon \subset \mathcal B$, 
\begin{gather*} \max_{\bb v \in \mathcal A} \inner{\bb y}{\bb v} + \varepsilon \| \bb y \|_{\R^n/\one \R} = \max_{\bb v \in \mathcal A} \inner{\bb y}{\bb v} + \max_{\bb v \in B_\varepsilon} \inner{\bb y}{\bb v} \\ = \max_{\bb v \in \mathcal A + B_\varepsilon} \inner{\bb y}{\bb v} \leq \max_{\bb v \in \mathcal B} \inner{\bb y}{\bb v} \text{ for all } \bb y \in \R^n. \qedhere \end{gather*}
\end{proof}

To be able to handle Kaneko-Ueda geometric sector decomposition with our tropical approach, we will need additional tools from convex geometry. 

A cone is called \emph{pointed} if it contains no $1$-dimensional subspace. A fan is pointed if all its cones are pointed. If a polytope $\mathcal P \subset \R^n$ is full-dimensional, its normal fan $\mathcal F$ is pointed. For lower dimensional polytopes $\mathcal P \subset \R^n$, the normal cone $\mathcal C_{\mathcal P}$ associated to the polytope itself is non-trivial: $\mathcal C_{\mathcal P} = {\mathcal P}^\perp = \left \{ \bb y \in \R^n : \inner{\bb y}{\bb v} = \inner{\bb y}{\bb w} \text{ for all } \bb v, \bb w \in \mathcal P \right\}$. It consists of all linear functionals that are constant on $\mathcal P$. This subspace is contained in each cone of the normal fan. Taking the quotient with respect to this subspace within each cone in the normal fan $\mathcal C \in \mathcal F$ results in a pointed fan $\mathcal F/\mathcal P^\perp$ on the quotient vector space $\R^n/\mathcal P^\perp$. This fan is the \emph{reduced normal fan}.

Given a fan $\mathcal F$, another fan $\mathcal F'$ \emph{refines} $\mathcal F$ if every cone in $\mathcal F$ is a union of cones in $\mathcal F'$.  If $\mathcal F$ and $\mathcal G$ are both fans, then their \emph{common refinement} is defined as $\mathcal F \wedge \mathcal G = \{ \mathcal C \cap \mathcal C' : \mathcal C \in \mathcal F, \mathcal C' \in \mathcal G \}$.  
Let $\mathcal F_{\mathcal{AB}}$ be the common refinement of the normal fans of the polytopes $\mathcal A$ and $\mathcal B$ which were defined in Theorem~\ref{thm:convergence}.
Recall that the polytopes $\mathcal{A}, \mathcal{B}$ are not full-dimensional, because they are weighted Minkowski sums of the Newton polytopes of homogeneous polynomials (see Remark~\ref{rmk:homogeneous}). As $\mathcal B$ is required to be full-dimensional in a ($n-1$)-dimensional hyperplane which is orthogonal to the $\one$-vector, we have $\mathcal B^\perp = \one \R$. We will therefore consider the reduced refined normal fan $\mathcal F_{\mathcal{AB}} / \one \R$, which is pointed.
The following lemma identifies the exponentiated cones of $\mathcal C \in \mathcal F_{\mathcal{AB}} / \one \R$ as the domains where the function $\prod_{i} a_{i}^\tr(\bb x)^{\Re \nu_i}/\prod_{j} b_{j}^\tr(\bb x)^{\Re \rho_j}$ behaves like a monomial.

\begin{lemma}
\label{lmm:abtrop_w}
Let $\mathcal A$ and $\mathcal B$ be the polytopes defined in Theorem~\ref{thm:convergence}. If $\mathcal A$ and $\mathcal B$ fulfill the
requirements \normalfont{R1} and \normalfont{R2} of Theorem~\ref{thm:convergence}
and $\mathcal C$ is a cone in the reduced common refinement $\mathcal F_{\mathcal{AB}}/\one \R$, then
\begin{align*} \frac{\prod_{i} a_{i}^\tr(\bb x)^{\Re \nu_i}}{\prod_{j} b_{j}^\tr(\bb x)^{\Re \rho_j}} = \bb x^{-\bb w} \text{ for all } \bb x \in \Exp (\mathcal C), \end{align*}
where 
$\bb w = \bb w_{\mathcal B} - \bb w_{\mathcal A}$ 
and $\bb w_{\mathcal A} \in \mathcal A, \bb w_{\mathcal B} \in \mathcal B$ 
such that 
$\inner{\bb y}{\bb w_{\mathcal A}} = \max_{\bb v \in \mathcal A} \inner{\bb y}{\bb v}$
and $\inner{\bb y}{\bb w_{\mathcal B}} = \max_{\bb v \in \mathcal B} \inner{\bb y}{\bb v}$
for all $\bb y \in \mathcal C$. Moreover, $\inner{\one}{\bb w} = 0$ and $\inner{\bb y}{\bb w} > 0$ for all $\bb y \in \mathcal C \setminus \{0\}$.
\end{lemma}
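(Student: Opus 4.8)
The plan is to reduce the statement to the exponential form \eqref{eq:atrbtr_exp} of the ratio of tropical approximations and then to use that, on a single cone of the common refinement, both support functions $\bb y \mapsto \max_{\bb v\in\mathcal A}\inner{\bb y}{\bb v}$ and $\bb y\mapsto\max_{\bb v\in\mathcal B}\inner{\bb y}{\bb v}$ restrict to \emph{linear} functionals. First I would check that every object in the statement is well-defined. Since $\mathcal A,\mathcal B \subset H_\xi$ with $\xi=\sum_i \Re\nu_i\deg a_i=\sum_j\Re\rho_j\deg b_j$ by Remark~\ref{rmk:homogeneous}, the two maxima change by the same amount $\mu\xi$ under $\bb y\mapsto\bb y+\mu\one$, so their difference descends to $\R^n/\one\R$; likewise $\bb x^{-\bb w}$ with $\inner{\one}{\bb w}=0$ is well-defined on $\P^{n-1}_{>0}$. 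Hence it suffices to lift the reduced cone $\mathcal C\subset\R^n/\one\R$ to a representative cone $\widehat{\mathcal C}\in\mathcal F_{\mathcal{AB}}$ and prove the identities for $\bb y\in\widehat{\mathcal C}$.

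Next, by the definition of the common refinement $\mathcal F\wedge\mathcal G$, we have $\widehat{\mathcal C}=\mathcal C_{F_{\mathcal A}}\cap\mathcal C_{F_{\mathcal B}}$ for some nonempty faces $F_{\mathcal A}$ of $\mathcal A$ and $F_{\mathcal B}$ of $\mathcal B$. I would choose vertices $\bb w_{\mathcal A}\in\mathrm{vert}(F_{\mathcal A})\subset\mathcal A$ and $\bb w_{\mathcal B}\in\mathrm{vert}(F_{\mathcal B})\subset\mathcal B$. By the defining property \eqref{eq:normal_cone} of the normal cone, $\inner{\bb y}{\bb w_{\mathcal A}}=\max_{\bb v\in\mathcal A}\inner{\bb y}{\bb v}$ and $\inner{\bb y}{\bb w_{\mathcal B}}=\max_{\bb v\in\mathcal B}\inner{\bb y}{\bb v}$ hold for every $\bb y\in\widehat{\mathcal C}$, including on the boundary of $\widehat{\mathcal C}$ (there the maximizing faces only grow, so they still contain $F_{\mathcal A}$ resp.\ $F_{\mathcal B}$). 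Setting $\bb w=\bb w_{\mathcal B}-\bb w_{\mathcal A}$, equation \eqref{eq:atrbtr_exp} then gives, for $\bb x=e^{\bb y}$ with $\bb y\in\widehat{\mathcal C}$, $\prod_i a_i^\tr(\bb x)^{\Re\nu_i}/\prod_j b_j^\tr(\bb x)^{\Re\rho_j}=\exp\big(\inner{\bb y}{\bb w_{\mathcal A}}-\inner{\bb y}{\bb w_{\mathcal B}}\big)=e^{-\inner{\bb y}{\bb w}}=\bb x^{-\bb w}$, which is the first claim.

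For the remaining assertions, $\inner{\one}{\bb w}=\inner{\one}{\bb w_{\mathcal B}}-\inner{\one}{\bb w_{\mathcal A}}=\xi-\xi=0$ since both chosen vertices lie in $H_\xi$. Finally $\inner{\bb y}{\bb w}=\max_{\bb v\in\mathcal B}\inner{\bb y}{\bb v}-\max_{\bb v\in\mathcal A}\inner{\bb y}{\bb v}$, so Lemma~\ref{lmm:ABsupportlimit}, which applies precisely because requirements \textbf{R1} and \textbf{R2} are assumed, gives $\inner{\bb y}{\bb w}\geq\varepsilon\,\|\bb y\|_{\R^n/\one\R}$, and for $\bb y\in\mathcal C\setminus\{0\}$, i.e.\ $\bb y\notin\one\R$, the right-hand side is strictly positive. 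I do not anticipate a real obstacle: the only steps needing care are the bookkeeping of well-definedness on $\R^n/\one\R$ and $\P^{n-1}_{>0}$, and the observation that one fixed pair $(\bb w_{\mathcal A},\bb w_{\mathcal B})$ of vertices works uniformly over all of $\widehat{\mathcal C}$ rather than only on its relative interior; everything else is a direct application of \eqref{eq:atrbtr_exp} and Lemma~\ref{lmm:ABsupportlimit}.
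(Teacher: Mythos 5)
Your proof is correct and follows essentially the same route as the paper: identify $\mathcal C$ as a subcone of $\mathcal C_{F_{\mathcal A}}\cap\mathcal C_{F_{\mathcal B}}$ via the common refinement, pick points $\bb w_{\mathcal A}\in F_{\mathcal A}$, $\bb w_{\mathcal B}\in F_{\mathcal B}$ to linearize the support functions on $\mathcal C$, apply eq.~\eqref{eq:atrbtr_exp}, use that $\mathcal A,\mathcal B$ lie in the same hyperplane $H_\xi$ for $\inner{\one}{\bb w}=0$, and invoke Lemma~\ref{lmm:ABsupportlimit} for positivity. The additional bookkeeping you add about well-definedness on the quotient and about behavior on the boundary of $\widehat{\mathcal C}$ is fine but automatic from the definition of the normal cone in eq.~\eqref{eq:normal_cone}.
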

\begin{proof}
As $\mathcal C$ is a refinement of the normal fans of $\mathcal A$ and $\mathcal B$, there must be normal cones $\mathcal C_{F_{\mathcal A}}^{\mathcal A}$ and $\mathcal C_{F_{\mathcal B}}^{\mathcal B}$ associated to respective faces $F_{\mathcal A} \subset \mathcal A$ and $F_{\mathcal B} \subset \mathcal B$ such that $\mathcal C \subset \mathcal C_{F_{\mathcal A}}^{\mathcal A} \cap \mathcal C_{F_{\mathcal B}}^{\mathcal B}$.
Hence,
$ \max_{\bb v\in \mathcal B}\inner{\bb y}{\bb v} - \max_{\bb v \in \mathcal A}\inner{\bb y}{\bb v} = \inner{\bb y}{(\bb w_{\mathcal B} - \bb w_{\mathcal A} )} \text{ for all } \bb y \in \mathcal C$,
where we can choose arbitrary $\bb w_{\mathcal A} \in F_{\mathcal A}$ and $\bb w_{\mathcal B} \in F_{\mathcal B}$ by definition of the normal cone in eq.~\eqref{eq:normal_cone}.

Since $\mathcal A$ and $\mathcal B$ are required to lie in the same hyperplane orthogonal to the $\one$-vector, we also have $\inner{\one}{ (\bb w_{\mathcal B} - \bb w_{\mathcal A} ) } = \inner{\one}{\bb w} = 0$.
Due to Lemma~\ref{lmm:ABsupportlimit}, 
$\max_{\bb v\in \mathcal B}\inner{\bb y}{\bb v} - \max_{\bb v \in \mathcal A}\inner{\bb y}{\bb v} = \inner{\bb y}{(\bb w_{\mathcal B} - \bb w_{\mathcal A} )} \geq \varepsilon \| \bb y \|_{\R^n/\one \R }> 0 \text{ for all } \bb y \in \mathcal C \setminus \{0\}$ and the statement follows from eq.~\eqref{eq:atrbtr_exp}.
\end{proof}

A cone $\mathcal C$ is \emph{simplicial} if it is generated as, $\mathcal C = \{ \sum_{k=1}^d \lambda_k \bb u^{(k)} : \lambda_k \geq 0 \}$ where $\bb u^{(1)}, \ldots \bb u^{(d)}$ are linear independent.  
For a given cone $\mathcal C$, we can always find a set of simplicial cones $\mathcal C^\Delta_1, \mathcal C^\Delta_2, \ldots$ such that $\mathcal C = \biguplus_{i} \mathcal C^\Delta_i$ and $\mathcal C^\Delta_i \cap \mathcal C^\Delta_j$ is a simplicial cone which is a face of both $\mathcal C^\Delta_i$ and $\mathcal C^\Delta_j$. Such a set of simplicial cones is called a \emph{triangulation} of $\mathcal C$. %
Let $\mathcal F_{\mathcal{AB}}^\Delta/\one \R$ be a \emph{simplicial refinement} of $\mathcal F_{\mathcal{AB}}/\one \R$, i.e.\ a refinement of $\mathcal F_{\mathcal{AB}}/\one \R$ such that each cone in $\mathcal F_{\mathcal{AB}}^\Delta/\one \R$ is simplicial. 

A feature of simplicial cones is that there are convenient coordinates describing points in their interior. This fact is important while proving the following lemma:
\begin{lemma}
\label{lmm:cone_integral}
If a pointed simplicial cone $\mathcal C \subset \R^n/\one \R$ is generated by
linear independent vectors $\bb u^{(1)}, \ldots, \bb u^{(n-1)} \in \R^n/\one \R$, i.e.\ $\mathcal C = \{ \sum_{k=1}^{n-1} \lambda_k \bb u^{(k)}: \lambda_k \geq 0 \}$, $\bb w \in \R^n$ with $\inner{\one}{\bb w} = 0$ and
$\inner{\bb y}{\bb w} > 0$ for all $\bb y \in \mathcal C\setminus\{0\}$, then 
\begin{align*} \label{eq:cone_integral} \int_{\Exp(\mathcal C)} \bb x^{-\bb w} f(\bb x) \Omega = \frac{\left|\det(\bb u^{(1)}, \ldots, \bb u^{(n-1)},\one) \right|}{ \prod_{k=1}^{n-1} \inner{\bb u^{(k)}}{\bb w} } \int_{[0,1]^{n-1}} f\left( \bb x(\bb \xi) \right) \prod_{k=1}^{n-1} \dd \xi_k, \end{align*}
where 
$f:\P_{>0}^{n-1} \rightarrow \C$ is a measurable homogeneous function of degree $0$ and
$\bb x(\bb \xi) \in \Exp(\mathcal C)$ is given component-wise by $x_k = \prod_{i = 1}^{n-1} \xi_i^{-u_k^{(i)}/(\inner{\bb u^{(i)}}{\bb w})}$. %
\end{lemma}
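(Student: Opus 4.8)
\textbf{Proof proposal for Lemma~\ref{lmm:cone_integral}.}

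The plan is to carry out an explicit change of variables from the cube $[0,1]^{n-1}$ onto $\Exp(\mathcal C)$, with the stated parametrisation $x_k = \prod_{i=1}^{n-1}\xi_i^{-u_k^{(i)}/\inner{\bb u^{(i)}}{\bb w}}$, and to track both the differential form $\Omega$ and the prefactor $\bb x^{-\bb w}$ through this map. First I would pass to logarithmic coordinates: writing $\bb x = e^{\bb y}$ with $\bb y\in\R^n/\one\R$, the map $\Exp\colon\R^n/\one\R\to\P^{n-1}_{>0}$ is a smooth bijection (as recalled just before Lemma~\ref{lmm:ABsupportlimit}), so I can compute the integral on the cone $\mathcal C\subset\R^n/\one\R$ instead. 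Since $\mathcal C$ is pointed and simplicial, every $\bb y\in\mathcal C$ has a unique representation $\bb y=\sum_{k=1}^{n-1}\lambda_k\bb u^{(k)}$ with $\lambda_k\ge 0$, which gives global coordinates $(\lambda_1,\dots,\lambda_{n-1})\in\R_{\ge0}^{n-1}$ on the (relative interior of the) cone.

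Next I would substitute $\lambda_k = -\log\xi_k / \inner{\bb u^{(k)}}{\bb w}$, noting that $\inner{\bb u^{(k)}}{\bb w}>0$ by hypothesis, so that $\lambda_k\in[0,\infty)$ corresponds to $\xi_k\in(0,1]$, and compute $\inner{\bb y}{\bb w}=\sum_k\lambda_k\inner{\bb u^{(k)}}{\bb w}=-\sum_k\log\xi_k$, hence $\bb x^{-\bb w}=e^{-\inner{\bb y}{\bb w}}=\prod_k\xi_k$. The key computation is the Jacobian. Under the canonical form $\Omega$, which on the chart $x_n=1$ (using eq.~\eqref{eq:integral_euler_mellin}) equals $\prod_{k=1}^{n-1}\dd x_k/x_k = \prod_{k=1}^{n-1}\dd y_k$ in logarithmic affine coordinates, the change from the affine log-coordinates $(y_1,\dots,y_{n-1})$ to $(\lambda_1,\dots,\lambda_{n-1})$ has Jacobian determinant equal (up to sign) to $|\det(\bb u^{(1)},\dots,\bb u^{(n-1)},\one)|$: this is the standard statement that, after quotienting by $\one\R$, the matrix whose columns are the $\bb u^{(i)}$ expressed in a complement of $\one$ has determinant given by the full $n\times n$ determinant with $\one$ appended (one can verify this by, e.g., projecting onto the first $n-1$ coordinates and expanding). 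Then the substitution $\lambda_k\mapsto\xi_k$ contributes $\prod_k |\dd\lambda_k/\dd\xi_k| = \prod_k 1/(\xi_k\inner{\bb u^{(k)}}{\bb w})$. Multiplying, the factor $\bb x^{-\bb w}=\prod_k\xi_k$ cancels the $\prod_k 1/\xi_k$ exactly, leaving $|\det(\bb u^{(1)},\dots,\bb u^{(n-1)},\one)|/\prod_k\inner{\bb u^{(k)}}{\bb w}$ times $\prod_k\dd\xi_k$, as claimed. Because $f$ is homogeneous of degree $0$, its value depends only on the projective point $\bb x(\bb\xi)\in\P^{n-1}_{>0}$, so it is unaffected by the choice of chart.

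The main obstacle I anticipate is bookkeeping the determinant factor correctly across the projective quotient: one must make sense of $\det(\bb u^{(1)},\dots,\bb u^{(n-1)},\one)$ when the $\bb u^{(i)}$ are only defined in $\R^n/\one\R$, and show this is well-defined (independent of representatives modulo $\one$) and that it is precisely the Jacobian of the parametrisation of $\Exp(\mathcal C)$ as a subset of $\P^{n-1}_{>0}$ with respect to $\Omega$. I would handle this by fixing the affine chart $x_n=1$, choosing the representatives of $\bb u^{(i)}$ with last coordinate determined (or simply observing that adding multiples of $\one$ to any column leaves the determinant unchanged because of the $\one$ column, so the expression is manifestly well-defined on the quotient), and then reducing to a concrete $(n-1)\times(n-1)$ Jacobian computation. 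A minor additional point is that the boundary of the cone (where some $\lambda_k=0$, i.e. some $\xi_k=1$) and the point $\bb y=0$ (i.e. all $\xi_k=1$) form a measure-zero set, so restricting to the relative interior throughout is harmless for the integral identity. Measurability of $f$ and the absolute-value convention on the integrals (as in Theorem~\ref{thm:montecarlo}) ensure the identity makes sense even when the integrals are only conditionally defined or infinite.
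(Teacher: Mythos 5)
Your proposal is correct and follows essentially the same route as the paper's own proof: pass to logarithmic coordinates via $\Exp$, parametrise the cone by barycentric coordinates $\lambda_1,\dots,\lambda_{n-1}$, then substitute $\lambda_k=-\log\xi_k/\inner{\bb u^{(k)}}{\bb w}$. The one cosmetic difference is that you compute the Jacobian factor by working in the affine chart $x_n=1$, whereas the paper pulls back the full projective form $\widetilde\Omega$ and identifies the resulting alternating sum of $(n-1)\times(n-1)$ minors as the Laplace expansion of $\det(\bb u^{(1)},\ldots,\bb u^{(n-1)},\one)$; these are equivalent, and your well-definedness observation about shifting the $\bb u^{(i)}$ by multiples of $\one$ is exactly what the paper isolates in the remark following the lemma.
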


\begin{remark}
By slightly abusing the notation, we identified the vectors $\bb u^{(1)}, \ldots, \bb u^{(n-1)}$ with appropriate representatives in $\R^n$ in the statement of this lemma.  The value of the integral does not depend on the specific choice of representatives, because $\inner{\one}{\bb w} = 0$ and $f$ is homogeneous of degree zero. Hence, the expression on the right hand side is invariant under shifts $\bb u^{(k)} \rightarrow \bb u^{(k)} + \mu_k \one$ for all $\mu_k \in \R$.
It is also invariant under rescalings of the vectors $\bb u^{(k)} \rightarrow \lambda_k \bb u^{(k)}$ for all $\lambda_k > 0$ as it should be due to the equivalence of the cone representation. 
Even though changing the representatives of the $\bb u$-vectors modifies the vector $\bb x(\bb \xi)$, it only does so by an overall scaling, which does not modify the point in $\P^{n-1}_{>0}$, which $\bb x(\bb \xi)$ represents.
\end{remark}

\begin{proof}
Start by changing to logarithmic coordinates $\bb x = e^{\bb y}$,
\begin{align*} \int_{\Exp(\mathcal C)} \bb x^{-\bb w} f(\bb x) \Omega = \int_{\mathcal C} e^{-\inner{\bb y}{\bb w}} f(e^{\bb y}) \widetilde \Omega \end{align*}
where
$\widetilde{\Omega} = \Exp^* \Omega= \sum_{k=1}^n (-1)^{n-k}\dd y_1\wedge \ldots \wedge \widehat {\dd y_k } \wedge \ldots \wedge \dd y_n$ is the pullback of $\Omega$ under $\Exp$.
Using \emph{barycentric coordinates} $\bb y = \sum_{k=1}^{n-1} \bb u^{(k)} \lambda_k$ shows that this is equal to
\begin{gather*}   = |\det(\bb u^{(1)}, \ldots, \bb u^{(n-1)},\one)| \int_{\R_{> 0}^{n-1} } e^{-\sum_{k=1}^{n-1} \lambda_k \inner{\bb u^{(k)}}{\bb w}} f\left(e^{\sum_{k=1}^{n-1} \lambda_k \bb u^{(k)}}\right) \prod_{k=1}^{n-1} \dd \lambda_k. \end{gather*}
The form of the determinant follows from the form $\widetilde \Omega$, Laplace's expansion and 
\begin{align*} \dd y_1 \wedge \ldots \wedge \widehat{\dd y_k} \wedge \ldots \wedge \dd y_n = \begin{vmatrix} u_1^{(\mathcal C,1)} & \hdots & \widehat{ u_k^{(\mathcal C,1)}} & \hdots & u_n^{(\mathcal C,1)}\\ \vdots & \ddots& \vdots & \ddots& \\ u_1^{(\mathcal C,n-1)} & \hdots & \widehat{ u_k^{(\mathcal C,n-1)}} & \hdots & u_n^{(\mathcal C,n-1)} \end{vmatrix} \dd \lambda_1 \wedge \ldots \wedge \dd \lambda_{n-1}. \end{align*}
As $\inner{\bb y}{\bb w} > 0$ for all $\bb y \in \mathcal C\setminus\{0\}$, it follows that $\inner{\bb u^{(k)}}{\bb w} > 0$ for all $k \in \{ 1,\ldots, n-1\}$.
We can therefore change variables via $\lambda_k = -\frac{1}{\inner{\bb u^{(k)}}{\bb w}} \log \xi_k$ which proves the statement.
\end{proof}

With these tools at hand, we are ready to give our tropical formulation of geometric sector decomposition:

\begin{theorem}[Geometric sector decomposition]
\label{thm:secdec}
Let $\mathcal A$ and $\mathcal B$ be the polytopes defined in Theorem~\ref{thm:convergence}. If $\mathcal A$ and $\mathcal B$ fulfill the
requirements \normalfont{R1} and \normalfont{R2} of Theorem~\ref{thm:convergence}
and $\mathcal M_{\mathcal{AB}}^\Delta \subset \mathcal F_{\mathcal{AB}}^\Delta/\one \R$ is the set of maximal cones, i.e.\ the cones of maximal dimension, in a simplicial refinement of the reduced common normal fan of $\mathcal A$ and $\mathcal B$, then we can write the integral 
\begin{gather*}  I[f] = \int_{\P_{>0}^{n-1}} \frac{\prod_{i} a_{i}^\tr(\bb x)^{\Re \nu_i}}{\prod_{j} b_{j}^\tr(\bb{x})^{\Re \rho_j}} f(\bb x) \Omega.      \end{gather*}
as a sum 
$I[f] = \sum_{\mathcal C \in \mathcal M_{\mathcal{AB}}^\Delta} I_{\mathcal C}[f]$ with
\begin{align} \label{eq:evalCintegral} I_{\mathcal C}[f] &= \frac{ \left|\det(\bb u^{(\mathcal C,1)}, \ldots, \bb u^{(\mathcal C,n-1)},\one) \right| }{ \prod_{k=1}^{n-1} \inner{\bb u^{(\mathcal C,k)}}{\bb w^{(\mathcal C)}} } \int_{[0,1]^{n-1}} f \left( \bb x^{(\mathcal C)}(\bb \xi) \right) \prod_{k=1}^{n-1} \dd \xi_k, \end{align}
where 
\begin{itemize}
\item
$f:\P_{>0}^{n-1} \rightarrow \C$ is a measurable homogeneous function of degree $0$,
\item
$\bb w^{(\mathcal C)} = \bb w^{(\mathcal C)}_{\mathcal B} - \bb w^{(\mathcal C)}_{\mathcal A}$ with
some $\bb w^{(\mathcal C)}_{\mathcal A} \in \mathcal A$ and $\bb w^{(\mathcal C)}_{\mathcal B} \in \mathcal B$ such that
$\inner{\bb y}{\bb w^{(\mathcal C)}_{\mathcal A} } = \max_{\bb v \in \mathcal{A}} \inner{\bb y}{\bb v}$ and 
$\inner{\bb y}{\bb w^{(\mathcal C)}_{\mathcal B} } = \max_{\bb v \in \mathcal{B}} \inner{\bb y}{\bb v}$ for all $\bb y \in \mathcal C$,
\item
the vectors $\bb u^{(\mathcal C,1)}, \ldots, \bb u^{(\mathcal C,n-1)} \in \R^n/\one \R$ span the simplicial cone $\mathcal C$ such that $\mathcal C = \{ \sum_{k=1}^{n-1} \lambda_k \bb u^{(\mathcal C,k)}: \lambda_k \geq 0 \} \subset \R^n/\one \R$,
\item
$\bb x^{(\mathcal C)}(\bb \xi) \in \Exp(\mathcal C)$ is given component-wise by $x_k^{(\mathcal C)} = \prod_{i = 1}^{n-1} \xi_i^{-u_k^{(\mathcal C,i)}/(\inner{\bb u^{(\mathcal C,i)}}{\bb w^{(\mathcal C)}})}$ and
\item
the prefactor $ \frac{ \left|\det(\bb u^{(\mathcal C,1)}, \ldots, \bb u^{(\mathcal C,n-1)},\one) \right| }{ \prod_{k=1}^{n-1} \inner{\bb u^{(\mathcal C,k)}}{\bb w^{(\mathcal C)}} }$ is finite and positive for each $\mathcal C \in \mathcal M_{\mathcal{AB}}^\Delta$.
\end{itemize}
\end{theorem}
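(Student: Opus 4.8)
The plan is to split the integration domain $\P^{n-1}_{>0}$ along the maximal cones of the simplicial fan, use Lemma~\ref{lmm:abtrop_w} to recognise the tropical prefactor as a monomial on each piece, and then apply Lemma~\ref{lmm:cone_integral} to convert each piece into an integral over the unit cube. First I would note that the simplicial refinement $\mathcal F_{\mathcal{AB}}^\Delta/\one\R$ is a \emph{complete} fan on $\R^n/\one\R$: it refines the reduced common normal fan $\mathcal F_{\mathcal{AB}}/\one\R$, which is complete because it is the reduced common refinement of the two complete normal fans of $\mathcal A$ and $\mathcal B$, and refining a fan does not change the union of its cones. Since $\R^n/\one\R$ has dimension $n-1$, the cones in $\mathcal M_{\mathcal{AB}}^\Delta$ are precisely the $(n-1)$-dimensional ones; their union is all of $\R^n/\one\R$, and any two distinct ones meet only in a common face of dimension at most $n-2$. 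Pushing this through the smooth bijection $\Exp:\R^n/\one\R\to\P^{n-1}_{>0}$, the sets $\Exp(\mathcal C)$, $\mathcal C\in\mathcal M_{\mathcal{AB}}^\Delta$, cover $\P^{n-1}_{>0}$, and their pairwise intersections $\Exp(\mathcal C\cap\mathcal C')$ are lower-dimensional submanifolds, hence null sets for the smooth top-form $\Omega$. Therefore $I[f]=\sum_{\mathcal C\in\mathcal M_{\mathcal{AB}}^\Delta}\int_{\Exp(\mathcal C)}\frac{\prod_i a_i^\tr(\bb x)^{\Re\nu_i}}{\prod_j b_j^\tr(\bb x)^{\Re\rho_j}}f(\bb x)\,\Omega$.

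Next, fix $\mathcal C\in\mathcal M_{\mathcal{AB}}^\Delta$. Since $\mathcal C$ is contained in some cone of $\mathcal F_{\mathcal{AB}}/\one\R$, Lemma~\ref{lmm:abtrop_w} applies (its proof only uses that $\mathcal C$ sits inside the normal cones of some faces of $\mathcal A$ and $\mathcal B$), so on $\Exp(\mathcal C)$ the tropical prefactor equals the monomial $\bb x^{-\bb w^{(\mathcal C)}}$ with $\bb w^{(\mathcal C)}=\bb w^{(\mathcal C)}_{\mathcal B}-\bb w^{(\mathcal C)}_{\mathcal A}$ as in the statement, satisfying $\inner{\one}{\bb w^{(\mathcal C)}}=0$ and $\inner{\bb y}{\bb w^{(\mathcal C)}}>0$ for all $\bb y\in\mathcal C\setminus\{0\}$. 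Being a maximal cone of a pointed simplicial fan in $\R^n/\one\R$, $\mathcal C$ is a pointed simplicial cone spanned by $n-1$ linearly independent vectors $\bb u^{(\mathcal C,1)},\ldots,\bb u^{(\mathcal C,n-1)}$. Applying Lemma~\ref{lmm:cone_integral} with these vectors and $\bb w=\bb w^{(\mathcal C)}$ gives exactly $\int_{\Exp(\mathcal C)}\bb x^{-\bb w^{(\mathcal C)}}f(\bb x)\,\Omega=I_{\mathcal C}[f]$, with $\bb x^{(\mathcal C)}(\bb\xi)$ as stated; summing over $\mathcal C$ yields the decomposition. The prefactor is finite and positive because $\det(\bb u^{(\mathcal C,1)},\ldots,\bb u^{(\mathcal C,n-1)},\one)\neq 0$ (the $\bb u^{(\mathcal C,k)}$ are linearly independent in $\R^n/\one\R$, so together with $\one$ they form a basis of $\R^n$) and each $\inner{\bb u^{(\mathcal C,k)}}{\bb w^{(\mathcal C)}}>0$ since $\bb u^{(\mathcal C,k)}\in\mathcal C\setminus\{0\}$.

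I expect the main obstacle to be the bookkeeping in the first step: making precise that the simplicial refinement is a complete fan whose maximal cones tile $\R^n/\one\R$ with only lower-dimensional overlaps, and that these overlaps — and hence the boundaries of the $\Exp(\mathcal C)$ — do not affect the integral of the top-form $\Omega$, so that $I[f]$ genuinely breaks up as a finite sum of the cone integrals. Once this domain decomposition is in place, the rest is a direct substitution of Lemmas~\ref{lmm:abtrop_w} and~\ref{lmm:cone_integral}.
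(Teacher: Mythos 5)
Your proof is correct and follows essentially the same route as the paper: decompose the integration domain along the exponentiated maximal cones of the simplicial refinement, discard the lower-dimensional overlaps as null sets, apply Lemma~\ref{lmm:abtrop_w} (noting, as you do, that it extends to sub-cones of the common refinement) to turn the tropical prefactor into a monomial $\bb x^{-\bb w^{(\mathcal C)}}$ on each piece, and then invoke Lemma~\ref{lmm:cone_integral} to convert each cone integral into the stated unit-cube integral. Your version is slightly more explicit about the completeness of the refined fan and the measure-zero status of the cone boundaries, which the paper asserts more briefly, but there is no substantive difference.
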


\begin{proof}
The fan $\mathcal F_{\mathcal{AB}}^\Delta/\one \R$ is complete, i.e.\ it corresponds to a partition of $\R^{n}/\one \R = \biguplus_{\mathcal C \in F_{\mathcal{AB}}^\Delta/\one \R} \mathcal C$. 
Because $\Exp: \R^n/\one \R \rightarrow \P^{n-1}_{>0}$ is smooth and bijective this partition gives also a partition of $\P^{n-1}_{>0}= \biguplus_{\mathcal C \in F_{\mathcal{AB}}^\Delta/\one \R} \Exp (\mathcal C)$. 
Since we would like to integrate over $\P^{n-1}_{>0}$ or (equivalently over $\R^n/\one \R$) it is enough to only consider the cones of maximal dimension $\mathcal M_{\mathcal{AB}}^\Delta \subset \mathcal F_{\mathcal{AB}}^\Delta/\one \R$ as other cones in $\mathcal F_{\mathcal{AB}}^\Delta/\one \R$ only describe measure zero subsets of $\P^{n-1}_{>0}$. Hence,
\begin{align*}  I &= \sum_{\mathcal C \in \mathcal M_{\mathcal{AB}}^\Delta} I_{\mathcal C} & I_{\mathcal C} &= \int_{\Exp(\mathcal C)} \frac{\prod_{i} a_{i}^\tr(\bb x)^{\Re \nu_i}}{\prod_{j} b_{j}^\tr(\bb{x})^{\Re \rho_j}} f(\bb x) \Omega. \end{align*}
Because each cone $\mathcal C \in \mathcal M_{\mathcal{AB}}^\Delta$ refines a cone in $\mathcal F_{\mathcal{AB}}/\one \R$ and because of Lemma~\ref{lmm:abtrop_w}, 
\begin{align*}  I_{\mathcal C} &= \int_{\Exp(\mathcal C)} \bb x^{-\bb w^{(\mathcal C)}} f(\bb x) \Omega \text{ for all } \mathcal C \in \mathcal M_{\mathcal{AB}}^\Delta, \end{align*}
where 
$\bb w^{(\mathcal C)} = \bb w^{(\mathcal C)}_{\mathcal B} - \bb w^{(\mathcal C)}_{\mathcal A}$ with
some $\bb w^{(\mathcal C)}_{\mathcal A} \in \mathcal A$ and $\bb w^{(\mathcal C)}_{\mathcal B} \in \mathcal B$ such that
$\inner{\bb y}{\bb w^{(\mathcal C)}_{\mathcal A} } = \max_{\bb v \in \mathcal{A}} \inner{\bb y}{\bb v}$ and 
$\inner{\bb y}{\bb w^{(\mathcal C)}_{\mathcal B} } = \max_{\bb v \in \mathcal{B}} \inner{\bb y}{\bb v}$ for all $\bb y \in \mathcal C$. By Lemma~\ref{lmm:abtrop_w}, we also have $\inner{\one}{\bb w^{(\mathcal C)}} = 0$ and $\inner{\bb y}{\bb w^{(\mathcal C)}} > 0$ for all $\bb y \in \mathcal C \setminus \{0\}$.

Eq.~\eqref{eq:evalCintegral} follows from Lemma~\ref{lmm:cone_integral}, because  we can always pick a set of generators $\bb u^{(\mathcal C,1)}, \ldots, \bb u^{(\mathcal C,n-1)} \in \R^n/\one \R$ for every simplicial cone $\mathcal C \in \mathcal M_{\mathcal{AB}}^\Delta$.
As $\inner{\bb y}{\bb w^{(\mathcal C)}} > 0$ for all $\bb y \in \mathcal C \setminus \{0\}$, we also have $\inner{\bb u^{(\mathcal C,k)}}{\bb w^{(\mathcal C)}} > 0$ for all $k \in \{ 1,\ldots,n-1\}$. The positivity of the determinant is obvious because of the linear independence of the vectors $\bb u^{(\mathcal C,k)}$.
\end{proof}

If we specify $f(\bb x) = R_{a/b}(\bb x)$ given by 
\begin{align} \label{eq:def_residualab} R_{a/b}(\bb x) = \frac{\prod_{i} a_{i}(\bb x)^{\nu_i}/a_{i}^\tr(\bb x)^{\Re \nu_i} }{\prod_{j} b_{j}(\bb x)^{\rho_j}/b_{j}^\tr(\bb x)^{\Re \rho_j}}, \end{align}
in Theorem~\ref{thm:secdec}, we recover the integral in eq.~\eqref{eq:integral}.

\begin{proof}[Proof of Theorem~\ref{thm:convergence}]
We only need to prove that each sector integral in the geometric sector decomposition of Theorem~\ref{thm:secdec} with $f(\bb x) = R_{a/b}(\bb x)$ from eq.~\eqref{eq:def_residualab} is finite. As all the denominator polynomials $\{b_j\}$ are completely non-vanishing, Corollary~\ref{crll:homo_approx} implies that $ |R_{a/b}(\bb x)|$ is bounded on $\P_{>0}^{n-1}$. Hence, each integral $I_{\mathcal C}[R_{a/b}]$ is finite.
\end{proof}

Theorem~\ref{thm:secdec} provides a sector decomposition as it was formulated in eq.~\eqref{eq:secdec_integral}, because the sector integrands in Theorem~\ref{thm:secdec} are bounded as long as the function $f$ is bounded on $\P_{>0}^{n-1}$. This way, Theorem~\ref{thm:secdec} not only ensures finiteness of the integral in eq.~\eqref{eq:integral} under appropriate conditions, but also allows to evaluate the integral via Monte Carlo quadrature. 

If we have triangulated the reduced normal fan $\mathcal F_{\mathcal{AB}}/\one \R$, i.e.~we have computed a simplicial refinement $\mathcal F^\Delta_{\mathcal{AB}}/\one \R$ and stored the vectors $\bb u^{(\mathcal C,1)}, \ldots, \bb u^{(\mathcal C,n-1)}$ and $\bb w^{(\mathcal C)}$ for each maximal cone $\mathcal C \in \mathcal M^\Delta_{\mathcal{AB}} \subset \mathcal F^\Delta_{\mathcal{AB}}/\one \R$ in a table, then we can estimate the integral using Algorithm~\ref{alg:basic}.
\begin{algorithm}[H]
\begin{algorithmic}[0]
\ForAll{maximal cones $\mathcal C \in \mathcal M_{\mathcal{AB}}^\Delta$}%
\For{ $\ell \in 1,\ldots,N$  } 
\State{Draw a random vector $\bb \xi \in [0,1]^{n-1}$ from the distribution $1=\int_{[0,1]^{n-1}} \prod_{i=1}^{n-1} d\xi_i$.  }
\State{Set $x_k^{(\ell)} = \prod_{i = 1}^{n-1} \xi_i^{-u_k^{(\mathcal C,i)}/\inner{\bb u^{(\mathcal C,i)}}{\bb w^{(\mathcal C)}}}$ for all $k=1,\ldots,n$. }
\EndFor
\State{%
Set
$ I_{\mathcal C}^{(N)}[R_{a/b}] = \frac{1}{N} \frac{ \left| \det(\bb u^{({\mathcal C},1)}, \ldots, \bb u^{(\mathcal C, n-1)},\one ) \right| }{ \prod_{k=1}^{n-1} \inner{\bb u^{(\mathcal C,k)}}{\bb w^{(\mathcal C)}} } \sum_{\ell=1}^N R_{a/b}(\bb x^{(\ell)}). $}
\EndFor
\State{Return $I^{(N)} = \sum_{{\mathcal C} \in \mathcal M_{\mathcal{AB}}^\Delta}I_{\mathcal C}^{(N)}[R_{a/b}]$.}
\end{algorithmic}
\caption{Basic Monte Carlo quadrature of Euler-Mellin integrals}
\label{alg:basic}
\end{algorithm}
\begin{proposition}
\label{prop:alg_basic}
If the conditions of Theorem~\ref{thm:convergence} are fulfilled, then the random value $I^{(N)}$ returned by Algorithm~\ref{alg:basic} has expectation value equal to the integral in eq.~\eqref{eq:integral}, $I = \E[I^{(N)}]$ and $\var [I^{(N)}] = \frac{C}{N}$ with some constant $C \geq 0$.
\end{proposition}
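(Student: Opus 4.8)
The plan is to read Algorithm~\ref{alg:basic} as $|\mathcal M_{\mathcal{AB}}^\Delta|$ mutually independent applications of basic Monte Carlo quadrature (Theorem~\ref{thm:montecarlo}) to the individual sector integrals of the geometric sector decomposition in Theorem~\ref{thm:secdec} with $f = R_{a/b}$, followed by a sum. First I would note that, specializing Theorem~\ref{thm:secdec} to $f = R_{a/b}$ as in eq.~\eqref{eq:def_residualab}, one recovers the integral $I$ of eq.~\eqref{eq:integral}, so that $I = \sum_{\mathcal C \in \mathcal M_{\mathcal{AB}}^\Delta} I_{\mathcal C}[R_{a/b}]$ with each $I_{\mathcal C}[R_{a/b}]$ given by eq.~\eqref{eq:evalCintegral}.

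Fix a maximal cone $\mathcal C$. Inside the algorithm's two loops the vectors $\bb \xi$ are drawn i.i.d.\ from the uniform probability measure $\mu = \prod_{i=1}^{n-1}\dd \xi_i$ on $\Gamma = [0,1]^{n-1}$, and $I_{\mathcal C}^{(N)}[R_{a/b}]$ is the corresponding empirical mean $\frac1N\sum_{\ell=1}^N R_{a/b}(\bb x^{(\mathcal C)}(\bb \xi^{(\ell)}))$ times the deterministic prefactor of eq.~\eqref{eq:evalCintegral}. By Corollary~\ref{crll:homo_approx} --- applicable since requirement \textbf{R3} makes the $\{b_j\}$ completely non-vanishing --- the function $|R_{a/b}|$ is bounded on $\P^{n-1}_{>0}$, exactly as used in the proof of Theorem~\ref{thm:convergence}; hence $\bb \xi \mapsto R_{a/b}(\bb x^{(\mathcal C)}(\bb \xi))$ is bounded and therefore both $\mu$-integrable and $\mu$-square integrable. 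Theorem~\ref{thm:montecarlo} then yields $\E[I_{\mathcal C}^{(N)}[R_{a/b}]] = I_{\mathcal C}[R_{a/b}]$ and $\var[I_{\mathcal C}^{(N)}[R_{a/b}]] = \frac1N (P_{\mathcal C})^2\, \var[R_{a/b}(\bb x^{(\mathcal C)}(\bb \xi))]$, where $P_{\mathcal C} = \left|\det(\bb u^{(\mathcal C,1)},\ldots,\bb u^{(\mathcal C,n-1)},\one)\right|/\prod_{k=1}^{n-1}\inner{\bb u^{(\mathcal C,k)}}{\bb w^{(\mathcal C)}}$ is finite and positive by Theorem~\ref{thm:secdec}, and the variance on the right is finite by the boundedness just noted.

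It remains to assemble the cones. Linearity of expectation gives $\E[I^{(N)}] = \sum_{\mathcal C}\E[I_{\mathcal C}^{(N)}[R_{a/b}]] = \sum_{\mathcal C} I_{\mathcal C}[R_{a/b}] = I$. Since the outer loop draws a fresh batch of samples for every $\mathcal C$, the summands $\{I_{\mathcal C}^{(N)}[R_{a/b}]\}_{\mathcal C}$ are independent, so variances add: $\var[I^{(N)}] = \sum_{\mathcal C}\var[I_{\mathcal C}^{(N)}[R_{a/b}]] = \frac1N\sum_{\mathcal C}(P_{\mathcal C})^2\var[R_{a/b}(\bb x^{(\mathcal C)}(\bb \xi))]$, and setting $C = \sum_{\mathcal C}(P_{\mathcal C})^2\var[R_{a/b}(\bb x^{(\mathcal C)}(\bb \xi))]$, a finite sum of non-negative finite terms, gives $\var[I^{(N)}] = C/N$ with $C \geq 0$.

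The one step needing real care --- and the main obstacle --- is the finiteness of each $\var[R_{a/b}(\bb x^{(\mathcal C)}(\bb \xi))]$, i.e.\ square integrability of the sector integrand: a raw Euler-Mellin integrand need not be square integrable over its domain, as noted in Section~\ref{sec:preliminaries}. This is precisely what the tropical factorization of the integrand achieves: the monomial growth is absorbed into the $a_j^\tr, b_j^\tr$ factors and into the Jacobian of Lemma~\ref{lmm:cone_integral}, leaving the bounded residual $R_{a/b}$, and boundedness against a probability measure immediately gives square integrability (since $R_{a/b}$ may be complex, it is the estimate on $|R_{a/b}|$ that matters, in line with the convention of Theorem~\ref{thm:montecarlo}). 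Everything else --- additivity of expectation and of variance under independence --- is routine.
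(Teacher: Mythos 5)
Your argument is correct and is essentially the proof given in the paper: apply Theorem~\ref{thm:montecarlo} cone-by-cone to the sector integrals of Theorem~\ref{thm:secdec} with $f = R_{a/b}$, deduce square integrability from the boundedness of $|R_{a/b}|$ via Corollary~\ref{crll:homo_approx} (requirement \textbf{R3}), and then sum expectations and variances. The one thing you make explicit that the paper leaves tacit is the independence of the per-cone batches, which justifies the additivity $\var[I^{(N)}] = \sum_{\mathcal C}\var[I_{\mathcal C}^{(N)}]$; this is a welcome clarification, not a departure.
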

\begin{proof}
Algorithm~\ref{alg:basic} is an application of Theorem~\ref{thm:montecarlo} on the integral $I_\mathcal{C}[f]$ for each cone $\mathcal C \in \mathcal M_{\mathcal{AB}}^\Delta$ from Theorem~\ref{thm:secdec} with $f(\bb x) = R_{a/b}(\bb x)$: 
\begin{align*} \int_{[0,1]^{n-1}} R_{a/b}\left( \bb x^{(\mathcal C)}(\bb \xi) \right) \prod_{k=1}^{n-1} \dd \xi_k. \end{align*}
As $|R_{a/b}(\bb x)|$ is bounded on $\P_{>0}^{n-1}$ and $\bb x^{(\mathcal C)}(\bb \xi) \in \P_{>0}^{n-1}$ by construction, the integrand is bounded and therefore also square integrable. Hence, there is a constant $C_{\mathcal C}\geq 0$ for each cone integral such that $\var[I_\mathcal{C}^{(N)}] = C_{\mathcal C}/N$ and $\var [I^{(N)}] = \sum_{{\mathcal C} \in \mathcal M_{\mathcal{AB}}^\Delta} \var [ I_{\mathcal C}^{(N)}] = C/N$.
\end{proof}

Effectively, Proposition~\ref{prop:alg_basic} ensures that we can consider the random variable $I^{(N)}$ as an approximation for $I$ with relative accuracy $\delta = \frac{1}{I} \sqrt{C/N}$. Estimating the constant $C$ is usually easy in practice: as long as sufficiently high powers of the integrand $f(\bb x)$ are integrable, we can also use Theorem~\ref{thm:montecarlo} to estimate $\var[ f(\bb x)]$.

Variants of Algorithm~\ref{alg:basic} are implemented e.g.\ as \texttt{SecDec-3}~\cite{Borowka:2015mxa} and as \texttt{FIESTA~3}~\cite{Smirnov:2013eza}. 
Both these implementations provide a variety of different ways to perform the preprocessing triangulation step which computes $\mathcal M_{\mathcal{AB}}^\Delta$. A dedicated tool to perform such a triangulation is \texttt{Normaliz} \cite{bruns2010normaliz} which is also used internally in \texttt{SecDec-3}. %
Subsequently, both programs use a version of the VEGAS algorithm \cite{Lepage:1977sw,Hahn:2004fe} to numerically integrate $I_{\mathcal C}[f]$ in eq.~\eqref{eq:evalCintegral} over the unit hypercube $[0,1]^{n-1}$ or, equivalently, to execute the inner loop of Algorithm~\ref{alg:basic} for each individual cone $\mathcal C \in \mathcal M_{\mathcal{AB}}^\Delta$. 

We can estimate the computational complexity of the algorithm by counting the number of necessary evaluations of the function $R_{a/b}(\bb x)$. This is justified because we can assume that the runtime to evaluate $R_{a/b}(\bb x)$ overshadows the time it takes to compute a random vector $\bb \xi \in [0,1]^{n-1}$ and the value of $\bb x(\bb \xi) \in\P^{n-1}_{>0}$ from it. 
Therefore, the estimation step summarized in Algorithm~\ref{alg:basic} needs $N |\mathcal M_{\mathcal{AB}}^\Delta|$ evaluations to produce the estimate $I^{(N)}$ for the integral in eq.~\eqref{eq:integral}. Equivalently, as the relative accuracy $\delta \approx I^{(N)}/I$ of the resulting estimate is inverse proportional to $\sqrt{N}$, the number of evaluations needed is proportional to $\delta^{-2} |\mathcal M_{\mathcal{AB}}^\Delta|$ to achieve an estimate of $\delta$ accuracy.

A severe bottleneck is the number of maximal cones $|\mathcal M_{\mathcal{AB}}^\Delta|$ which tends to grow exponentially with growing dimension $n$ of the problem. A particularly unsatisfying aspect of this bottleneck is that the value of the individual sector contributions $I_{\mathcal C}$ typically varies quite much in magnitude. Consequently, only a fraction of the geometric sector contributions in eq.~\eqref{eq:evalCintegral} are relevant for the overall integral $I$ and much of the computational effort spent to estimate each of the integrals $I_{\mathcal C}$ is wasted. In the next section, we will explain how to overcome this bottleneck.

\section{Tropical sampling}
\label{sec:trop_sampling}
In summary, the strategy to overcome this problem is the following: instead of numerically integrating each of the sector integrals individually and eventually summing all the resulting numbers to obtain an estimate for the integral in eq.~\eqref{eq:integral}, we can use a more `inclusive' Monte Carlo approach, where we evaluate both the individual integrals $I_{\mathcal C}[R_{a/b}]$ in eq.~\eqref{eq:evalCintegral} and \emph{the sum over these integrals} $\sum_{\mathcal C \in \mathcal M_{\mathcal{AB}}^\Delta} I_{\mathcal C}[R_{a/b}]$ via Monte Carlo methods. This approach is much more efficient than the traditional one because there is a canonical way to perform \emph{importance sampling} on the sum. That means that we can expose the individual sectors to our sampler `undemocratically' such that more important sectors are sampled more often than less important contributions.

To do this it is convenient to define a tropically approximated version of the integral in eq.~\eqref{eq:integral}:
\begin{align} \label{eq:integral_trop} I^\tr = \int_{\P_{>0}^{n-1}} \frac{\prod_{i} a^\tr_{i}(\bb{x})^{\Re \nu_i}}{\prod_{j} b^\tr_{j}(\bb{x})^{\Re \rho_j}} \Omega. \end{align}
Such tropically approximated integrals have been considered as a simple avatar of \emph{period Feynman integrals} \cite{Panzer:2019yxl} and identified to appear in the weak string coupling limit \cite{Arkani-Hamed:2019mrd}. Moreover, this tropically approximated integral also gives rise to the \emph{canonical function} of a polytope under certain conditions on the polynomials $\{a_i\}$ and $\{b_j\}$, which has applications in the theory of scattering amplitudes \cite{Arkani-Hamed:2017tmz,Arkani-Hamed:2019mrd}.

It follows from Theorem~\ref{thm:secdec} with $f(\bb x) =1$ that $I^\tr$ is finite and that $I^\tr > 0$, provided that the conditions \normalfont{R1} and \normalfont{R2} on the polytopes $\mathcal A$ and $\mathcal B$ in Theorem~\ref{thm:convergence} are fulfilled. Moreover, the integrand in eq.~\eqref{eq:integral_trop} is obviously positive for all $\bb x \in \P^{n-1}_{>0}$. Hence, we can define a probability distribution given by the differential form,
\begin{align} \label{eq:mu_probability} \mu^\tr= \frac{1}{I^\tr} \frac{\prod_{i} a_{i}^\tr(\bb x)^{\Re \nu_i}}{\prod_{j} b_{j}^\tr(\bb{x})^{\Re \rho_j}} \Omega, \end{align}
such that $1 =\int_{\P_{>0}^{n-1}} \mu^\tr$. The integral in eq.~\eqref{eq:integral} can now be written as,
\begin{align*} I = I^\tr \int_{\P_{>0}^{n-1}} R_{a/b}(\bb x) \mu^\tr, \end{align*}
with $R_{a/b}$ as defined in eq.~\eqref{eq:def_residualab}. 
As $\mu^\tr$ is a properly normalized probability distribution on $\P^{n-1}_{>0}$, we 
can use Theorem~\ref{thm:montecarlo} to get a direct estimation algorithm for $I$ from this, provided that we have a reasonably efficient way to sample from the distribution $\mu^\tr$. %
\begin{algorithm}[H]
\begin{algorithmic}
\For{$\ell \in 1,\ldots, N$}
\State{Generate a random sample $\bb x^{(\ell)} \in \P^{n-1}_{>0}$ distributed as $\mu^\tr$ from eq.~\eqref{eq:mu_probability}.}
\EndFor
\State{Return $I^{(N)} = \frac{I^\tr}{N} \sum_{\ell=1}^N R_{a/b}(\bb x^{(\ell)})$.}
\end{algorithmic}
\caption{Monte Carlo quadrature using tropical sampling}
\label{alg:tropical_sampling}
\end{algorithm}

Algorithm~\ref{alg:tropical_sampling} is not obviously simpler or more efficient than Algorithm~\ref{alg:basic}, as the complicated part---generating a sample from the random distribution given by $\mu^\tr$---has been conveniently out-sourced. 

A simple method to sample from $\mu^\tr$ is to again use a geometric sector decomposition. By Theorem~\ref{thm:secdec} the tropically approximated integral in eq.~\eqref{eq:integral_trop} can be written as a sum,
\begin{align} \label{eq:def_Itr_secdec} I^{\tr} = \sum_{ \mathcal C \in \mathcal M_{\mathcal{AB}}^\Delta} I_{\mathcal C}^\tr \text{ with } I_{\mathcal C}^\tr &= \frac{ \left|\det(\bb u^{(\mathcal C,1)}, \ldots, \bb u^{(\mathcal C,n-1)},\one) \right| }{ \prod_{k=1}^{n-1} \inner{\bb u^{(\mathcal C,k)}}{\bb w^{(\mathcal C)}} }, \end{align}
where $I_{\mathcal C}^\tr > 0$ for all maximal cones $\mathcal C \in \mathcal M^\Delta_{\mathcal{AB}}$. Hence, we can interpret $I^\tr_\mathcal{C}/I^\tr$ as a probability assigned to each cone $\mathcal C \in \mathcal M^\Delta_{\mathcal{AB}}$ and draw a random cone accordingly. Drawing a random sample from a finite discrete probability distribution is a classic problem. It can be solved in constant time independent of the number of possible outcomes if a table of the probabilities of the respective outcomes is appropriately preprocessed, for instance by using the \emph{alias method} \cite[Section~3.4.1]{knuth2014art}. 
Provided that we have generated such a table together with a table of appropriate values of $\bb w^{(\mathcal C)}$ and $\bb u^{(\mathcal C,1)}, \ldots, \bb u^{(\mathcal C,n-1)}$ we can execute the following algorithm:
\begin{algorithm}[H]
\begin{algorithmic}
\State{Draw a random cone $\mathcal C \in \mathcal M^\Delta_{\mathcal{AB}}$ with probability $I^\tr_\mathcal{C}/I^\tr$.}
\State{Draw a random vector $\bb \xi \in [0,1]^{n-1}$ from the uniform distribution.}
\State{Set $x_k = \prod_{i = 1}^{n-1} \xi_i^{-u_k^{(\mathcal C,i)}/(\inner{\bb u^{(\mathcal C,i)}}{\bb w^{(\mathcal C)}})}$ for all $k\in \{1,\ldots,n\}$. }
\State{Return $\bb x=[x_1:\ldots:x_n] \in \Exp \mathcal C \subset \P^{n-1}_{>0}$ and $\mathcal C$.}
\end{algorithmic}
\caption{Algorithm to generate a sample with distribution $\mu^\tr$}
\label{alg:mu_sample}
\end{algorithm}

\begin{proposition}
Algorithm~\ref{alg:mu_sample} generates a sample $\bb x \in \P^{n-1}_{>0}$, distributed as $\mu^\tr$ in eq.~\eqref{eq:mu_probability}.
\end{proposition}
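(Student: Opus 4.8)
\textit{Proof proposal.} The plan is to compute the law of the point $\bb x$ returned by Algorithm~\ref{alg:mu_sample} directly, by conditioning on the cone $\mathcal C$ drawn in its first step, and then to match this law with $\mu^\tr$ from eq.~\eqref{eq:mu_probability} using Theorem~\ref{thm:secdec}. Fix an arbitrary measurable set $U \subset \P^{n-1}_{>0}$ and, viewing its indicator $\one_U$ as a function on $\P^{n-1}_{>0}$, note that it is measurable and homogeneous of degree $0$, so it is an admissible choice of $f$ in Theorem~\ref{thm:secdec}.

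First I would observe that, conditionally on the cone $\mathcal C$ having been drawn — which by construction happens with probability $I^\tr_{\mathcal C}/I^\tr$ — the returned point equals $\bb x^{(\mathcal C)}(\bb\xi)$ for $\bb\xi$ uniform on $[0,1]^{n-1}$, and this point lies in $\Exp(\mathcal C)$ for every $\bb\xi$. Hence, by the law of total probability,
\[
\P[\bb x \in U] = \sum_{\mathcal C \in \mathcal M^\Delta_{\mathcal{AB}}} \frac{I^\tr_{\mathcal C}}{I^\tr} \int_{[0,1]^{n-1}} \one_U\!\left(\bb x^{(\mathcal C)}(\bb\xi)\right) \prod_{k=1}^{n-1}\dd\xi_k .
\]
Since the prefactor of $I_{\mathcal C}[f]$ in eq.~\eqref{eq:evalCintegral} is precisely $I^\tr_{\mathcal C}$ as given in eq.~\eqref{eq:def_Itr_secdec} (equivalently, by Lemma~\ref{lmm:cone_integral} together with Lemma~\ref{lmm:abtrop_w}), each summand above equals $\frac{1}{I^\tr}\int_{\Exp(\mathcal C)} \frac{\prod_i a_i^\tr(\bb x)^{\Re\nu_i}}{\prod_j b_j^\tr(\bb x)^{\Re\rho_j}}\,\one_U(\bb x)\,\Omega$. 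Summing over the maximal cones and using that the sets $\Exp(\mathcal C)$, $\mathcal C \in \mathcal M^\Delta_{\mathcal{AB}}$, partition $\P^{n-1}_{>0}$ up to a set of measure zero — exactly as in the proof of Theorem~\ref{thm:secdec} — yields
\[
\P[\bb x \in U] = \frac{1}{I^\tr}\int_{\P^{n-1}_{>0}} \frac{\prod_i a_i^\tr(\bb x)^{\Re\nu_i}}{\prod_j b_j^\tr(\bb x)^{\Re\rho_j}}\,\one_U(\bb x)\,\Omega = \int_U \mu^\tr .
\]
As $U$ was arbitrary, this says $\bb x$ is distributed according to $\mu^\tr$.

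I do not expect a genuine obstacle here; the argument is bookkeeping built on Theorem~\ref{thm:secdec}. The only points deserving a word of care are (i) that the map $\bb\xi \mapsto \bb x^{(\mathcal C)}(\bb\xi)$ indeed lands in $\Exp(\mathcal C)$ and is a bijection onto a full-measure subset of it, which follows from the barycentric substitution $\lambda_k = -\inner{\bb u^{(\mathcal C,k)}}{\bb w^{(\mathcal C)}}^{-1}\log\xi_k$ used in the proof of Lemma~\ref{lmm:cone_integral}; and (ii) that the lower-dimensional cones together with the cone boundaries contribute measure zero and may be ignored, again as in the proof of Theorem~\ref{thm:secdec}.
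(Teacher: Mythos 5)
Your proof is correct and takes essentially the same route as the paper: both condition on the cone drawn in the first step, use the law of total probability/expectation, and then invoke Theorem~\ref{thm:secdec} (via eq.~\eqref{eq:def_Itr_secdec}) to re-assemble the sum over cones into the integral of $\mu^\tr$. The only cosmetic difference is that the paper phrases the computation via a general test function $f$ and shows $\E[f(\bb x)] = \int f\,\mu^\tr$, whereas you specialize to indicators $f = \one_U$; these determine the law equally well.
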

\begin{proof}
For any test function $f:\P_{>0}^{n-1} \rightarrow \C$  and a random sample $\bb x \in \P_{>0}^{n-1}$ generated by Algorithm~\ref{alg:mu_sample}, we have
\begin{align*} \E[ f(\bb x) ] = \sum_{\mathcal C \in \mathcal M^\Delta_{\mathcal{AB}}} I^\tr_\mathcal{C}/I^\tr \int_{[0,1]^{n-1}} f( \bb x^{(\mathcal C)}(\bb \xi)) \prod_{k=1}^{n-1} \dd \xi_k. \end{align*}
Using eq.~\eqref{eq:def_Itr_secdec} and Theorem~\ref{thm:secdec} gives 
\begin{gather*} \E[ f(\bb x) ] = \frac{1}{I^\tr} \int_{\P^{n-1}_{>0}} \frac{\prod_{i} a_{i}^\tr(\bb x)^{\Re \nu_i}}{\prod_{j} b_{j}^\tr(\bb{x})^{\Re \rho_j}} f(\bb x) \Omega = \int_{\P^{n-1}_{>0}} f(\bb x) \mu^\tr. \qedhere \end{gather*}
\end{proof}

To run both Algorithms~\ref{alg:tropical_sampling} and \ref{alg:mu_sample} together we need $N$ evaluations of the function $R_{a/b}(\bb x)$. Equivalently, we need proportional to $\delta^{-2}$ evaluations to obtain an estimate $I$ of $\delta$ accuracy. This is a significant improvement over Algorithm~\ref{alg:basic} as the runtime is now independent of the number of sectors $|\mathcal M^\Delta_{\mathcal{AB}}|$. %

It has to be stressed that this suggested direct comparison between Algorithm~\ref{alg:basic} and the combination of the Algorithms~\ref{alg:tropical_sampling} and \ref{alg:mu_sample} is flawed by the inherent difference in the respective proportionality factors for $\delta^{-2}$ or equivalently, in the number of samples $N$ that results in a given accuracy. In a situation, in which the sector integrals all contribute roughly the same value to the overall integral, Algorithms~\ref{alg:tropical_sampling} and \ref{alg:mu_sample} offer no advantage over Algorithm~\ref{alg:basic}. For practical applications the values of the sector integrals tend to differ heavily in magnitude, which makes Algorithms~\ref{alg:tropical_sampling} and \ref{alg:mu_sample} favorable.

Just as for Algorithm~\ref{alg:basic} a preprocessing step needs to be performed for Algorithms~\ref{alg:tropical_sampling} and \ref{alg:mu_sample}: the triangulation $\mathcal M^\Delta_{\mathcal{AB}}$ and the associated table needs to be calculated. %
This computation is also necessary to compute the normalization factor $I^\tr=\sum_{\mathcal C\in\mathcal M^\Delta_{\mathcal{AB}}} I_{\mathcal C}^\tr$. In the best case, the time it takes to create such a table will be proportional to the number of sectors $|\mathcal M^\Delta_{\mathcal{AB}}|$, but we only need to compute this table once and can evaluate an arbitrary large number of samples afterwards. 

Therefore, even though we are still effectively constrained by the dimension of the problem, which has to be small enough for the preprocessing step to be finished in a reasonable time, this constraint on the dimension is decoupled from the achievable accuracy. %

Recall that so far, we considered completely general integrals in eq.~\eqref{eq:integral}. Although we already managed to accelerate the integration for the general case in comparison to the traditional approach, further improvements are possible if more specific properties of the integrand are used. Especially, integrals that come from physical applications are well-known to carry a very rich geometric structure, whose exploitation offers a whole new set of tools to improve numerical approximation methods. In the following, we will achieve a further improvement in runtime, memory requirement and overall complexity by using a specific structure which is exhibited by a large family of integrals. Integrals of this family appear in many contexts in high energy physics. This family consists of all integrals as in eq.~\eqref{eq:integral} where the Newton polytopes of the polynomials $\{a_i\}$ and $\{b_j\}$ are \emph{generalized permutahedra}.

\section{Generalized permutahedra}
\label{sec:genperm}

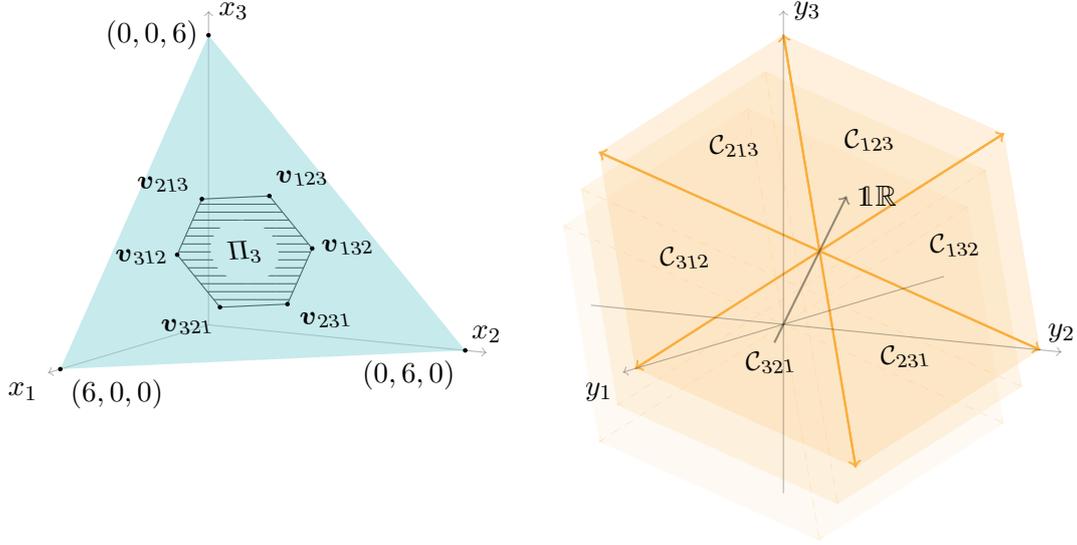
\begin{figure}

\tdplotsetmaincoords{80}{120}

\tikzset{perspective/.style= {canvas is plane={O(0,0,0)x(-1*0.70710678118,1*0.70710678118,0)y(-1*0.57735026919*0.70710678118,-1*0.57735026919*0.70710678118,2*0.57735026919*0.70710678118)}} }

\begin{subfigure}[t]{0.48\textwidth}
        \centering

\begin{tikzpicture}[tdplot_main_coords,scale=.65] \draw[thin, opacity = .3,->] (0,0,0) -- (6.5,0,0) node[anchor=north east,opacity=1]{$x_1$}; \draw[thin, opacity = .3,->] (0,0,0) -- (0,6.5,0) node[anchor=south,opacity=1]{$x_2$}; \draw[thin, opacity = .3,->] (0,0,0) -- (0,0,6.5) node[anchor=west,opacity=1]{$x_3$}; \coordinate (v123) at (1,2,3); \coordinate (v132) at (1,3,2); \coordinate (v213) at (2,1,3); \coordinate (v231) at (2,3,1); \coordinate (v312) at (3,1,2); \coordinate (v321) at (3,2,1); \draw[pattern=horizontal lines] (v123) -- (v132) -- (v231) -- (v321) -- (v312) -- (v213) -- (v123); \node[perspective](P) at (2,2,2) [circle,fill=white] {$\phantom{\Pi_3}$}; \draw[draw = none, fill = Aquamarine!50, opacity = 0.5] (6,0,0) -- (0,6,0) -- (0,0,6) -- (6,0,0); \filldraw (0,0,6) circle(1pt) node[left] {$(0,0,6)$}; \filldraw (0,6,0) circle(1pt) node[below left] {$(0,6,0)$}; \filldraw (6,0,0) circle(1pt) node[below right] {$(6,0,0)$}; \filldraw (v123) circle(1pt) node[above right,perspective] {$\bb v_{\text{123}}$}; \filldraw (v132) circle(1pt) node[right,perspective] {$\bb v_{\text{132}}$}; \filldraw (v213) circle(1pt) node[above left,perspective] {$\bb v_{\text{213}}$}; \filldraw (v231) circle(1pt) node[below right,perspective] {$\bb v_{\text{231}}$}; \filldraw (v312) circle(1pt) node[left,perspective] {$\bb v_{\text{312}}$}; \filldraw (v321) circle(1pt) node[below left,perspective] {$\bb v_{\text{321}}$}; \node[perspective] (P) at (2,2,2) {$\Pi_3$}; \foreach \i in {2,...,0} { \begin{scope}[shift={(-\i/3*3,-\i/3*3,-\i/3*3)}] \coordinate (O) at (2,2,2); \coordinate (v1) at (-2,4,4); \coordinate (v2) at (0,0,6); \coordinate (v3) at (4,-2,4); \coordinate (v4) at (6,0,0); \coordinate (v5) at (4,4,-2); \coordinate (v6) at (0,6,0); \draw[draw = none, fill = none] (v1) -- (v2) -- (v3) -- (v4) -- (v5) -- (v6) -- (v1); \end{scope} } \end{tikzpicture}
    \caption{The permutahedron $\Pi_3 \subset \R^3$ which is contained in the hyperplane $x_1+x_2+x_3 = 6$ as indicated.}
    \label{fig:permutahedron3}
    \end{subfigure}%
    \hfill
    \begin{subfigure}[t]{0.48\textwidth}
\begin{tikzpicture}[tdplot_main_coords,scale=.65] \begin{scope}[shift={(-2,-2,-2)}] \coordinate (O) at (2,2,2); \coordinate (v1) at (-2,4,4); \coordinate (v2) at (0,0,6); \coordinate (v3) at (4,-2,4); \coordinate (v4) at (6,0,0); \coordinate (v5) at (4,4,-2); \coordinate (v6) at (0,6,0); \draw[dashed,color=YellowOrange, opacity = .1,<-] (v2) -- (O) -- (v1); \draw[dashed,color=YellowOrange, opacity = .1,<-] (v3) -- (O) -- (v2); \draw[dashed,color=YellowOrange, opacity = .1,<-] (v4) -- (O) -- (v3); \draw[dashed,color=YellowOrange, opacity = .1,<-] (v5) -- (O) -- (v4); \draw[dashed,color=YellowOrange, opacity = .1,<-] (v6) -- (O) -- (v5); \draw[dashed,color=YellowOrange, opacity = .1,<-] (v1) -- (O) -- (v6); \draw[draw = none, fill = YellowOrange!50, opacity = .1] (v1) -- (v2) -- (v3) -- (v4) -- (v5) -- (v6) -- (v1); \end{scope} \begin{scope}[shift={(-1,-1,-1)}] \coordinate (O) at (2,2,2); \coordinate (v1) at (-2,4,4); \coordinate (v2) at (0,0,6); \coordinate (v3) at (4,-2,4); \coordinate (v4) at (6,0,0); \coordinate (v5) at (4,4,-2); \coordinate (v6) at (0,6,0); \draw[dashed,color=YellowOrange, opacity = .1,<-] (v2) -- (O) -- (v1); \draw[dashed,color=YellowOrange, opacity = .1,<-] (v3) -- (O) -- (v2); \draw[dashed,color=YellowOrange, opacity = .1,<-] (v4) -- (O) -- (v3); \draw[dashed,color=YellowOrange, opacity = .1,<-] (v5) -- (O) -- (v4); \draw[dashed,color=YellowOrange, opacity = .1,<-] (v6) -- (O) -- (v5); \draw[dashed,color=YellowOrange, opacity = .1,<-] (v1) -- (O) -- (v6); \draw[draw = none, fill = YellowOrange!50, opacity = .2] (v1) -- (v2) -- (v3) -- (v4) -- (v5) -- (v6) -- (v1); \end{scope} \begin{scope} \coordinate (O) at (2,2,2); \coordinate (v1) at (-2,4,4); \coordinate (v2) at (0,0,6); \coordinate (v3) at (4,-2,4); \coordinate (v4) at (6,0,0); \coordinate (v5) at (4,4,-2); \coordinate (v6) at (0,6,0); \draw[thick,color=YellowOrange, opacity = 1,<-] (v2) -- (O) -- (v1); \draw[thick,color=YellowOrange, opacity = 1,<-] (v3) -- (O) -- (v2); \draw[thick,color=YellowOrange, opacity = 1,<-] (v4) -- (O) -- (v3); \draw[thick,color=YellowOrange, opacity = 1,<-] (v5) -- (O) -- (v4); \draw[thick,color=YellowOrange, opacity = 1,<-] (v6) -- (O) -- (v5); \draw[thick,color=YellowOrange, opacity = 1,<-] (v1) -- (O) -- (v6); \draw[draw = none, fill = YellowOrange!50, opacity = .3] (v1) -- (v2) -- (v3) -- (v4) -- (v5) -- (v6) -- (v1); \end{scope} \draw[thick,color=black, opacity = .4,->] (-.5,-.5,-.5) -- (3.5,3.5,3.5) node[anchor=west,opacity=1] {$\one \R$}; \draw[thin, opacity = .3,->] (-6.5,0,0) -- (6.5,0,0) node[anchor=north east,opacity=1]{$y_1$}; \draw[thin, opacity = .3,->] (0,-4.5,0) -- (0,6.5,0) node[anchor=south,opacity=1]{$y_2$}; \draw[thin, opacity = .3,->] (0,0,-3.5) -- (0,0,6.5) node[anchor=west,opacity=1]{$y_3$}; \begin{scope} \node[perspective](C123) at (0,2,4) [circle] {$\mathcal C_{\text{123}}$}; \node[perspective](C213) at (2,0,4) [circle] {$\mathcal C_{\text{213}}$}; \node[perspective](C231) at (2,4,0) [circle] {$\mathcal C_{\text{231}}$}; \node[perspective](C312) at (4,0,2) [circle] {$\mathcal C_{\text{312}}$}; \node[perspective](C321) at (4,2,0) [circle] {$\mathcal C_{\text{321}}$}; \node[perspective](C132) at (0,4,2) [circle] {$\mathcal C_{\text{132}}$}; \end{scope} \end{tikzpicture}
    \caption{The braid arrangement fan $\mathcal F_{\Pi_3}/\one \R$ which partitions $\R^3/\one \R$ with equivalent hyperplanes orthogonal to $\one \R$ indicated.}
    \label{fig:braid_arrangement_fan}
    \end{subfigure}%
    \caption{The permutahedron $\Pi_3$ and its reduced normal fan. Vertices and maximal cones are both labelled by the associated permutations.}
    \label{fig:permutahedron_fan}
\end{figure}

The permutahedron $\Pi_n$ is an ($n-1$)-dimensional polytope in $\R^n$. It can be defined as the convex hull of $n!$ vertices determined by permutations in $S_n$:
\begin{align*} \Pi_n = \left\{ \sum_{\sigma \in S_n} \lambda_\sigma \bb v^{(\sigma)} : \sum_{\sigma \in S_n} \lambda_{\sigma} = 1 \text{ and } \lambda_{\sigma} \geq 0 \right\} \subset \R^n, \end{align*}
where the vector $\bb v^{(\sigma)} = (\sigma(1), \ldots, \sigma(n)) \in \R^n$ encodes the permutation $\sigma$. The permutahedron is contained in the hyperplane $\Pi_n \subset \{ \bb v \in \R^n: \inner{\one}{\bb v} = n(n+1)/2 \}$ and is full-dimensional within this hyperplane. The permutahedron $\Pi_3$ is depicted in Figure~\ref{fig:permutahedron3}.
The cones of maximal dimension in the reduced normal fan $\mathcal F_{\Pi_n}/\one \R$ of $\Pi_n$ are labelled by permutations as well. They are of the form,
\begin{align} \label{eq:CsigmaWeyl} \mathcal C_\sigma = \{ \bb y \in \R^n/\one\R: y_{\sigma(1)} \leq \ldots \leq y_{\sigma(n)} \} \end{align}
such a domain is called a \emph{Weyl chamber}. 
It is not hard to see that these are simplicial cones as
\begin{align} \label{eq:CsigmaWeylexplicit} \mathcal C_\sigma = \left\{ \sum_{k = 1}^{n-1} \lambda_k \bb u^{(\sigma,k)}: \lambda_k \geq 0 \right\} \text{ with } u^{(\sigma,k)}_{\sigma(i)} = \begin{cases} -1 &\text{if $k \leq i$} \\ 0 &\text{else} \end{cases} \end{align}
where we chose the set of representatives in $\R^n$ of the vectors in $\bb u^{(\sigma,k)} \in \R^n/\one \R$ by fixing $u^{(\sigma,k)}_{\sigma(n)} = 0$ for all $k\in\{1,\ldots,n-1\}$. The remaining cones of the reduced normal fan $\mathcal F_{\Pi_n}$ can be constructed by taking arbitrary intersections of these cones. This fan is also called the \emph{braid arrangement fan}. The reduced normal fan of $\Pi_3$ is illustrated in Figure~\ref{fig:braid_arrangement_fan}.

\begin{definition}[Generalized permutahedron {\cite[Definition~6.1]{postnikov2009permutohedra}}]
\label{def:gen_permutahedron}
A polytope whose normal fan is a coarsening of $\mathcal F_{\Pi_n}$ is a generalized permutahedron.
\end{definition}
Generalized permutahedra have a large number of remarkable properties \cite{postnikov2009permutohedra,aguiar2017hopf}. E.g. %
\begin{theorem}[{\cite[Definition~6.1]{postnikov2009permutohedra} and \cite[Theorem~12.3]{aguiar2017hopf}}]
\label{thm:inequality_gp}
A generalized permutahedron $\mathcal G_z$ has the facet presentation 
\begin{align} \label{eq:inequality_gp} \mathcal G_z = \left\{ \bb v \in \R^n : \sum_{i \in [n]} v_i = z({[n]}) \text{ and } \sum_{i \in I} v_i \geq z({I}) \text{ for all } I \subset [n] \right\}, \end{align}
where $[n] = \{1,\ldots, n\}$ and $z$ is a \emph{supermodular boolean function} $z: \bb 2^{[n]} \rightarrow \R$ with $z(\emptyset) = 0$. In fact every supermodular boolean function, that means $z: \bb 2^{[n]} \rightarrow \R$ with 
\begin{align*} z(A) + z(B) \leq z(A \cap B) + z(A \cup B ) \text{ for all } A,B \subset [n], \end{align*}
gives rise to a generalized permutahedron by the inequality description in eq.~\eqref{eq:inequality_gp}\footnote{We are using a different sign notation than \cite{aguiar2017hopf}, but agree with \cite{postnikov2009permutohedra}.}.
\end{theorem}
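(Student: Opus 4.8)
The plan is to prove the two implications of the equivalence ``generalized permutahedron $\iff$ $\mathcal{G}_z$ for a supermodular $z$ with $z(\emptyset)=0$'' separately, the engine being Edmonds' greedy algorithm for polymatroids in the form of Fujishige and Tomizawa~\cite{fujishige1983note}. Throughout write $\bb e_I := \sum_{i\in I}\bb e_i$ and, for a permutation $\sigma$, $S_k := \{\sigma(1),\dots,\sigma(k)\}$ and $\mathcal{C}_\sigma$ for the Weyl chamber of eq.~\eqref{eq:CsigmaWeyl}; recall that the support function $\bb y\mapsto\max_{\bb v\in\mathcal{P}}\inner{\bb y}{\bb v}$ of a polytope $\mathcal{P}$ is linear on each maximal cone of its normal fan.

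\textbf{Direction 1: a supermodular $z$ yields a generalized permutahedron.} Fix supermodular $z$ with $z(\emptyset)=0$ and set $f(J) := z([n]) - z([n]\setminus J)$; then $f$ is submodular with $f([n]) = z([n])$, and rewriting each constraint $\sum_{i\in I}v_i \geq z(I)$ for the complement $J = [n]\setminus I$ identifies $\mathcal{G}_z$ with the base polytope $\{\bb v : \sum_i v_i = f([n]),\ \sum_{i\in J}v_i \leq f(J)\ \forall J\}$ of $f$. First I would note $\mathcal{G}_z$ is bounded: on it one has $z(\{i\}) \leq v_i \leq z([n]) - z([n]\setminus\{i\})$, so once nonempty it is a polytope. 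Then I would invoke Edmonds' greedy theorem: for every $\sigma$ and every $\bb y$ in the relative interior of $\mathcal{C}_\sigma$, the point $\bb v^{(\sigma)}$ with $v^{(\sigma)}_{\sigma(k)} = z(S_k) - z(S_{k-1})$ lies in $\mathcal{G}_z$ and maximizes $\inner{\bb y}{\bb v}$ over $\mathcal{G}_z$; this is the one step where supermodularity of $z$ is essential. In particular $\mathcal{G}_z\neq\emptyset$, and since the maximizing vertex $\bb v^{(\sigma)}$ depends only on $\sigma$, each open Weyl chamber lies in the relative interior of a single full-dimensional normal cone of $\mathcal{G}_z$; hence the normal fan of $\mathcal{G}_z$ coarsens the braid fan $\mathcal{F}_{\Pi_n}$ and $\mathcal{G}_z$ is a generalized permutahedron by Definition~\ref{def:gen_permutahedron}.

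\textbf{Direction 2: every generalized permutahedron arises this way.} Given a generalized permutahedron $\mathcal{G}$, define $z(I) := \min_{\bb v\in\mathcal{G}}\sum_{i\in I}v_i$, so that $z(\emptyset) = 0$ and $h_{\mathcal{G}}(-\bb e_I) = \max_{\bb v\in\mathcal{G}}\inner{-\bb e_I}{\bb v} = -z(I)$, where $h_{\mathcal{G}}$ denotes the support function of $\mathcal{G}$. Since the normal fan of $\mathcal{G}$ coarsens the non-pointed fan $\mathcal{F}_{\Pi_n}$, $\mathcal{G}$ lies in a hyperplane $\{\bb v : \inner{\one}{\bb v} = z([n])\}$, whence $\mathcal{G}\subseteq\mathcal{G}_z$ is immediate. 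To see $z$ is supermodular, use that $h_{\mathcal{G}}$ is linear on each maximal braid cone: for $A,B\subseteq[n]$ the identity $\bb e_A + \bb e_B = \bb e_{A\cup B} + \bb e_{A\cap B}$ shows $-\bb e_{A\cap B}$ and $-\bb e_{A\cup B}$ both lie in the chamber $\mathcal{C}_\sigma$ obtained by ordering $[n]$ as $A\cap B$, then $(A\cup B)\setminus(A\cap B)$, then the rest; writing $h_{\mathcal{G}} = \inner{\cdot}{\bb v^\sigma}$ there with $\bb v^\sigma\in\mathcal{G}$ gives $h_{\mathcal{G}}(-\bb e_{A\cup B}) + h_{\mathcal{G}}(-\bb e_{A\cap B}) = \inner{-\bb e_A - \bb e_B}{\bb v^\sigma} \leq h_{\mathcal{G}}(-\bb e_A) + h_{\mathcal{G}}(-\bb e_B)$, i.e.\ $z(A) + z(B) \leq z(A\cup B) + z(A\cap B)$. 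Finally, for $\mathcal{G}_z\subseteq\mathcal{G}$ it suffices, by continuity of $h_{\mathcal{G}}$ and density of the open maximal braid cones, to check $\inner{\bb y}{\bb w}\leq h_{\mathcal{G}}(\bb y)$ for $\bb w\in\mathcal{G}_z$ and $\bb y$ in the relative interior of some $\mathcal{C}_\sigma$; there $h_{\mathcal{G}}(\bb y) = \inner{\bb y}{\bb v^\sigma}$, and evaluating $h_{\mathcal{G}}$ at the vectors $\bb e_{[n]\setminus S_k}\in\mathcal{C}_\sigma$ forces $v^\sigma_{\sigma(k)} = z(S_k) - z(S_{k-1})$, so an Abel summation of $\inner{\bb y}{\bb w} = \sum_k y_{\sigma(k)}w_{\sigma(k)}$ against the partial sums $\sum_{i\in S_k}w_i \geq z(S_k)$, using $y_{\sigma(1)}\leq\dots\leq y_{\sigma(n)}$, yields $\inner{\bb y}{\bb w}\leq \inner{\bb y}{\bb v^\sigma}$ and hence $\bb w\in\mathcal{G}$.

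\textbf{Main obstacle.} The crux is the polymatroid input in Direction~1: that the single candidate vertex $\bb v^{(\sigma)}$ satisfies all of the exponentially many inequalities $\sum_{i\in I}v_i\geq z(I)$. This is precisely Edmonds' greedy theorem and the only place where supermodularity is indispensable; everything else is manipulation of support functions, Abel summation, and the combinatorics of the braid arrangement, with the one routine technical point being the passage from generic directions (relative interiors of Weyl chambers) to all directions, handled by continuity and density.
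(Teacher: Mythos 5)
The paper does not prove this theorem; it imports it by citation from Postnikov~\cite[Definition~6.1]{postnikov2009permutohedra} and Aguiar--Ardila~\cite[Theorem~12.3]{aguiar2017hopf}, so there is no in-paper proof to compare yours against. Your reconstruction is correct and self-contained modulo the greedy theorem. It is also consonant with the paper's toolkit: you anchor Direction~1 on the Fujishige--Tomizawa greedy result, which is exactly the paper's Lemma~\ref{lmm:winG}, while Direction~2 (extracting $z$ from the support function, deriving supermodularity from linearity of $\spf_{\mathcal G}$ on braid cones via $\bb e_A+\bb e_B=\bb e_{A\cup B}+\bb e_{A\cap B}$, and closing with Abel summation) is the standard support-function argument one finds in Aguiar--Ardila. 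A few points worth flagging, all checked out: the translation $f(J)=z([n])-z([n]\setminus J)$ correctly converts supermodular-with-$\geq$ to submodular-with-$\leq$, so $\mathcal G_z$ really is a submodular base polytope; the vectors $-\bb e_{A\cap B}$, $-\bb e_{A\cup B}$, $-\bb e_{S_k}$, $\bb e_{[n]\setminus S_k}$ you use do all lie in the chosen chamber $\mathcal C_\sigma$; and the boundedness estimate $z(\{i\})\leq v_i\leq z([n])-z([n]\setminus\{i\})$ is correct and needed before one can speak of the normal fan. The only small expositional gap is that you assert ``once nonempty it is a polytope'' before establishing nonemptiness, which you then obtain from the greedy vertex $\bb v^{(\sigma)}\in\mathcal G_z$; it would read more cleanly to state nonemptiness first and deduce boundedness second. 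Your identification of the crux (the exponentially many inequalities collapsing onto the single greedy vertex) is exactly right and is the content the paper delegates to \cite{fujishige1983note}.
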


\begin{corollary}
\label{crll:gp_inclusion}
If both $\mathcal G_{z_1}$ and $\mathcal G_{z_2}$ are generalized permutahedra and 
$z_1(A) > z_2(A)$ for all non-empty $A \subsetneq [n]$ and $z_1([n]) = z_2([n])$, then
$\mathcal G_{z_1} \subset \relint \mathcal G_{z_2}$.
\end{corollary}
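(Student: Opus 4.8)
The plan is to read off the result from the facet presentation of Theorem~\ref{thm:inequality_gp}, combined with the standard characterization of the relative interior of a polyhedron as the locus where all non-trivial facet inequalities are satisfied strictly.

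First I would fix an arbitrary $\bb v \in \mathcal{G}_{z_1}$. By Theorem~\ref{thm:inequality_gp} it satisfies $\sum_{i \in [n]} v_i = z_1([n])$ and $\sum_{i \in I} v_i \geq z_1(I)$ for every $I \subseteq [n]$. Feeding in the hypotheses $z_1([n]) = z_2([n])$ and $z_1(A) > z_2(A)$ for all nonempty $A \subsetneq [n]$ then gives
\begin{align*}
\sum_{i \in [n]} v_i = z_2([n]) \qquad \text{and} \qquad \sum_{i \in I} v_i > z_2(I) \quad \text{for all } \emptyset \neq I \subsetneq [n].
\end{align*}
In particular $\bb v \in \mathcal{G}_{z_2}$, because the only inequalities in the presentation \eqref{eq:inequality_gp} of $\mathcal{G}_{z_2}$ not covered by the line above are those for $I = \emptyset$ (trivial, as $z_2(\emptyset) = 0$) and $I = [n]$ (implied by the equality constraint).

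Second I would upgrade membership to membership in the relative interior. Write $\mathcal{G}_{z_2} = H \cap \bigcap_{\emptyset \neq I \subsetneq [n]} H_I^{\geq}$ with $H = \{\bb u \in \R^n : \sum_i u_i = z_2([n])\}$ and $H_I^{\geq} = \{\bb u \in \R^n : \sum_{i \in I} u_i \geq z_2(I)\}$. Since $\mathcal{G}_{z_2} \subseteq H$ and $H$ is affine, $\operatorname{aff}(\mathcal{G}_{z_2}) \subseteq H$; hence any point of $\operatorname{aff}(\mathcal{G}_{z_2})$ that lies in all the open half-spaces $\{\bb u : \sum_{i \in I} u_i > z_2(I)\}$ automatically lies in $\mathcal{G}_{z_2}$. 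The affine functionals $\ell_I(\bb u) = \sum_{i \in I} u_i - z_2(I)$ are continuous and satisfy $\ell_I(\bb v) > 0$ for all $\emptyset \neq I \subsetneq [n]$, so they remain positive on a neighbourhood of $\bb v$ inside $\operatorname{aff}(\mathcal{G}_{z_2})$; that neighbourhood is contained in $\mathcal{G}_{z_2}$, which is precisely the statement $\bb v \in \relint \mathcal{G}_{z_2}$. (For $n = 1$ there are no proper nonempty subsets, $\mathcal{G}_{z_2}$ is a point, and the claim is vacuous.) As $\bb v \in \mathcal{G}_{z_1}$ was arbitrary, $\mathcal{G}_{z_1} \subseteq \relint \mathcal{G}_{z_2}$.

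I do not expect a genuine obstacle here: all the real content is already packaged in the facet description of Theorem~\ref{thm:inequality_gp}, and the only mildly delicate point is the affine-hull argument of the previous paragraph — a standard fact about relative interiors of polyhedra presented by mixed equality/inequality constraints — which I have sketched above rather than invoked as a black box.
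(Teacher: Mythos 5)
Your proof is correct and follows the same basic route as the paper: both start from the facet presentation of Theorem~\ref{thm:inequality_gp} and conclude by observing that a point of $\mathcal{G}_{z_1}$ satisfies the $z_2$-inequalities strictly. The difference is one of how much is spelled out. The paper's proof establishes $\mathcal{G}_{z_1} \subset \mathcal{G}_{z_2}$ exactly as you do, but then settles the passage to the relative interior in one line by citing \cite[Theorem~12.3]{aguiar2017hopf} for the fact that the inequalities in eq.~\eqref{eq:inequality_gp} become strict precisely on the relative interior. You instead derive this from scratch: you observe that $\operatorname{aff}(\mathcal{G}_{z_2})$ is contained in the equality hyperplane $H$, so any point of $\operatorname{aff}(\mathcal{G}_{z_2})$ strictly satisfying all the $\ell_I$'s lies in $\mathcal{G}_{z_2}$, and then use continuity of the finitely many affine functionals $\ell_I$ to produce a relative neighbourhood of $\bb v$ inside $\mathcal{G}_{z_2}$. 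This is a standard fact about polyhedra defined by mixed equality/inequality systems, and your argument is a clean, self-contained proof of it; it has the mild virtue of not depending on irredundancy or facet-definingness of the individual inequalities, which the citation to Aguiar--Ardila might otherwise tempt one to worry about. Both proofs are sound; yours simply trades a reference for an elementary argument.
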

\begin{proof}
By Theorem~\ref{thm:inequality_gp} it follows immediately that 
$\mathcal G_{z_1} \subset \mathcal G_{z_2}$.
The inequalities in eq.~\eqref{eq:inequality_gp} are strict \cite[Theorem~12.3]{aguiar2017hopf}. Therefore the statement follows.
\end{proof}

The Minkowski sum of two generalized permutahedra is again a generalized permutahedron (see for instance \cite[Lemma~2.2.2]{doker2011geometry}):
\begin{lemma}
\label{lmm:gpMinkowski}
If both $\mathcal G_{z_1}$ and $\mathcal G_{z_2}$ are generalized permutahedra, then also their Minkowski sum 
$ \mathcal G_{z_{12}} = \mathcal G_{z_1} + \mathcal G_{z_2} $ is a generalized permutahedron with the boolean functions $z_1,z_2,z_{12}:\bb 2^{[n]} \rightarrow \R$ related by $z_{12}(A) = z_1(A) + z_2(A)$ for all $A \subset [n] = \{1,\ldots, n\}$.
\end{lemma}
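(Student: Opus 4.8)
The plan is to establish the two assertions of the lemma --- that $\mathcal{G}_{z_{12}}$ is again a generalized permutahedron, and that its supermodular function equals $z_1+z_2$ --- by combining the additivity of support functions under Minkowski sums with the two characterizations of generalized permutahedra provided by Definition~\ref{def:gen_permutahedron} and Theorem~\ref{thm:inequality_gp}. The basic input is the standard fact that, for any two polytopes $\mathcal{P},\mathcal{Q}\subset\R^n$ and every $\bb y\in\R^n$,
\[
\max_{\bb v\in\mathcal{P}+\mathcal{Q}}\inner{\bb y}{\bb v}=\max_{\bb v\in\mathcal{P}}\inner{\bb y}{\bb v}+\max_{\bb v\in\mathcal{Q}}\inner{\bb y}{\bb v},
\]
together with the analogous identity with $\min$ in place of $\max$; in particular the face of $\mathcal{P}+\mathcal{Q}$ on which $\bb y$ attains its maximum is the Minkowski sum of the faces of $\mathcal{P}$ and of $\mathcal{Q}$ on which $\bb y$ attains its maximum, so the normal fan of $\mathcal{P}+\mathcal{Q}$ is the common refinement $\mathcal{F}_{\mathcal{P}}\wedge\mathcal{F}_{\mathcal{Q}}$ of the two normal fans.

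For the first assertion I would observe that $\mathcal{F}_{\Pi_n}$ refines the common refinement of any two fans it already refines: a cone of $\mathcal{F}_{z_1}\wedge\mathcal{F}_{z_2}$ has the form $\mathcal{C}_1\cap\mathcal{C}_2$ with each $\mathcal{C}_i$ a union of cones of $\mathcal{F}_{\Pi_n}$ --- this is exactly what Definition~\ref{def:gen_permutahedron} asserts about the generalized permutahedron $\mathcal{G}_{z_i}$ --- hence $\mathcal{C}_1\cap\mathcal{C}_2$ is a union of pairwise intersections of cones of $\mathcal{F}_{\Pi_n}$, each of which is a face and so again a cone of $\mathcal{F}_{\Pi_n}$. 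Therefore the normal fan of $\mathcal{G}_{z_1}+\mathcal{G}_{z_2}$ is a coarsening of $\mathcal{F}_{\Pi_n}$, and $\mathcal{G}_{z_1}+\mathcal{G}_{z_2}$ is a generalized permutahedron; denote its supermodular function by $z_{12}$.

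For the second assertion I would use Theorem~\ref{thm:inequality_gp} in the form that recovers the supermodular function of a generalized permutahedron $\mathcal{G}_z$ directly from the polytope, namely $z(I)=\min_{\bb v\in\mathcal{G}_z}\sum_{i\in I}v_i$ for every $I\subseteq[n]$ (the defining inequalities in eq.~\eqref{eq:inequality_gp} being tight). Applying the $\min$ version of the support-function identity to the linear functional $\bb v\mapsto\sum_{i\in I}v_i$ then gives
\[
z_{12}(I)=\min_{\bb v\in\mathcal{G}_{z_1}+\mathcal{G}_{z_2}}\sum_{i\in I}v_i=\min_{\bb v\in\mathcal{G}_{z_1}}\sum_{i\in I}v_i+\min_{\bb v\in\mathcal{G}_{z_2}}\sum_{i\in I}v_i=z_1(I)+z_2(I)
\]
for all $I\subseteq[n]$, which in particular contains the normalization $z_{12}(\emptyset)=0$ and the hyperplane constraint $z_{12}([n])=z_1([n])+z_2([n])$. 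As a consistency check, $z_1+z_2$ is supermodular, being a sum of supermodular functions, in agreement with the converse direction of Theorem~\ref{thm:inequality_gp}.

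The only step that is not purely formal --- and hence the main obstacle --- is the identity $z(I)=\min_{\bb v\in\mathcal{G}_z}\sum_{i\in I}v_i$: one must know that the inequality description in eq.~\eqref{eq:inequality_gp} is the tight one, equivalently that $z$ is uniquely determined by $\mathcal{G}_z$. This is precisely the strictness property already invoked in the proof of Corollary~\ref{crll:gp_inclusion}, an Edmonds-type tightness statement for base polytopes of supermodular functions (cf.~\cite{fujishige1983note,aguiar2017hopf}); granting it, the remainder is a routine unwinding of the definitions, and the conclusion agrees with \cite[Lemma~2.2.2]{doker2011geometry}.
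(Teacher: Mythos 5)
The paper does not actually prove this lemma; it states it as a known fact and cites Döker's thesis \cite[Lemma~2.2.2]{doker2011geometry}, so there is no in-paper argument to compare against. Your proof is correct and self-contained, and its ingredients are all available in or near the paper. The first assertion is handled by the standard facts that the normal fan of a Minkowski sum is the common refinement of the normal fans, and that a fan refining two other fans also refines their common refinement; combining these with Definition~\ref{def:gen_permutahedron} is exactly the right move. The second assertion rests on the tightness of the facet presentation of $\mathcal G_z$, i.e.\ on recovering $z(I)=\min_{\bb v\in\mathcal G_z}\sum_{i\in I}v_i$, plus additivity of support functions under Minkowski sum. You flag the tightness step as the non-formal input, and rightly so; note that it is already secured by the Fujishige--Tomizawa result, Lemma~\ref{lmm:winG}: for any permutation $\sigma$ with $A_k^\sigma=I$ the vertex $\bb w^{(\sigma,z)}$ satisfies $\sum_{i\in I}w^{(\sigma,z)}_i=z(I)$ by telescoping eq.~\eqref{eq:wsigma_vector}, so every inequality in eq.~\eqref{eq:inequality_gp} is attained. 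With that in hand the argument is complete; your sanity check that $z_1+z_2$ is supermodular is a nice touch but is also implied by the recovery formula, since $z_{12}$ is then the (unique) supermodular function of the sum polytope.
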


A vector $\bb v\in \mathcal G_z$ which maximizes all linear functionals in a Weyl chamber $\mathcal C_\sigma$ is a vertex of $\mathcal G_z$. 
This gives a canonical map from permutations $\sigma \in S_n$ to the vertices of a generalized permutahedron. We can use a result of Fujishige and Tomizawa to explicitly construct this map:

\begin{lemma}[{\cite[Lemma~3.1, Lemma~3.2]{fujishige1983note}}]
\label{lmm:winG}
If $z$ is a supermodular function $z: \bb 2^{[n]} \rightarrow \R$, $\sigma\in S_n$ a permutation and $\bb w^{(\sigma,z)}\in \R^n$ is the vector given component-wise by
\begin{align} \label{eq:wsigma_vector} w^{(\sigma,z)}_{\sigma(k)} = z(A_k^\sigma) - z(A_{k-1}^\sigma) \text{ for all } k \in [n], \end{align}
where $A_k^\sigma = \{\sigma(1), \ldots, \sigma(k)\} \subset [n]=\{1,\ldots,n\}$, then
$\bb w^{(\sigma,z)}$ is a vertex of the generalized permutahedron $\mathcal G_z$ and
\begin{gather*} \inner{\bb y}{\bb w^{(\sigma,z)}} = \max_{\bb v \in \mathcal G_z} \inner{\bb y}{\bb v}\text{ for all } \bb y \in \mathcal C_\sigma, \end{gather*}
where $\mathcal C_\sigma$ is a Weyl-chamber in the braid arrangement fan as defined in eq.~\eqref{eq:CsigmaWeyl}. 
\end{lemma}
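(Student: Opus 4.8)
The plan is to recognize eq.~\eqref{eq:wsigma_vector} as the output of the greedy algorithm for the supermodular function $z$, and to verify by hand that the resulting point lies in $\mathcal G_z$ and is optimal on the Weyl chamber $\mathcal C_\sigma$. Since every assertion in the statement is invariant under relabelling the coordinates of $\R^n$ by $\sigma$, I would first reduce to the case $\sigma = \id$, so that $A_k^\sigma = A_k := \{1,\dots,k\}$, $\bb w := \bb w^{(\id,z)}$ has components $w_k = z(A_k) - z(A_{k-1})$, and $\mathcal C_{\id} = \{\bb y : y_1 \le \dots \le y_n\}$.

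Next I would check that $\bb w$ satisfies the facet presentation of $\mathcal G_z$ from Theorem~\ref{thm:inequality_gp}. The normalization $\sum_{i \in [n]} w_i = z([n])$ is immediate by telescoping (and $z(\emptyset)=0$). For the inequality $\sum_{i\in I} w_i \ge z(I)$ with $I = \{i_1 < \dots < i_m\}$, set $I_j := \{i_1,\dots,i_j\}$ and observe $A_{i_j-1}\cup I_j = A_{i_j}$ and $A_{i_j-1}\cap I_j = I_{j-1}$, so supermodularity applied to $A_{i_j-1}$ and $I_j$ gives $z(A_{i_j}) - z(A_{i_j-1}) \ge z(I_j) - z(I_{j-1})$, i.e.\ $w_{i_j} \ge z(I_j) - z(I_{j-1})$; summing over $j$ and telescoping on the right yields $\sum_{i\in I} w_i \ge z(I)$. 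Hence $\bb w \in \mathcal G_z$.

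Then I would show that $\bb w$ maximizes $\inner{\bb y}{\cdot}$ over $\mathcal G_z$ for every $\bb y \in \mathcal C_{\id}$ by summation by parts (this is well-posed on $\R^n/\one\R$ since $\mathcal G_z$ lies in the hyperplane $\{\inner{\one}{\bb v} = z([n])\}$, so the argmax is independent of the representative). For any $\bb v$ one has $\inner{\bb y}{\bb v} = y_1 \sum_k v_k + \sum_{j=1}^{n-1}(y_{j+1}-y_j)\sum_{k>j} v_k$. For $\bb v \in \mathcal G_z$ the total $\sum_k v_k = z([n])$ is fixed, while $\sum_{k>j} v_k = z([n]) - \sum_{k\le j} v_k \le z([n]) - z(A_j)$ by the facet inequality for $A_j$; since $y_{j+1}-y_j \ge 0$ on $\mathcal C_{\id}$, each term is thereby bounded from above, and all these bounds are attained simultaneously at $\bb v = \bb w$ because $\sum_{k\le j} w_k = z(A_j)$ exactly. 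This gives $\inner{\bb y}{\bb v} \le \inner{\bb y}{\bb w}$ for all $\bb v \in \mathcal G_z$, which is the claimed identity. Finally, that $\bb w$ is a vertex follows from Definition~\ref{def:gen_permutahedron}: the normal fan of $\mathcal G_z$ coarsens $\mathcal F_{\Pi_n}$, so the full-dimensional cone $\mathcal C_\sigma$ lies inside a single maximal cone of the normal fan of $\mathcal G_z$, which is the normal cone of some vertex; by the previous step every $\bb y \in \relint \mathcal C_\sigma$ is maximized at $\bb w$, so that vertex is $\bb w$.

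The main obstacle is the bookkeeping in the supermodular chain estimate of the second paragraph: choosing the nested sets $I_{j-1} \subset I_j$ and $A_{i_j-1} \subset A_{i_j}$ so that the union and intersection identities hold \emph{exactly} and the telescoping sums close. Once the greedy point is written down explicitly, the normalization, the optimality computation, and the vertex conclusion are essentially formal.
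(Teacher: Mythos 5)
The paper gives no proof of this lemma, only a citation to Fujishige and Tomizawa, so there is nothing internal to compare against; your argument is the standard greedy-algorithm derivation for base polytopes of supermodular set functions, which is precisely the content of the cited reference, and it is correct. Each step checks out: the reduction to $\sigma=\id$ is harmless; the telescoping normalization and the supermodular chain estimate establishing $\bb w\in\mathcal G_z$ are right (the identities $A_{i_j-1}\cup I_j=A_{i_j}$ and $A_{i_j-1}\cap I_j=I_{j-1}$ hold because $i_1<\dots<i_m$ forces $I_{j-1}\subseteq A_{i_j-1}$); the Abel-summation optimality argument uses exactly the chain of facet inequalities $\sum_{k\le j}v_k\ge z(A_j)$ and is tight at $\bb w$; and the vertex conclusion is valid. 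A minor remark: the appeal to Definition~\ref{def:gen_permutahedron} in the last step is not actually needed---once you know $\bb w\in\mathcal G_z$ and that $\bb w$ maximizes every $\bb y\in\mathcal C_\sigma$, the normal cone of the minimal face containing $\bb w$ contains the full-dimensional $\mathcal C_\sigma$ in $\R^n/\one\R$, hence is a maximal cone, hence that face is a vertex and equals $\bb w$, without invoking the coarsening of the braid fan.
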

\subsection{Tropical sampling for generalized permutahedra}
In general, it is necessary to compute a triangulation of the reduced refined normal fans of the $\mathcal A$ and $\mathcal B$ polytopes to perform the procedure described in Section~\ref{sec:trop_sampling}. This cumbersome computation can be circumvented if the $\mathcal A$ and $\mathcal B$ polytopes are generalized permutahedra. In this case, there is an especially simple way to sample from the associated tropical measure $\mu^\tr$ defined in eq.~\eqref{eq:mu_probability} without the need for an explicit triangulation as required for Algorithm~\ref{alg:mu_sample}.

From now on, we will therefore assume that the polytopes $\mathcal A$ and $\mathcal B$ are both generalized permutahedra. This implies by Theorem~\ref{thm:inequality_gp} that there are unique boolean functions $z_{\mathcal A} : \bb 2^{[n]} \rightarrow \R$ and $z_{\mathcal B}$ analogously which describe these polytopes. Using these functions and the properties of generalized permutahedra introduced above, we can state

\begin{theorem}[Geometric sector decomposition for generalized permutahedra]
\label{thm:secdec_gp}
Let $\mathcal A$ and $\mathcal B$ be the polytopes defined in Theorem~\ref{thm:convergence}. %
If $\mathcal A$ and $\mathcal B$ are generalized permutahedra with associated boolean functions $z_{\mathcal A},z_{\mathcal B}: \bb 2^{[n]} \rightarrow \R$ which 
fulfill the requirements \normalfont{R1} and \normalfont{R2} of Theorem~\ref{thm:convergence},
then we can write the integral 
\begin{gather*}  I[f] = \int_{\P_{>0}^{n-1}} \frac{\prod_{i} a_{i}^\tr(\bb x)^{\Re \nu_i}}{\prod_{j} b_{j}^\tr(\bb{x})^{\Re \rho_j}} f(\bb x) \Omega      \end{gather*}
as a sum 
$I[f] = \sum_{\sigma \in S_n} I_{\sigma}[f]$ with
\begin{align*}  I_{\sigma}[f] &= \frac{1} { \prod_{k=1}^{n-1} r(A^\sigma_k) } \int_{[0,1]^{n-1}} f \left( \bb x^{(\sigma)}(\bb \xi) \right) \prod_{k=1}^{n-1} \dd \xi_k, \end{align*}
where 
\begin{itemize}
\item $f:\P_{>0}^{n-1} \rightarrow \C$ is a measurable homogeneous function of degree $0$,
\item 
$A_k^\sigma = \{\sigma(1), \ldots, \sigma(k)\} \subset [n]=\{1,\ldots,n\}$,
\item $r(A) = z_{\mathcal A}(A) - z_{\mathcal B}(A) $,
which fulfills $r(A) > 0$ for all non-empty proper subsets $A \subsetneq [n]$ and
\item
$\bb x^{(\sigma)}(\bb \xi) \in \Exp(\mathcal C_\sigma)$ is given component-wise by $x_{\sigma(k)} = \prod_{i=k}^{n-1} \xi_i^{1/r(A^\sigma_i)}$ and $x_{\sigma(n)} = 1$.
\end{itemize}
\end{theorem}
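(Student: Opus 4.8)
The plan is to obtain Theorem~\ref{thm:secdec_gp} as a specialization of the general geometric sector decomposition of Theorem~\ref{thm:secdec}, using the braid arrangement fan as the simplicial refinement. First I would observe that, by Definition~\ref{def:gen_permutahedron}, the normal fans of the generalized permutahedra $\mathcal A$ and $\mathcal B$ are both coarsenings of the braid fan $\mathcal F_{\Pi_n}$; consequently $\mathcal F_{\Pi_n}$ refines both of them and hence also their common refinement $\mathcal F_{\mathcal{AB}}$, so after quotienting by $\one\R$ the reduced braid fan $\mathcal F_{\Pi_n}/\one\R$ refines $\mathcal F_{\mathcal{AB}}/\one\R$. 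Since the maximal cones of $\mathcal F_{\Pi_n}/\one\R$ are the Weyl chambers $\mathcal C_\sigma$, which are simplicial by eq.~\eqref{eq:CsigmaWeylexplicit}, the braid fan is an admissible choice of simplicial refinement $\mathcal F^\Delta_{\mathcal{AB}}/\one\R$ with $\mathcal M^\Delta_{\mathcal{AB}}=\{\mathcal C_\sigma:\sigma\in S_n\}$. (In the situations of interest $\mathcal A=\sum_i(\Re\nu_i)\NP_{a_i}$ and $\mathcal B=\sum_j(\Re\rho_j)\NP_{b_j}$ are themselves generalized permutahedra by Lemma~\ref{lmm:gpMinkowski} whenever the individual Newton polytopes are, with $z_{\mathcal A}=\sum_i(\Re\nu_i)z_{a_i}$ and $z_{\mathcal B}=\sum_j(\Re\rho_j)z_{b_j}$.) Feeding this refinement into Theorem~\ref{thm:secdec} already yields $I[f]=\sum_{\sigma\in S_n}I_{\mathcal C_\sigma}[f]$; what remains is to rewrite, for $\mathcal C=\mathcal C_\sigma$, the generators $\bb u^{(\mathcal C,k)}$, the vector $\bb w^{(\mathcal C)}$, the determinant prefactor and the monomial map $\bb x^{(\mathcal C)}(\bb\xi)$ of eq.~\eqref{eq:evalCintegral} in terms of $z_{\mathcal A},z_{\mathcal B}$.

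For this I would take the explicit generators $\bb u^{(\sigma,k)}$ of the Weyl chamber given in eq.~\eqref{eq:CsigmaWeylexplicit}. On $\mathcal C_\sigma$ the support functions of $\mathcal A$ and $\mathcal B$ are linear and attained at the vertices $\bb w^{(\sigma,z_{\mathcal A})}$ and $\bb w^{(\sigma,z_{\mathcal B})}$ produced by Lemma~\ref{lmm:winG}, so Lemma~\ref{lmm:abtrop_w} applies (through the cone of $\mathcal F_{\mathcal{AB}}/\one\R$ containing $\mathcal C_\sigma$) with $\bb w^{(\mathcal C_\sigma)}=\bb w^{(\sigma,z_{\mathcal B})}-\bb w^{(\sigma,z_{\mathcal A})}$, whose components are by eq.~\eqref{eq:wsigma_vector} equal to $-(r(A^\sigma_k)-r(A^\sigma_{k-1}))$ with $r(A)=z_{\mathcal A}(A)-z_{\mathcal B}(A)$ and $A^\sigma_0=\emptyset$. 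Pairing $\bb u^{(\sigma,k)}$ with $\bb w^{(\mathcal C_\sigma)}$ then gives a telescoping sum that collapses, using $z_{\mathcal A}(\emptyset)=z_{\mathcal B}(\emptyset)=0$, to $\inner{\bb u^{(\sigma,k)}}{\bb w^{(\mathcal C_\sigma)}}=r(A^\sigma_k)$. Writing the matrix with rows $\bb u^{(\sigma,1)},\dots,\bb u^{(\sigma,n-1)},\one$ in the reordered basis $e_{\sigma(1)},\dots,e_{\sigma(n)}$ makes it triangular with diagonal entries $\pm1$, so its determinant has absolute value $1$. Substituting these three facts into eq.~\eqref{eq:evalCintegral} turns the prefactor into $\prod_{k=1}^{n-1}r(A^\sigma_k)^{-1}$ and, after the obvious relabeling of the integration variables, collapses the coordinate formula $x_k=\prod_i\xi_i^{-u^{(\mathcal C,i)}_k/\inner{\bb u^{(\mathcal C,i)}}{\bb w^{(\mathcal C)}}}$ to the claimed $x_{\sigma(k)}=\prod_{i=k}^{n-1}\xi_i^{1/r(A^\sigma_i)}$, $x_{\sigma(n)}=1$.

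It then remains to record that $r(A)>0$ for every nonempty proper subset $A\subsetneq[n]$, so that all prefactors are finite and positive. Since each such $A$ equals $A^\sigma_k$ for a suitable $\sigma\in S_n$ and $1\le k\le n-1$, and since $\bb u^{(\sigma,k)}$ is a nonzero element of $\mathcal C_\sigma$, the strict inequality $\inner{\bb y}{\bb w^{(\mathcal C_\sigma)}}>0$ on $\mathcal C_\sigma\setminus\{0\}$ furnished by Lemma~\ref{lmm:abtrop_w} (itself a consequence of requirements R1 and R2 via Lemma~\ref{lmm:ABsupportlimit}) gives $r(A)=\inner{\bb u^{(\sigma,k)}}{\bb w^{(\mathcal C_\sigma)}}>0$; equivalently this is Corollary~\ref{crll:gp_inclusion} applied to $\mathcal A\subset\relint\mathcal B$. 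I expect the only subtle step to be the first paragraph — verifying that the braid fan genuinely refines $\mathcal F_{\mathcal{AB}}$ and may therefore serve as the simplicial refinement — together with keeping the $\sigma$-indexed linear algebra consistent with the conventions fixed in eqs.~\eqref{eq:CsigmaWeyl}--\eqref{eq:CsigmaWeylexplicit} and in Lemma~\ref{lmm:winG}; once those are pinned down the rest is a mechanical substitution into Theorem~\ref{thm:secdec}.
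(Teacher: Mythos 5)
Your proposal is correct and follows essentially the same route as the paper's proof: specialize Theorem~\ref{thm:secdec} to the braid arrangement fan (whose validity as a simplicial refinement of $\mathcal F_{\mathcal{AB}}/\one\R$ is exactly the point you make explicit via Definition~\ref{def:gen_permutahedron}), use Lemma~\ref{lmm:winG} to produce the maximizing vertices $\bb w^{(\sigma,z_{\mathcal A})}$, $\bb w^{(\sigma,z_{\mathcal B})}$, compute $\inner{\bb u^{(\sigma,k)}}{\bb w^{(\mathcal C_\sigma)}} = r(A^\sigma_k)$ by the telescoping from eq.~\eqref{eq:wsigma_vector}, obtain $|\det| = 1$ from the staircase form of the $\bb u^{(\sigma,k)}$, and derive positivity of $r$ on proper nonempty subsets from Lemma~\ref{lmm:abtrop_w}/Lemma~\ref{lmm:ABsupportlimit}. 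You spell out some steps the paper leaves terse (the refinement argument and the determinant and telescoping computations), but there is no substantive difference in approach.
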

\begin{proof}
This theorem is a specialization of Theorem~\ref{thm:secdec} to the generalized permutahedron case. The braid arrangement fan defined in eq.~\eqref{eq:CsigmaWeyl} provides an appropriate reduced simplicial fan.
By Lemma~\ref{lmm:winG} we have vertices $\bb w^{(\sigma,z_{\mathcal A})} \in \mathcal A$ and $\bb w^{(\sigma,z_{\mathcal B})} \in \mathcal B$ such that
$ \inner{\bb y}{\bb w^{(\sigma,z_{\mathcal A})}} = \max_{\bb v \in \mathcal A} \inner{\bb y}{\bb v} $
and 
$ \inner{\bb y}{\bb w^{(\sigma,z_{\mathcal B})}} = \max_{\bb v \in \mathcal B} \inner{\bb y}{\bb v} $
for all $\sigma \in S_n$ and $\bb y \in \mathcal C_\sigma$.
Using the explicit representatives of the generators $\bb u^{(\sigma,1)}, \ldots,\bb u^{(\sigma,n-1)}$ of the cone $\mathcal C_\sigma$ from eq.~\eqref{eq:CsigmaWeylexplicit} together with Lemma~\ref{lmm:winG} gives
$\inner{\bb u^{(\sigma,k)}}{\bb w^{(\sigma,z_{\mathcal A})}} = -z_{\mathcal A}(A_k^\sigma)$ and
$\inner{\bb u^{(\sigma,k)}}{\bb w^{(\sigma,z_{\mathcal B})}} = -z_{\mathcal B}(A_k^\sigma)$.  It follows from this and Lemma~\ref{lmm:ABsupportlimit} that $\inner{\bb u^{(\sigma,k)}}{\bb w^{(\sigma,z_{\mathcal B})}} - \inner{\bb u^{(\sigma,k)}}{\bb w^{(\sigma,z_{\mathcal A})}} = z_{\mathcal A}(A_k^\sigma) - z_{\mathcal B}(A_k^\sigma) > 0$ for all $\sigma \in S_n$ and $k\in \{1,\ldots,n-1\}$ which implies $r(A) > 0$ for all non-empty $A\subsetneq [n]$.
From the form of the $\bb u^{(\sigma,k)}$ vectors in eq.~\eqref{eq:CsigmaWeylexplicit} it is obvious that 
$|\det(\bb u^{(\sigma,1)}, \ldots,\bb u^{(\sigma,n-1)},\one)|=1$.
\end{proof}

Theorem~\ref{thm:secdec_gp} ensures that we can proceed as above and perform the Monte Carlo Algorithms~\ref{alg:tropical_sampling} and \ref{alg:mu_sample} just as in the general case. It is clear that the preprocessing step will be straightforward as generalized permutahedra come with an appropriately simplicial fan `built in'. Algorithm~\ref{alg:mu_sample} requires us to generate a table of size $n!$ as we need one entry for each cone in the braid arrangement fan.  For this algorithm to be applicable in a computationally feasible way that table needs to be stored in the memory of the computer. Hence the naive algorithm is only practically applicable for relatively small values of $n$. 

However, a further significant improvement can be achieved: it is not necessary to store an entry for each permutation in a table. If a small additional computation for each sampled point is performed, a table of size proportional to $2^{n}$ suffices. We will describe this specialized version of Algorithm~\ref{alg:mu_sample} in the rest of this section.

First observe that the overall normalization factor needed to apply Algorithm~\ref{alg:mu_sample} is given by 
\begin{align} \label{eq:ItrdefGz} I^\tr &= \sum_{\sigma \in S_n} I^\tr_{\mathcal C_\sigma} \text{ with } I^\tr_{\mathcal C_\sigma} = \frac{1} { \prod_{k=1}^{n-1} r(A^\sigma_k) }, \end{align}
where $r(A) = z_{\mathcal A}(A) - z_{\mathcal B}(A) $ for all non-empty $A \subsetneq [n]$. This equation is just eq.~\eqref{eq:def_Itr_secdec} specified using Theorem~\ref{thm:secdec_gp} to the generalized permutahedron case. 
For the following considerations it will be convenient to declare $r(\emptyset) = 1$, which opens the way towards the following generalization that promotes $I^\tr$ to a boolean function on $\bb 2^{[n]}$:
\begin{definition}
\label{def:recursionJ}
For a boolean function $r:\bb 2^{[n]} \rightarrow \R$ with $r(\emptyset) = 1$ and $r(A) > 0$ for all non-empty $A \subsetneq [n]$, we define the boolean function $J_{r}: \bb 2^{[n]} \rightarrow \R_{>0}$ recursively as
\begin{align*}  J_{r}(A) = \sum_{e\in A} \frac{J_r(A\setminus e)}{ r(A\setminus e) } \text{ for all non-empty } A \subset [n] \text{ where } J_r(\emptyset) = 1. \end{align*}
\end{definition}
\begin{proposition}
\label{prop:Jrspecial}
If $r(A) = z_{\mathcal A}(A) - z_{\mathcal B}(A) $ for all non-empty $A \subsetneq [n]$ and $r(\emptyset) =1$, then
$I^\tr = J_r([n])$.
\end{proposition}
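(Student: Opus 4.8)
The plan is to prove a stronger identity, indexed by all subsets $A\subseteq[n]$, from which $I^\tr = J_r([n])$ drops out as the case $A = [n]$. First I would reinterpret the combinatorial data: for $A\subseteq[n]$ with $|A| = m$, an \emph{ordering} of $A$ (a bijection $\{1,\ldots,m\}\to A$) is the same thing as a maximal flag $\emptyset = B_0\subsetneq B_1\subsetneq\cdots\subsetneq B_m = A$, via $B_k = \{\sigma(1),\ldots,\sigma(k)\}$; in particular orderings of $[n]$ are exactly the permutations $\sigma\in S_n$, and then $B_k = A_k^\sigma$. The stronger claim is
\begin{align*}
J_r(A) = \sum_{\sigma\text{ ordering of }A}\ \prod_{k=1}^{m-1}\frac{1}{r(A_k^\sigma)},
\end{align*}
with $A_k^\sigma = \{\sigma(1),\ldots,\sigma(k)\}$ and the empty product equal to $1$. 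Once this is established, eq.~\eqref{eq:ItrdefGz} immediately gives $J_r([n]) = \sum_{\sigma\in S_n}\prod_{k=1}^{n-1} 1/r(A_k^\sigma) = I^\tr$.

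I would prove the claim by induction on $m = |A|$. The base case $m = 0$ is trivial: there is one (empty) ordering, the product is empty, and both sides equal $1 = J_r(\emptyset)$. For the inductive step with $m\ge 1$, I would split the sum over orderings of $A$ according to the last element $e = \sigma(m)$. For a fixed $e$ one has $A_{m-1}^\sigma = A\setminus e$, so the $k=m-1$ factor of the product is $1/r(A\setminus e)$ and pulls out, while the remaining factors $k\le m-2$ depend only on the ordering that $\sigma$ restricts to on $A\setminus e$. Summing over those orderings first and applying the induction hypothesis to $A\setminus e$, the right-hand side becomes $\sum_{e\in A} J_r(A\setminus e)/r(A\setminus e)$, which is precisely the recursion defining $J_r(A)$ in Definition~\ref{def:recursionJ}.

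I do not expect a serious obstacle; the only thing to check is that everything stays well-defined, namely that $r$ is only ever evaluated on $\emptyset$ (where $r(\emptyset)=1$ by the stated convention) or on sets $A\setminus e$ with $e\in A\subseteq[n]$, which are non-empty proper subsets of $[n]$ on which $r(A) = z_{\mathcal A}(A)-z_{\mathcal B}(A) > 0$ by Theorem~\ref{thm:secdec_gp}; hence all the quotients are finite and positive and the induction runs without incident. Conceptually the proposition just says that the sum of the sector weights $I^\tr_{\mathcal C_\sigma}$ over all Weyl chambers, reorganized as a sum over maximal flags of subsets, telescopes into the subset recursion of Definition~\ref{def:recursionJ}.
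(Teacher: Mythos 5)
Your proof is correct and follows exactly the same route as the paper's: both prove the stronger identity $J_r(A) = \sum_{\sigma:[m]\to A}\prod_{k=1}^{m-1}1/r(A_k^\sigma)$ by induction, decomposing the sum over orderings of $A$ by the choice of last element $e=\sigma(m)$ paired with an ordering of $A\setminus e$, and then specialize to $A=[n]$ via eq.~\eqref{eq:ItrdefGz}. You merely spell out the base case, the pulling-out of the $1/r(A\setminus e)$ factor, and the positivity check more explicitly than the paper's terse version.
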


\begin{proof}
We will prove that 
$ J_r(A) = \sum_{\sigma : [m] \rightarrow A} \frac{1} { \prod_{k=1}^{m-1} r(A^\sigma_k) } $, where the sum is over all bijections $\sigma: [m] \rightarrow A$.
Fixing such a bijection is equivalent to fixing a pair $(e,\mu)$ of an element $e \in A$ and a bijection $\mu:[m-1] \rightarrow A\setminus e$. Decomposing the sum in this way and using eq.~\eqref{eq:ItrdefGz} gives the statement.
\end{proof}
\begin{remark}
This recursive method to calculate the normalization factor $I^\tr$ might also be useful in other contexts. For instance, this can be used to calculate the volume of the \emph{polar dual} of a generalized permutahedron fairly efficiently. In the context of scattering amplitudes this recursion can also be used to calculate the canonical form of a generalized permutahedron.
\end{remark}

If we prepare a table of the values $J_r(A)$ and $r(A)$ for all $A \subset [n]$, we can run the following algorithm:
\begin{algorithm}[H]
\begin{algorithmic}[0]
\State{Set $A = [n]$ and $\kappa = 1$.}
\While{$A \neq \emptyset$}
\State{Pick a random $e \in A$ with probability $p_e = \frac{1}{J_r(A)} \frac{J_r(A\setminus e)}{r(A\setminus e)}$.}
\State{Remove $e$ from $A$, i.e.~set $A \gets A \setminus e$.}
\State{Set $\sigma({|A|}) = e$.}
\State{Set $x_e = \kappa$.}
\State{Pick a uniformly distributed random number $\xi \in [0,1]$.}
\State{Set $\kappa \gets \kappa \xi^{1/r(A)}$.}
\EndWhile
\State{Return $\bb x = [x_1,\ldots,x_n] \in \Exp(\mathcal C_\sigma) \subset \P^{n-1}_{>0}$ and $\sigma = (\sigma(1),\ldots,\sigma(n)) \in S_n$.}
\end{algorithmic}
\caption{to generate a sample from $\mu^\tr$ for generalized permutahedra}
\label{alg:gp_sampling}
\end{algorithm}
Note that the probability distribution $p_e = \frac{1}{J_r(A)} \frac{J_r(A\setminus e)}{r(A\setminus e)}$ over the elements $e\in A$ is properly normalized due to Definition~\ref{def:recursionJ}.

\begin{proposition}
If $r(A) = z_{\mathcal A}(A) - z_{\mathcal B}(A) $ for all non-empty $A \subsetneq [n]$, $r(\emptyset) =1$ and $J_r$ is the boolean function given in Definition~\ref{def:recursionJ}, then
Algorithm~\ref{alg:gp_sampling} generates a sample $\bb x \in \P^{n-1}_{>0}$, distributed as $\mu^\tr$ in eq.~\eqref{eq:mu_probability} in the generalized permutahedron case.
\end{proposition}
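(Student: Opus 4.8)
The plan is to show that Algorithm~\ref{alg:gp_sampling} is merely a memory-efficient reimplementation of Algorithm~\ref{alg:mu_sample} on the braid arrangement fan, so that its correctness follows from Theorem~\ref{thm:secdec_gp} together with the already-established fact that Algorithm~\ref{alg:mu_sample} produces samples distributed as $\mu^\tr$. Concretely, for an arbitrary measurable $f:\P^{n-1}_{>0}\to\C$ I would compute $\E[f(\bb x)]$ by conditioning on the permutation $\sigma=(\sigma(1),\ldots,\sigma(n))$ that the algorithm returns, writing $\E[f(\bb x)]=\sum_{\sigma\in S_n}\Pr[\sigma]\,\E[f(\bb x)\mid\sigma]$, and matching each of the two factors with the corresponding piece of the sector decomposition of Theorem~\ref{thm:secdec_gp}.

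For the first factor I would track the sequence $e_1,\ldots,e_n$ of elements removed from $A$ in successive passes through the \texttt{while}-loop. Setting $B_0=[n]$ and $B_m=B_{m-1}\setminus e_m$, the $m$-th element is drawn with conditional probability $J_r(B_m)/(J_r(B_{m-1})\,r(B_m))$, so the probability of producing the whole sequence is the telescoping product
\begin{align*}
\prod_{m=1}^{n}\frac{J_r(B_m)}{J_r(B_{m-1})\,r(B_m)}=\frac{J_r(\emptyset)}{J_r([n])}\prod_{m=1}^{n}\frac{1}{r(B_m)}=\frac{1}{I^\tr}\prod_{j=1}^{n-1}\frac{1}{r(A^\sigma_j)}=\frac{I^\tr_{\mathcal C_\sigma}}{I^\tr},
\end{align*}
where I used $J_r(\emptyset)=1$ and $J_r([n])=I^\tr$ from Proposition~\ref{prop:Jrspecial}, then $r(\emptyset)=1$ to drop the $m=n$ term, the identification $B_m=A^\sigma_{n-m}$ forced by the way the algorithm simultaneously builds $\sigma$, and finally eq.~\eqref{eq:ItrdefGz}. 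Hence $\sigma$ is drawn with exactly the probability $I^\tr_{\mathcal C_\sigma}/I^\tr$ assigned to the Weyl chamber $\mathcal C_\sigma$ in the first line of Algorithm~\ref{alg:mu_sample}.

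For the second factor I would unwind the updates of $\kappa$. Since the uniform numbers $\xi$ drawn inside the loop only feed into $\kappa$ and never influence which $e$ is chosen next, they form an i.i.d.\ family of $n-1$ uniform variables on $[0,1]$ that is independent of $\sigma$; a short induction on the loop iterations then shows that the returned point has coordinates $x_{\sigma(k)}=\prod_{i=k}^{n-1}\xi_i^{1/r(A^\sigma_i)}$ and $x_{\sigma(n)}=1$, i.e.\ exactly the map $\bb x^{(\sigma)}(\bb\xi)\in\Exp(\mathcal C_\sigma)$ of Theorem~\ref{thm:secdec_gp}. Therefore $\E[f(\bb x)\mid\sigma]=\int_{[0,1]^{n-1}}f(\bb x^{(\sigma)}(\bb\xi))\prod_{k=1}^{n-1}\dd\xi_k$, and combining with the previous paragraph yields $\E[f(\bb x)]=\frac{1}{I^\tr}\sum_{\sigma\in S_n}I_\sigma[f]=\frac{1}{I^\tr}I[f]=\int_{\P^{n-1}_{>0}}f\,\mu^\tr$ by Theorem~\ref{thm:secdec_gp} and eq.~\eqref{eq:mu_probability}. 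As $f$ was arbitrary, $\bb x$ is distributed as $\mu^\tr$.

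I expect the only genuine obstacle to be the index bookkeeping: one must check carefully that the set $B_m$ occurring at step $m$ of the recursion is the prefix set $A^\sigma_{n-m}$ of the permutation that the algorithm emits, and that the $\kappa$-recursion reproduces the product $\prod_{i=k}^{n-1}\xi_i^{1/r(A^\sigma_i)}$ of Theorem~\ref{thm:secdec_gp} rather than a reindexed variant. Once these identifications are pinned down, the argument is just the telescoping identity built into Definition~\ref{def:recursionJ} plus a reference to the already proven Theorem~\ref{thm:secdec_gp}.
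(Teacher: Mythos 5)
Your proposal is correct and is essentially the paper's own argument, just repackaged: the paper directly expands $\E[f(\bb x)]$ as a nested sum over the algorithm's iterations with the integral on the inside, telescopes the $J_r$ factors using $A_k\setminus e_k=A_{k-1}$, and then invokes Proposition~\ref{prop:Jrspecial} and Theorem~\ref{thm:secdec_gp}, while you factor the same computation through conditioning on the emitted permutation $\sigma$, proving $\Pr[\sigma]=I^\tr_{\mathcal C_\sigma}/I^\tr$ and $\E[f(\bb x)\mid\sigma]=\int_{[0,1]^{n-1}}f(\bb x^{(\sigma)}(\bb\xi))\prod_k\dd\xi_k$ separately. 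The reindexing issue you flag in the last paragraph is real but harmless: the $\kappa$-recursion produces $x_{\sigma(k)}=\prod_{j=k}^{n-1}\xi_{n-j}^{1/r(A^\sigma_j)}$, which differs from Theorem~\ref{thm:secdec_gp}'s formula only by a permutation of the i.i.d.\ uniforms $\xi_i$ and hence has the same distribution.
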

\begin{proof}
For any test function $f:\P_{>0}^{n-1} \rightarrow \C$ and a random sample $\bb x \in \P_{>0}^{n-1}$ generated by Algorithm~\ref{alg:gp_sampling}, 
\begin{gather*} \E[ f(\bb x) ] = \sum_{e_n \in A_n} \frac{1}{J_r(A_n)} \frac{J_r(A_n\setminus e_n)}{r(A_n\setminus e_n)}   \ldots \sum_{e_{1} \in A_{1}} \frac{1}{J_r(A_{1})} \frac{J_r(A_{1}\setminus e_{1})}{r(A_{1}\setminus e_{1})} \int_{[0,1]^{n-1}}f(\bb x(\xi)) \prod_{k=1}^{n-1}\dd \xi_k, \end{gather*}
where we gave distinguished subscripts to the numbers $e$ and sets $A$ in the reverse order in which they appear in Algorithm~\ref{alg:gp_sampling} and $\bb x(\xi)$ is component-wise $x_{e_k} = \prod_{i=k}^{n-1} \xi_i^{1/r(A_i)}$. We identify $A_k \setminus e_k = A_{k-1}$. The terms $J_r(A_{k}\setminus e_{k})$ telescope, $J_r(\emptyset) = r(\emptyset) = 1$ and we get
\begin{align*} \E[ f(\bb x) ] = \frac{1}{J_r(A_n)} \sum_{e_n \in A_n} \ldots \sum_{e_{1} \in A_{1}} \frac{1}{r(A_n) \cdots r(A_1)}   \int_{[0,1]^{n-1}}f(\bb x(\xi)) \prod_{k=1}^{n-1}\dd \xi_k. \end{align*}
The sum can be written as a sum over all permutations in $\sigma \in S_n$ and $A_k = A_k^\sigma$.
The statement follows from Proposition~\ref{prop:Jrspecial} and Theorem~\ref{thm:secdec_gp}.
\end{proof}

Algorithm~\ref{alg:gp_sampling} allows us to integrate any integral of the form in eq.~\eqref{eq:integral} via Monte Carlo quadrature without actually performing any complicated triangulation or non-trivial sector decomposition step if the polytopes $\mathcal A$ and $\mathcal B$ are generalized permutahedra. Compared to the naiver approach where a table of size $n!$ is needed, only a table of size $2^n$ is required. The complexity of the preprocessing step is similarly reduced as the recursion in Definition~\ref{def:recursionJ} gives an efficient way to calculate all the necessary constants: The table for $J_r(A)$ can be calculated in $\bigO(n 2^n)$ steps. 

All this achieves not only a huge improvement in the required runtime and memory of the algorithm, but also significantly reduces the complexity of the overall algorithm. Triangulating an $n$-dimensional polytope is an involved algorithmic task. Circumventing this triangulation with the approach above makes it straightforward to implement an efficient integration algorithm. A detailed example is given in the following section. 

\section{Feynman integrals}
\label{sec:feynman}

A scalar Feynman integral associated to a Feynman graph $G$ with $E$ edges and $V$ vertices in parametric representation in $D$-dimensional Euclidean space can be written as,
\begin{align} \label{eq:parametric} I_G &= \int_{\P^{E-1}_{> 0}} \frac{\prod_{e} x_e^{\nu_e}}{\Psi_G(\bb x)^{D/2}} \left( \frac{\Psi_G(\bb x)}{\Phi_G(\bb x)} \right)^{\omega(G)} \Omega, \end{align}
which depends on the \emph{edge weights} $\nu_1, \ldots, \nu_E$, which we will assume to be positive and real. The \emph{superficial degree of divergence} $\omega(G)$ is given by $\omega(G) = \sum_{e} \nu_e - \ell(G) D/2$, where $\ell(G)$ is the number of loops of $G$ (i.e.~the first Betti number of $G$). The \emph{Kirchhoff-Symanzik polynomials} $\Psi_G$ and $\Phi_G$ are homogeneous of degree $\ell(G)$ and $\ell(G)+1$ in the $x_e$ variables. Obviously, the integral $I_G$ is a specific instance of an integral of the form in eq.~\eqref{eq:integral}. To simplify the notation, we omitted a prefactor of $\Gamma(\omega_G)/\prod_e \Gamma(\nu_e)$, which is usually included in the definition of scalar Feynman integrals. See for instance \cite{Nakanishi:110324} for details on this representation of Feynman integrals.

A complete finiteness proof of the Euclidean space Feynman integral $I_G$ together with an analysis of its analytic continuation properties in the $\nu_1,\ldots,\nu_E$ parameters has been achieved by Speer~\cite{Speer:1975dc}. More recently, Brown~\cite{Brown:2015fyf} showed that there is a canonical way to associate the integral $I_G$ to a \emph{motivic} avatar, which can be thought of as a specific representation of a conjectured \emph{cosmic Galois group}. This group suggests the existence of a \emph{coaction principle} which relates different Feynman integrals in a highly non-trivial way and it allows to analyse Feynman integrals with a whole new toolkit of homological methods and representation theory. 

For our endeavour to merely evaluate the integrals $I_G$, we can make use of parts of Brown's analysis \cite{Brown:2015fyf} to ensure that the relevant polytopes associated to the $\Psi_G$ and $\Phi_G$ polynomials are generalized permutahedra. We will start by giving some additional details on these polynomials including an efficient way to evaluate them.

\subsection{Symanzik polynomials}
Explicitly, the polynomials can be expressed as sums over \emph{spanning trees} $T_1$ and \emph{spanning $2$-forests} (spanning forests with two connected components) $T_2$ of the graph $G$:
\begin{align} \label{eq:PsiPhi_slow} \Psi_G(\bb x) &= \sum_{T_1} \prod_{e\notin T_1} x_e & \Phi_G(\bb x) &= \sum_{T_2} \bb \|\bb p(T_2)\|^2 \prod_{e\notin T_2} x_e + \Psi_G \sum_{e} x_e m_e^2, \end{align}
where  $\bb p(T_2)$ is the total momentum flowing between the two components of the $2$-forest $T_2$. Only the $\Phi_G(\bb x)$ polynomial depends on the external physical parameters: a set of \emph{momenta} $\bb p^{(1)}, \ldots, \bb p^{(V)} \in \R^D$ incoming into each of the vertices and a set of masses $m_1, \ldots, m_E \in \R$ associated to the edges of the graph. These polynomials can also be written in terms of the weighted $V\times V$ Laplace matrix of the graph (see for instance \cite{Bogner:2010kv}), which is component-wise
\begin{align*} L_{v,w} &= \begin{cases} -x_e^{-1} &\text{if there is an edge $e$ between $v$ and $w$} \\ \sum_{e \text{ incident to } v} x_e^{-1} &\text{if $v=w$ } \\ 0 &\text{else} \end{cases} \end{align*}
This matrix is only positive semi-definite whereas the reduced Laplacian $\widetilde L_{v,w}(\bb x)$, which is given by an arbitrary leading principle minor of the matrix $L_{v,w}(\bb x)$, is positive definite. The Symanzik polynomials can be written as
\begin{align} \label{eq:PsiPhi_fast} \Psi_G(\bb x) &= \left(\prod_e x_e \right) \det( \widetilde L ) & \Phi_G(\bb x) &= \Psi_G\left( \Tr( P^T \widetilde L^{-1} P) + \sum_e x_e m_e^2 \right), \end{align}
where $P$ is the $(V-1) \times D$ matrix, given row-wise by the incoming momenta, $\bb p^{(v)} \in \R^D$: $P_{v,\mu} = p^{(v)}_\mu$ where $v=1,\ldots,V-1$ and $\mu$ is a $D$-dimensional spacetime index. Note that due to momentum conservation no information is lost when only $V-1$ of the $V$ incoming momenta are used.

The second representation of the Symanzik polynomials in eq.~\eqref{eq:PsiPhi_fast} is more suitable for numerical evaluation than eq.~\eqref{eq:PsiPhi_slow}. The number of spanning trees of a graph grows exponentially with the number vertices $V$ \cite{mckay1983spanning} and the evaluation of the expressions in eq.~\eqref{eq:PsiPhi_slow} quickly becomes intractable when the graph gets large. The evaluation of the determinant with the other matrix operations in eq.~\eqref{eq:PsiPhi_fast} is computationally much more favorable: With a Cholesky decomposition of the matrix $\widetilde L(\bb x)$ both the value of $\Psi_G(\bb x)$ and $\Phi_G(\bb x)$ can be immediately calculated. Computing the Cholesky decomposition of a $(V-1)\times (V-1)$-matrix takes $\bigO(V^3)$ time. Due to the special structure of the problem---the matrix $\widetilde L(\bb x)$ being the reduced Laplace matrix of a graph---there even exists a nearly linear time approximation algorithm \cite{spielman2014nearly} to compute this decomposition.

To give a precise account on the Newton polytopes of the Symanzik polynomials we need some additional notation from \cite{Brown:2015fyf} for subgraphs of Feynman graphs. A subgraph $\gamma \subset \Gamma$ is equivalent to a subset of edges of the graph $\Gamma$. The set of subgraphs is therefore isomorphic to the set $\bb 2^{[E]}$ and we will identify boolean functions $\bb 2^{[E]} \rightarrow \R$ with functions defined on the set of subgraphs of the graph $G$. Just as for $G$, we will denote the first Betti number of a subgraph (i.e.~the number of loops) as $\ell(\gamma)$. A subgraph $\gamma \subset G$ is called mass-momentum-spanning (m.m.)\ in $G$ if the second Symanzik polynomial of the contracted graph $G/\gamma$ vanishes $\Phi_{G/\gamma} = 0$. Mass-momentum-spanning graphs can also be defined combinatorially as subgraphs that contain all massive edges and one connected component which connects all vertices with non-zero incoming momentum. See \cite[Definition~2.6]{Brown:2015fyf} for details on these types of subgraphs.

\begin{theorem}
\label{thm:psi_phi_facet}
If we restrict to Euclidean and non-exceptional kinematics, then the Newton polytope of $\Psi_G$ and $\Phi_G$ are generalized permutahedra. A facet presentation of these polytopes is given by the supermodular functions 
\begin{align*} z_{\Psi_G}(\gamma) &= \ell(\gamma) \\ z_{\Phi_G}(\gamma) &= \begin{cases} \ell(\gamma) +1 &\text{ if $\gamma$ is m.m.\ in } G \\ \ell(\gamma) &\text{ else } \end{cases} \end{align*}
for all subgraphs $\gamma \subset \Gamma$.
\end{theorem}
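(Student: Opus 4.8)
The plan is to compute $\NP_{\Psi_G}$ and $\NP_{\Phi_G}$ directly from the spanning-tree and spanning-$2$-forest expansions in eq.~\eqref{eq:PsiPhi_slow} and to identify each of them with a generalized permutahedron $\mathcal{G}_z$ via the facet description of Theorem~\ref{thm:inequality_gp} and the Fujishige--Tomizawa vertex formula of Lemma~\ref{lmm:winG}. The first task is to pin down the supports. The exponent vectors of $\Psi_G$ are exactly the indicator vectors $\bb 1_{[E]\setminus T_1}$ of complements of spanning trees $T_1$, each with positive coefficient; the exponent vectors of $\Phi_G$ are the $\bb 1_{[E]\setminus T_2}$ over spanning $2$-forests $T_2$ and the $\bb 1_{[E]\setminus T_1}+\bb e_e$ over spanning trees $T_1$ and edges $e$ with $m_e\neq 0$, where $\bb x^{\bb 1_{[E]\setminus T_2}}$ actually occurs precisely when $\|\bb p(T_2)\|^2\neq 0$ or some massive edge bridges the cut of $T_2$. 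This is where the hypotheses are used: in Euclidean signature $\|\bb p(T_2)\|^2\ge 0$ and $m_e^2\ge 0$, so every coefficient of $\Phi_G$ is a nonnegative combination and no cancellation can occur, and the non-exceptional assumption guarantees that $\|\bb p(T_2)\|^2=0$ only when the cut of $T_2$ fails to separate two external momenta; equivalently, it makes the combinatorial and the analytic ($\Phi_{G/\gamma}=0$) notions of a mass-momentum-spanning subgraph coincide. This is the combinatorial input supplied by Brown \cite{Brown:2009ta,Brown:2015fyf} and reworked toric-geometrically by Schultka \cite{schultka2018toric}; in general kinematics it fails and $\NP_{\Phi_G}$ need not be a generalized permutahedron.

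For $\Psi_G$ I would note that $\bb 1_{[E]\setminus T_1}$ ranges over the bases of the cographic matroid of $G$, so $\NP_{\Psi_G}$ is its matroid base polytope. To match it with $\mathcal{G}_{z_{\Psi_G}}$, observe that for every subgraph $\gamma$ and spanning tree $T_1$ one has $\sum_{e\in\gamma}(\bb 1_{[E]\setminus T_1})_e=|\gamma|-|\gamma\cap T_1|\ge|\gamma|-r(\gamma)=\ell(\gamma)$, where $r$ is the rank function of the graphic matroid, with equality as soon as $T_1$ extends a maximal forest of $\gamma$. Hence $\NP_{\Psi_G}\subset\mathcal{G}_{z_{\Psi_G}}$, and $z_{\Psi_G}(\gamma)=|\gamma|-r(\gamma)$ is supermodular because $r$ is submodular. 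For the reverse inclusion, Lemma~\ref{lmm:winG} gives the vertices of $\mathcal{G}_{z_{\Psi_G}}$ as the vectors with coordinates $\ell(A^\sigma_k)-\ell(A^\sigma_{k-1})\in\{0,1\}$; this is the indicator vector of the edges rejected by Kruskal's algorithm run on the ordering $\sigma$, i.e.\ $\bb 1_{[E]\setminus T_1(\sigma)}$, which is a vertex of $\NP_{\Psi_G}$. Therefore $\mathcal{G}_{z_{\Psi_G}}\subset\NP_{\Psi_G}$ and the two polytopes coincide, which by Definition~\ref{def:gen_permutahedron} exhibits $\NP_{\Psi_G}$ as a generalized permutahedron.

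For $\Phi_G$ I would first check $\NP_{\Phi_G}\subset\mathcal{G}_{z_{\Phi_G}}$ by testing the two families of support vectors against $\sum_{e\in\gamma}v_e\ge z_{\Phi_G}(\gamma)$. For $v=\bb 1_{[E]\setminus T_1}+\bb e_e$ with $m_e\neq 0$, if $\gamma$ is mass-momentum-spanning then $e\in\gamma$ and $\sum_{e'\in\gamma}v_{e'}\ge\ell(\gamma)+1$; for $v=\bb 1_{[E]\setminus T_2}$ with $\bb p(T_2)\neq 0$, if $\gamma$ is mass-momentum-spanning then $\gamma$ has a single component joining all momentum vertices, which lie on both sides of the cut of $T_2$, so $\gamma$ must cross that cut and hence $|\gamma\cap T_2|\le r(\gamma)-1$, again giving $\sum_{e'\in\gamma}v_{e'}\ge\ell(\gamma)+1$; when $\gamma$ is not mass-momentum-spanning the bound $\ge\ell(\gamma)$ from the $\Psi_G$ computation suffices. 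For the reverse inclusion, once $z_{\Phi_G}$ is supermodular, Lemma~\ref{lmm:winG} applies once more: $\bb w^{(\sigma,z_{\Phi_G})}$ is the greedy vector from the $\Psi_G$ case with an extra $+1$ deposited on the unique edge $\sigma(k^\ast)$ at which $A^\sigma_k$ first becomes mass-momentum-spanning, and a short case split---according to whether $\sigma(k^\ast)$ is a massive edge or only a connectivity edge, and whether it closes a cycle---identifies $\bb w^{(\sigma,z_{\Phi_G})}$ with one of the monomials found above, the mass term supplying it in the massive case (even if $\bb p(T_2)=0$) and the $2$-forest term in the connectivity case (where the cut necessarily separates momentum). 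Hence $\NP_{\Phi_G}=\mathcal{G}_{z_{\Phi_G}}$ and is a generalized permutahedron.

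The main obstacle is the supermodularity of $z_{\Phi_G}$. It is not a sum of supermodular functions, because the mass-momentum-spanning indicator $\gamma\mapsto\bb 1[\gamma\text{ is m.m.\ in }G]$ is by itself not supermodular: two edge sets can each join all external-momentum vertices while their intersection does not. One has to show that whenever $A,B$ are both mass-momentum-spanning but $A\cap B$ is not---necessarily because $A\cap B$ fails to connect the momentum vertices, since both $A$ and $B$ already contain every massive edge---the submodularity inequality $r(A)+r(B)\ge r(A\cap B)+r(A\cup B)$ for the graphic rank is strict by at least one unit, so the loss in the indicator is exactly absorbed; intuitively, two distinct connecting paths fuse into an additional independent cycle in $A\cup B$. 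Making this precise by an analysis of the component structure of $A\cap B$ versus $A\cup B$ is the technical heart, and it is exactly what Brown \cite{Brown:2015fyf} establishes, so in the write-up I would either reproduce that argument or cite it. An alternative that avoids proving supermodularity by hand is to show directly that the normal fan of $\NP_{\Phi_G}$ coarsens the braid fan---equivalently, that for generic $\bb y$ the greedy matroid optimization selecting the vertex of $\NP_{\Phi_G}$ maximal in the direction $\bb y$ has a unique optimum depending only on the ordering of the $y_e$---after which Theorem~\ref{thm:inequality_gp} returns the supermodular $z_{\Phi_G}$ for free; this is closer to Schultka's \cite{schultka2018toric} toric-geometric viewpoint.
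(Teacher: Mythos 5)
Your proof is correct and takes a genuinely different route from the paper. The paper's own proof is essentially a citation: it attributes the result to Schultka's Theorem~4.15, which in turn rests on Brown's factorization lemmas, traces the Euclidean generalized-permutahedron insight back to Hepp and Speer, and observes only in a \emph{remark} that Panzer's Kruskal-algorithm argument handles $\NP_{\Psi_G}$ and could in principle be extended to $\NP_{\Phi_G}$ via minimal $2$-forests. You actually carry out that extension: you identify $\NP_{\Psi_G}$ as the cographic matroid base polytope, check the facet inequalities $\sum_{e\in\gamma}v_e\geq z(\gamma)$ against the spanning-tree and $2$-forest support vectors (using that $\gamma\cap T_1$ and $\gamma\cap T_2$ are forests, with the extra unit coming from the m.m.\ component crossing the $T_2$ cut or from the massive-edge factor), and match the Fujishige--Tomizawa vertices $\bb w^{(\sigma,z)}$ to support monomials via Kruskal's greedy rule plus a case split at the unique jump index $k^{\ast}$ of the m.m.\ indicator. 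You also locate correctly where the two hypotheses enter: Euclidean signature guarantees no cancellation of coefficients, and non-exceptional kinematics rules out accidental vanishing of $\|\bb p(T_2)\|^2$ so that the greedy vertex really lies in $\supp\Phi_G$. And you are right to flag supermodularity of $z_{\Phi_G}$ as the one genuinely nontrivial ingredient (the m.m.\ indicator alone is not supermodular, and the deficit must be absorbed by strict submodularity of the graphic rank when $A,B$ are m.m.\ but $A\cap B$ is not); the paper likewise defers precisely this point to Brown and Schultka. In short, your version is a self-contained constructive proof modulo that one cited lemma — essentially fleshing out what the paper only gestures at in its Kruskal remark — whereas the paper's version buys brevity by outsourcing everything to the references.
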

See \cite[Section~1.7]{Brown:2015fyf} for a definition of \emph{non-exceptional} or \emph{generic} kinematics. Briefly, this condition ensures that there is 
no non-trivial combination of the external momenta that adds up to $0$. 
It is worth remarking that this condition is not necessary if the combinatorial concept of mass-momentum-spanning is slightly generalized while keeping the equivalence $\Phi_{G/\gamma} = 0 \Leftrightarrow \text{mass-momentum-spanning}$. With this generalization it is sufficient to 
require that \emph{any} external momenta are non-zero. %
\begin{proof}
Theorem~\ref{thm:psi_phi_facet} has been proven by Schultka \cite[Theorem~4.15]{schultka2018toric} using results from Brown \cite{Brown:2015fyf}. 

The first statement that the Newton polytopes of $\Psi_G$ and $\Phi_G$ are generalized permutahedra can be traced back to Hepp \cite{Hepp:1966eg} and Speer \cite{Speer:1975dc}, who realized that a complete ordering of the integration parameters in eq.~\eqref{eq:parametric} is sufficient to capture the relevant singularities of parametric integrals in the Euclidean non-exceptional case. See also \cite{Smirnov:2008aw} for a comparison of this viewpoint with modern sector decomposition techniques.

The form of the boolean functions $z_{\Psi_G}$ and $z_{\Phi_G}$ follows directly from the factorization laws \cite[Proposition~2.2]{Brown:2015fyf}, \cite[Proposition~2.4]{Brown:2015fyf} and \cite[Theorem~2.7]{Brown:2015fyf} of the $\Psi_G$ and $\Phi_G$ polynomials. Their supermodularity follows from the argument in \cite{schultka2018toric} after Corollary~4.12.
\end{proof}
\begin{remark}
It was implicitly proved by Panzer \cite[Lemma~2.8]{Panzer:2019yxl} that the Newton polytope of $\Psi_G$ is a generalized permutahedron using an elegant argument based on \emph{Kruskal's algorithm} \cite{kruskal1956shortest}. This argument can also be generalized to the $\Phi_G$ polynomials by an extension of Kruskal's algorithm to minimal $2$-forests. 
\end{remark}

\begin{remark}
Theorem~\ref{thm:psi_phi_facet} is also of interest in a different context: Generalized permutahedra have a universal property with respect to their \emph{Hopf monoid} structure. Feynman graphs carry a Hopf algebra structure which is deeply intertwined with renormalization \cite{Connes:1999yr} and encodes the singularity structure of the integrand \cite{Bloch:2005bh,Brown:2015fyf}. The relationship between these two structures remains to be explored.
\end{remark}

\begin{remark}
For general non-Euclidean kinematics, the Newton polytope of $\Phi_G$ is not a generalized permutahedron. An explicit counterexample is given in \cite[Section~2.4]{Smirnov:2012gma}. We emphasize that the general tropical sampling algorithm introduced in Section~\ref{sec:trop_sampling} still applies. The caveat is that an explicit triangulation has to be computed in contrast to the generalized permutahedron case where no explicit triangulation is necessary.  
\end{remark}

\subsection{Tropical Monte Carlo quadrature of Euclidean Feynman integrals}
To perform the generalized permutahedron tropical Monte Carlo routine from Section~\ref{sec:genperm} on the parametric Feynman integral in eq.~\eqref{eq:parametric}, we still have to ensure that the numerator monomials $\prod_e x_e^{\nu_e}$ are generalized permutahedra. This is of course trivial, as the Newton polytope of a monomial is zero-dimensional and its normal fan is trivial. The braid arrangement fan is automatically a refinement of this fan and the conditions for Definition~\ref{def:gen_permutahedron} are fulfilled. The facet presentation of these $0$-dimensional polytopes associated to the Newton polytope of the polynomial $p_e(\bb x) = x_e$ in the form of Theorem~\ref{eq:inequality_gp} is given by the boolean function $z_{p_e}(\gamma) = 1$ if $e \in \gamma$ and $z_{p_e}(\gamma) = 0$ if $e\not \in \gamma$ for all subgraphs $\gamma$.

Because we assume that the edge weights $\nu_e$, the dimension $D$ and the superficial degree of divergence $\omega(G)$ are real, 
we have
\begin{align*} \mathcal A &= \sum_{e} \nu_e \NP_{p_e} +~ \omega(G) \NP_{\Psi_G} & \mathcal B &= \frac12 D \NP_{\Psi_G} +~ \omega(G) \NP_{\Phi_G} &&\text{if $\omega(G) \geq 0$ and } \\ \mathcal A &= \sum_{e} \nu_e \NP_{p_e} +~ (-\omega(G)) \NP_{\Phi_G} & \mathcal B &= \frac12 D \NP_{\Psi_G} +~ (-\omega(G)) \NP_{\Psi_G} &&\text{if $\omega(G) < 0$.} \end{align*}
we can define the boolean function $r_G:\bb 2^{[E]} \rightarrow \R$ as in Theorem~\ref{thm:secdec_gp},
\begin{align*} r_G(\gamma) &= z_{\mathcal A}( \gamma ) - z_{\mathcal B}( \gamma ) \\ &= \sum_{e} \nu_e z_{p_e}(\gamma) - \frac{D}{2} z_{\Psi_G}(\gamma) + \omega(G) (z_{\Psi_G}(\gamma) - z_{\Phi_G}(\gamma) ) \\ &= \sum_{e \in \gamma} \nu_e - \frac{D}{2} \ell(\gamma) - \omega(G)~ \delta_{\text{m.m.}}(\gamma) \text{ for all non-empty } \gamma \subset G, \end{align*}
where we used Lemma~\ref{lmm:gpMinkowski} and Theorem~\ref{thm:psi_phi_facet} and where $\delta_{\text{m.m.}}(\gamma) = 1$ if $\gamma$ is mass-momentum-spanning and $0$ otherwise. Note that up to the $\delta_{\text{m.m.}}$-term the function $r_G(\gamma)$ is equal to the superficial degree of divergence $\omega(\gamma)$ of a subgraph.

For Euclidean kinematics, the polynomials $\Psi_G$ and $\Phi_G$ have only positive coefficients. Therefore, they are completely non-vanishing on $\P_{>0}^{E-1}$. We can apply Theorem~\ref{thm:secdec_gp} independently of the sign of $\omega(G)$ and find that the parametric integral in eq.~\eqref{eq:parametric} is convergent if $r_G(\gamma) > 0$ for all non-empty proper $\gamma \subsetneq G$ by using Corollary~\ref{crll:gp_inclusion} which implies $\mathcal A \subset \relint \mathcal B$ in this case. In fact, it is sufficient that $r_G(\gamma) > 0$ holds for all proper \emph{motic} subgraphs $\gamma$ as defined in \cite[Definition~3.1]{Brown:2015fyf} for $I_G$ to be convergent. 

As defined in eq.~\eqref{eq:mu_probability} the tropical differential form associated to $I_G$ is
\begin{align*} \mu^\tr_G = \frac{1}{I_G^\tr} \frac{\prod_{e} x_e^{\nu_e}}{\Psi_G^\tr(\bb x)^{D/2}} \left( \frac{\Psi_G^\tr(\bb x)}{\Phi_G^\tr(\bb x)} \right)^{\omega(G)} \Omega, \end{align*}
with an appropriate normalization factor $I^\tr_G$ such that $1 = \int_{\P^{n-1}_{>0}} \mu^\tr_G$. For $\omega(G)=0$ this normalization factor is a certain invariant of the graph $G$ which has been studied by Panzer \cite{Panzer:2019yxl}. This invariant is the \emph{Hepp-bound}. The Hepp-bound is independent of the physical parameters encoded in the masses and external momenta. It mirrors many properties of the \emph{period}, which is given by the integral in eq.~\eqref{eq:parametric} in the same special case $\omega(G) = 0$. The period is another graph invariant which has interesting number theoretical properties \cite{Broadhurst:1995km,Brown:2009ta,Brown:2009rc,Hu:2018liw}.

By Definition~\ref{def:recursionJ}, the normalization factor can be generalized to a subgraph function $J_G : \bb 2^{[E]} \rightarrow \R$ which is determined by the recursion
\begin{align*} J_G(\gamma) = \sum_{e\in \gamma} \frac{J_G(\gamma \setminus e)}{ r_G(\gamma \setminus e) } \text{ for all non-empty } \gamma \subset G \text{ with } J_G(\emptyset) = 1 \text{ and } r_G(\emptyset) = 1. \end{align*}
The actual normalization factor is recovered for $\gamma = G$, i.e.~$I_G^\tr = J_G(G)$
 by Proposition~\ref{prop:Jrspecial}. With a precalculated table of the values $r_G(\gamma)$ and $J_G(\gamma)$ for all $\gamma \subset G$, Algorithm~\ref{alg:gp_sampling} provides an efficient way to sample from the distribution given by the differential form $\mu^\tr_G$ on $\P^{n-1}_{>0}$. Using this sampling algorithm we can obtain estimates for the parametric Feynman integral eq.~\eqref{eq:parametric} by the standard Monte Carlo procedure from Theorem~\ref{thm:montecarlo} or equivalently Algorithm~\ref{alg:tropical_sampling}.

\subsection{Expansions in regularization parameters}
Often not only the integral in eq.~\eqref{eq:parametric} is of interest, but also the Taylor expansions of the parameters $D$ and $\nu_e$ around specific points. Very important is the $\varepsilon$-expansion of the parametric Feynman integral in eq.~\eqref{eq:parametric} in the context of dimensional regularization. Effectively, such an expansion results in integrals of the form
\begin{align} \widetilde{I}_G &= \int_{\P^{E-1}_{> 0}} \frac{\prod_{e} x_e^{\nu_e}}{\Psi_G(\bb x)^{D/2}} \left( \frac{\Psi_G(\bb x)}{\Phi_G(\bb x)} \right)^{\omega(G)} \left(\prod_{e} \log^{k_e} (x_e) \right) \log^s (\Psi_G) \log^t (\Psi_G/\Phi_G) \Omega, \end{align}
for some set of integers $s,t\in \N$ and $k_1,\ldots,k_E \in \N$.%
The estimation of this generalization is also possible using Algorithm~\ref{alg:mu_sample} or Algorithm~\ref{alg:gp_sampling}. Using
\begin{gather*} \widetilde{I}_G = I^\tr_G \int_{\P^{E-1}_{> 0}} \frac{1}{(\Psi_G(\bb x)/\Psi^\tr_G(\bb x))^{D/2}} \left( \frac{\Psi_G(\bb x)/\Psi_G^\tr(\bb x)}{\Phi_G(\bb x)/\Phi_G^\tr(\bb x)} \right)^{\omega(G)} \times \\ \times \left(\prod_{e} \log^{k_e} (x_e) \right) \log^s (\Psi_G) \log^t (\Psi_G/\Phi_G) \mu^\tr, \end{gather*}
gives the desired estimate. A caveat is that the integrand is not bounded anymore, as the logarithms will exhibit singularities at the boundary of the integration domain. This is not a severe problem, as these singularities are square integrable and Theorem~\ref{thm:montecarlo} may still be applied. %
\subsection{Some experimental results}

\begin{table}
\begin{center}
\begin{tabular}{ c | c | c | c | c | r }
$E$ & $\ell(G)$ & $\sigma_I / I$ & samples per second & preprocessing time & RAM \\ \hline\hline
$6$ & $3$ & $0.9$ & $1.1 \cdot 10^{6} \phantom{{}^{-}}/~s$ & $3.0 \cdot 10^{-5}~s$ & 1 KB \\ \hline
$8$ & $4$ & $1.1$ & $7.5 \cdot 10^{5} \phantom{{}^{-}}/~s$ & $1.3 \cdot 10^{-4}~s$ & 4 KB \\ \hline
$10$ & $5$ & $1.3$ & $5.1 \cdot 10^{5} \phantom{{}^{-}}/~s$ & $6.0 \cdot 10^{-4}~s$ & 16 KB \\ \hline
$12$ & $6$ & $1.6$ & $4.1 \cdot 10^{5} \phantom{{}^{-}}/~s$ & $2.7 \cdot 10^{-3}~s$ & 64 KB \\ \hline
$14$ & $7$ & $1.8$ & $3.2 \cdot 10^{5} \phantom{{}^{-}}/~s$ & $1.2 \cdot 10^{-2}~s$ & 256 KB \\ \hline
$16$ & $8$ & $2.1$ & $2.6 \cdot 10^{5} \phantom{{}^{-}}/~s$ & $5.3 \cdot 10^{-2}~s$ & 1 MB \\ \hline
$18$ & $9$ & $2.5$ & $2.1 \cdot 10^{5} \phantom{{}^{-}}/~s$ & $2.3 \cdot 10^{-1}~s$ & 4 MB \\ \hline
$20$ & $10$ & $2.8$ & $1.4 \cdot 10^{5} \phantom{{}^{-}}/~s$ & $1.1 \cdot 10^{0} \phantom{{}^{-}}~s$ & 16 MB \\ \hline
$22$ & $11$ & $3.2$ & $1.0 \cdot 10^{5} \phantom{{}^{-}}/~s$ & $4.7 \cdot 10^{0} \phantom{{}^{-}}~s$ & 64 MB \\ \hline
$24$ & $12$ & $3.7$ & $8.6 \cdot 10^{4} \phantom{{}^{-}}/~s$ & $2.1 \cdot 10^{1} \phantom{{}^{-}}~s$ & 256 MB \\ \hline
$26$ & $13$ & $4.2$ & $6.9 \cdot 10^{4} \phantom{{}^{-}}/~s$ & $9.5 \cdot 10^{1} \phantom{{}^{-}}~s$ & 1 GB \\ \hline
$28$ & $14$ & $4.8$ & $5.9 \cdot 10^{4} \phantom{{}^{-}}/~s$ & $4.4 \cdot 10^{2} \phantom{{}^{-}}~s$ & 4 GB \\ \hline
$30$ & $15$ & $5.3$ & $5.1 \cdot 10^{4} \phantom{{}^{-}}/~s$ & $1.9 \cdot 10^{3} \phantom{{}^{-}}~s$ & 16 GB \\ \hline
$32$ & $16$ & $6.3$ & $4.3 \cdot 10^{4} \phantom{{}^{-}}/~s$ & $8.7 \cdot 10^{3} \phantom{{}^{-}}~s$ & 64 GB \\ \hline
$34$ & $17$ & $7.2$ & $3.6 \cdot 10^{4} \phantom{{}^{-}}/~s$ & $3.9 \cdot 10^{4} \phantom{{}^{-}}~s$ & 256 GB \\ \hline
\hline
\end{tabular}
\end{center}
\caption{Benchmark of Feynman integral evaluations with different numbers of edges.}
\label{tbl:benchmark}
\end{table}
A proof-of-concept \texttt{C++} implementation of this algorithm, which evaluates general Euclidean Feynman integrals, is available on the author's personal web page\footnote{\href{https://michaelborinsky.com}{michaelborinsky.com}} and in the ancillary files to the arXiv version of this article. The algorithm has been tested on various graphs from $\varphi^4$-theory in four dimensions, which have been generated using tools from \cite{Borinsky:2014xwa}. To illustrate the performance of the algorithm a benchmark is given in Table~\ref{tbl:benchmark}. The benchmark has been performed on a single core of an \texttt{AMD EPYC 7702P} processor. The columns $E$ and $\ell(G)$ show the number of edges (equivalently the dimension of the integral $+1$) and the corresponding number of loops of the underlying $\varphi^4$-graph. The column $\sigma_I/I$ gives the relative standard deviation of the samples, i.e.~if $\delta^{-2} \cdot \sigma_I/I$ samples are drawn, then a relative accuracy $\delta$ can be expected from the resulting estimate. Up to this expected accuracy, all obtained estimates are consistent with the available analytic results from \cite{Broadhurst:1995km,Schnetz:2008mp,Brown:2015ztw,Panzer:2016snt,schnetz2018numbers}. The implementation has also been checked using numerical calculations of non-$\varphi^4$ graphs with non-trivial masses and kinematics performed with \texttt{pySecDec} \cite{Borowka:2017idc,Borowka:2018goh}. 

Recall that the algorithm can be applied to arbitrary $D$-dimensional scalar Feynman integrals with arbitrary kinematics in the Euclidean regime and the benchmark results can expected to be representative for the evaluation of all such graphs with the same number of edges. The choice for $\varphi^4$-theory and $D=4$ is practical because much analytic data is available even at high loop orders, which allows for convenient checks of the numerical estimates. 

As can be seen from the table, the number of samples per second decreases slowly with the loop order or equivalently the dimension of the problem. The necessary time for the preprocessing step on the other hand depends exponentially on the dimension.  For example: it takes $2.5$ CPU-seconds to evaluate a graph with $10$ edges and general kinematics up to $\delta= 10^{-3}$ relative accuracy. The necessary time for the preprocessing step of $6.6 \cdot 10^{-4}~s$ is negligible and the memory requirements of $16 \text{ KB}$ insignificant. It takes $20$ CPU-seconds to evaluate a graph with $20$ edges and general kinematics up to the same relative accuracy. The time for the preprocessing step is $1$ second and the memory requirements of $16 \text{ MB}$ are still very manageable.  Similarly, it takes about $2$ CPU-minutes to evaluate a Feynman graph with $30$ edges up to this accuracy, after the preprocessing step has been performed. At this point this preprocessing step unfortunately already takes about $30$ minutes and $16 \text{ GB}$ of RAM are necessary.

The evaluation step of the algorithm is fully parallelizable and the preprocessing step partially. The memory requirements can be reduced in the special case $\omega(G)=0$ or by using a more efficient storage of the relevant constants. The overall picture of exponentially growing memory demands and an exponential time for the preprocessing step will not change without modifying the algorithm significantly.

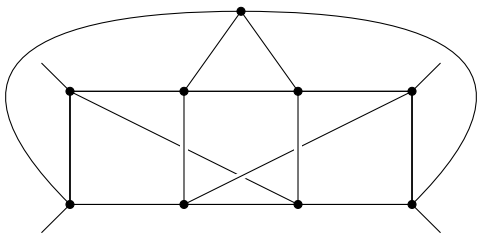
\begin{figure}

    \begin{center}
        \begin{tikzpicture}[scale=1.5] \coordinate (v1) at (0,0); \coordinate (v2) at (1,0); \coordinate (v3) at (2,0); \coordinate (v4) at (3,0); \coordinate (v5) at (0,1); \coordinate (v6) at (1,1); \coordinate (v7) at (2,1); \coordinate (v8) at (3,1); \coordinate (v0) at (1.5,1+0.70710678118 ); \coordinate (i1) at (-.25,-.25); \coordinate (i2) at (-.25,1.25); \coordinate (i3) at (3.25,-.25); \coordinate (i4) at (3.25,1.25); \draw (v5) -- (v3); \draw[preaction={draw, line width=3pt, white}] (v2) -- (v8); \draw[preaction={draw, line width=3pt, white}] (v2) -- (v6); \draw[preaction={draw, line width=3pt, white}] (v3) -- (v7); \draw (v0) -- (v6); \draw (v0) -- (v7); \draw (i1) -- (v1); \draw (i2) -- (v5); \draw (i3) -- (v4); \draw (i4) -- (v8); \draw (v1) -- (v5); \draw (v4) -- (v8); \draw (v1) -- (v2) -- (v3) -- (v4) -- (v8) -- (v7) -- (v6) -- (v5) -- (v1); \draw (v1) -- (v2) -- (v3) -- (v4) -- (v8) -- (v7) -- (v6) -- (v5) -- (v1); \filldraw (v1) circle(1pt); \filldraw (v2) circle(1pt); \filldraw (v3) circle(1pt); \filldraw (v4) circle(1pt); \filldraw (v5) circle(1pt); \filldraw (v6) circle(1pt); \filldraw (v7) circle(1pt); \filldraw (v8) circle(1pt); \filldraw (v0) circle(1pt); \draw (v0) to[out=180,in=135,looseness=2] (v1); \draw (v0) to[out=0,in=45,looseness=2] (v4); \end{tikzpicture}

    \end{center}
    \caption{A $8$-loop $\varphi^4$-graph whose period does not evaluate to a linear combination of multiple zeta values or multiple polylogarithms at roots of unity.} 
    \label{fig:brownschnetz_monster}
\end{figure}
An interesting example of a $\varphi^4$-graph in $D=4$, whose evaluation was not approachable by any previously existing techniques, is the graph in Figure~\ref{fig:brownschnetz_monster}. It is one of the smallest graphs in $\varphi^4$-theory whose period is not a linear combination of multiple zeta values or multiple polylogarithms at roots of unity. This has been proven in \cite[Section~6.2]{Brown:2010bw} for a graph which is equivalent with respect to its period by the \emph{completion} identity \cite{Schnetz:2008mp}.
Sampling $10^{12}$ points in about $24$ hours on $54$-CPU-cores results in the following estimate for the period of this graph,
\begin{align*} I_G &= \int_{\P^{E-1}_{> 0}} \frac{\prod_{e} x_e}{\Psi_G(\bb x)^{2}} \Omega \approx 422.9610 \pm 0.0009 . \end{align*}

\section{Further research directions}
\label{sec:openquestions}
\begin{enumerate}
\item 
\textbf{(Markov chain Monte Carlo based sampling)}
The tropical Monte Carlo algorithms are still very limited in terms of the complexity of the integrals to which they apply, because of the cumbersome preprocessing step that has to be performed for each integral. 
To overcome this bottleneck without relying on special structures of the integrals, it would be necessary to find a more efficient way to sample from $\mu^{\tr}$ than Algorithm~\ref{alg:mu_sample} or Algorithm~\ref{alg:gp_sampling}, while also having access to the normalization factor $I^\tr$. Eventually, one has to settle with a still relatively slow algorithm for this task, as we have a `no-go Theorem' in a special case: if all the numerator polynomials are monomials, i.e.\ $a_i(\bb x) = x_i$, then $I^\tr$ corresponds to the volume of a certain polytope. Computing or approximating the volume of a general $n$-dimensional polytope is a task that cannot be performed deterministically in polynomial time \cite{barany1987computing}. A workaround is to use a non-deterministic algorithm  for both the computation of the normalization factor $I^\tr$ and to obtain samples from $\mu^\tr$. There are many highly advanced Markov chain Monte Carlo algorithms that have been developed to perform exactly this task (see for instance \cite{dyer1991random,lovasz1993random} and the references therein). It is very plausible that adapting these polytope integration and sampling algorithms to our algebraic integral quadrature application should result in the sought after polynomial time algorithm for algebraic and Feynman integral evaluation.

\item \textbf{(Physical integration regions and components of coamoeba)}
The last condition in Theorem~\ref{thm:convergence} is closely related to the \emph{coamoeba} of the set of polynomials $\{b_j\}$. 
If a polynomial $p\in\C[x_1,\ldots,x_n]$ has zero locus $Z_{p} = \{ \bb z \in (\C \setminus \{ 0 \})^n: p(\bb z) = 0 \}$, then the coamoeba of $p$ is the image of $Z_p$ under the coordinate-wise complex $\arg$-function: $\mathcal A_p' = \operatorname{Arg}( Z_p ) \subset [0,2\pi]^n$. The coamoeba is related to the \emph{amoeba} which goes back to Gelfand, Kapranov and Zelevinsky \cite{gkz1994} and has numerous applications in tropical geometry. By a result proven independently by Johansson \cite{johansson2013argument} and Nisse, Sottile \cite{nisse2013}, a polynomial is completely non-vanishing if the origin is not in the closure of its coamoeba $\bb 0 \not \in \bar {\mathcal A}_p'$. In \cite{nilsson2013mellin} it was shown via Cauchy's theorem that the integration cycle $\R^{n-1}_{>0}$ of the integral in eq.~\eqref{eq:integral_euler_mellin} can be replaced with the $\operatorname{Arg}^{-1}(\theta)$ as long as $\theta$ and $\bb 0$ lie in the same connected component of the intersection of the coamoeba of the denominator polynomials. A similar argument works for the projective version of generalized Euler-Mellin integrals which was considered here.

A strikingly reminiscent procedure is necessary while evaluating Feynman integrals with kinematics in Minkowski space. The necessary analytic continuation in this case is governed by the $i \varepsilon$-prescription, which ultimately results from causality and unitarity constraints on the amplitude \cite{Eden:1966dnq}. Formulating this procedure in terms of a canonical choice of a component in the respective coamoeba would result in a canonical analytic continuation procedure in the Minkowski case.
See also 
\cite{delaCruz:2019skx} where related observation regarding parametric Feynman integrals and coamoeba have been made.

\item
\textbf{(Further acceleration of the algorithms by using more structures)}
In the $\omega(G)=0$ case the normalization factor of the $\mu^\tr_G$ distribution for the parametric Euclidean Feynman integral in eq.~\eqref{eq:parametric} reduces to the Hepp-bound studied by Panzer \cite{Panzer:2019yxl}. He gave more efficient ways to compute this normalization factor $I^\tr_G$ which likely can also be used to sample from the $\mu^\tr_G$ distribution more efficiently. Moreover, these more elaborate ways to compute the Hepp-bound can probably be generalized to deal with the $\omega(G) \neq 0$ case using results from Brown~\cite{Brown:2015fyf}.

In a broader sense, an extension of Algorithm~\ref{alg:gp_sampling} beyond the generalized permutahedron case could be possible. 
Especially attractive would be an extension which includes the interesting Minkowski space Feynman integral case. Further analysis of the relevant structures for the tropical geometric framework, starting for instance with the explicit counterexample in \cite[Section~2.4]{Smirnov:2012gma}, could lead to an appropriate refinement of the braid arrangement fan. Such a refinement could lead to a direct generalization of the generalized permutahedron sampling Algorithm~\ref{alg:gp_sampling}, which would make the integration of high dimensional Feynman integrals (i.e.~with $\sim 30$ edges) also possible in the Minkowski regime.
\item 
\textbf{(BPHZ renormalization)}
As mentioned above, Feynman integrals as the one in eq.~\eqref{eq:parametric} with non-integrable singularities are often interesting. A common approach to deal with these singularities is to subject the integral to an analytic continuation procedure before any numerical integration is performed. Ultimately, these singularities have a well-studied physical origin and are handled via \emph{renormalization}. The momentum BPHZ renormalization scheme takes care of these singularities \emph{before} any integration is performed. This renormalization scheme can be implemented on the level of the parametric integrand \cite{Brown:2011pj}. Such an implementation would make the analytical continuation step in the $\nu_e$ edge weights and the dimension $D$ unnecessary.
\item
\textbf{(Estimates for large loop order $\beta$-functions)}
The estimates that can be obtained using the proof-of-concept implementation of Feynman graph integrals up to loop order $17$ in $\varphi^4$-theory can immediately be used for a numerical estimation of the $\beta$-function up to this loop order. This has phenomenological applications for the calculation of critical exponents for various complex systems and even works without the need for an analytic continuation as it has been observed that the non-primitive contributions to the $\beta$-function become negligible with sufficiently large loop order. The approach can be further amplified by making use of the observed Hepp-bound - period correlation which has been applied by Panzer and Kompaniets \cite{Kompaniets:2017yct} to obtain estimates of the $\varphi^4$-theory $\beta$-function up to order $13$. 

Even if it is not possible to evaluate all necessary Feynman diagrams individually for the respective loop order, a numerical approach could be sufficient to gain enough insights on the distribution of the value of these integrals. If such statistical knowledge is available, the numbers of (renormalized) Feynman diagrams are sufficient to extrapolate values for the entire $\beta$-function contribution \cite{Borinsky:2017hkb}.

Such an approach naturally extends to a question for the inherently large-order regime: suppose that $G$ is a random 1PI $\varphi^4$-graph without subdivergences and $\ell(G)$ loops. Is there a limiting distribution 
\begin{align*} \lim_{\ell(G) \rightarrow \infty} C_{\ell(G)}^{-1} \int_{\P^{E-1}_{> 0}} \frac{\prod_{e} x_e^{\nu_e}}{\Psi_G(\bb x)^{D/2}} \left( \frac{\Psi_G(\bb x)}{\Phi_G(\bb x)} \right)^{\omega(G)} \Omega, \end{align*}
with an appropriate normalization constant $C_{L}$ for each loop order $L=\ell(G)$ and if yes, what does it look like?
The analysis \cite{deCalan:1981szv} gives some positive indication for the existence of such a distribution.

An overall normalization constant for such a distribution, which normalizes its expectation value to one, can be calculated using instanton methods \cite{McKane:2018ocs} and renormalized graph counting \cite{Borinsky:2017hkb}, as was pointed out by Panzer~\cite{panzer2020personal}:
\begin{align*} C_{L} = \frac{4 e^{-3\gamma_E}}{\sqrt{2 \pi} A^6} L^{\frac52} \left(\frac{3}{2}\right)^{L+3}, \end{align*}
where $\gamma_E$ is the Euler--Mascheroni constant and $A$ is the Glaisher--Kinkelin constant.

An exhaustive statistical analysis using the algorithms from this article should give further indication for or against the existence of such a limit. A combination with analytic combinatorial methods for Dyson-Schwinger equations might lead to an explicit form of a limit distribution of Feynman integrals \cite{Kreimer:2006ua,Kreimer:2006gm,Courtiel:2019dnq}.
\item
\textbf{(Phase-space integration)}
Simple \emph{phase space integrals}, which are another type of integrals necessary for particle physics phenomenology, also fall under the category of integrals in eq.~\eqref{eq:integral}. For more elaborate phase space integrals more complicated non-simplicial integration domains are necessary. It is possible that the algorithms discussed in this article may be extended to these more complicated domains. Writing the phase space integrals in terms of kinematic variables as in \cite{Gehrmann-DeRidder:2003pne} and using a geometric subtraction scheme for the infrared singularities \cite{Herzog:2018ily} could be instrumental for this extension. 
\item
\textbf{(Tropical sampling applied to sums of Feynman diagrams)}
The general tropical sampling algorithm described in Section~\ref{sec:trop_sampling} is made possible by a well-calculated emancipation from the rigid concept of \emph{sectors} as parts of the integral, which each have to be attacked individually. 

Following a line of thought from \cite{Arkani-Hamed:2017tmz}, we can say that a similar but much stronger bias exists on the level of the amplitude. The time-honoured approach to amplitude calculation is to write it as a sum over Feynman graphs with the same number of loops $L$ (pictorially in disregard of renormalization and the explicit form of the integrals),
\begin{align*} A_{L} = \sum_{\ell(G) = L} \frac{1}{|\Aut G|} \int \left( \ldots \right) \Omega, \end{align*}
and evaluate the indicated integrals one by one.
There are promising indications that there is a superior structure which can be `triangulated' into Feynman integrals in an appropriate sense yielding a sum as the one above, similar to the sector decomposition approach, where an individual integral is decomposed in terms of the triangulation of the respective normal fan. 

An especially suggestive candidate for such a superior object in the case of scalar quantum field theories is Outer space \cite{Culler1986ModuliOG} and its quotient formed under the action of $\operatorname{Out}(F_n)$ which is the \emph{moduli space of graphs}. For instance, unitarity and branch cut properties of Feynman integrals can be understood using Outer space \cite{bloch2015cutkosky,Kreimer:2018kah,Berghoff:2017dyq}. This space can be seen as a tropical analogue of Teichmüller space \cite{chan2013tropical} and the moduli space of curves, which holds a similar superior role in string theory. Recently, quantum field theory inspired techniques have been successfully applied in the theory of Outer space \cite{Borinsky:2019rtu}.

A problem to overcome for such an approach are the UV-divergences that naturally appear in renormalizable QFT calculations. It is well-known how such divergences can be handled both on the amplitude or on a per integral level \cite{Collins:1984xc} and also mathematically these divergences are quite well understood, even in the large-order regime \cite{Borinsky:2017hkb,borinsky2018graphs}. These divergences would also appear in a geometric setting for the amplitude and dealing with them would mean to work on a certain compactification of Outer space and the moduli space of graphs. One such compactification has been constructed by Berghoff \cite{Berghoff:2017dyq} (see also \cite{Berghoff:2020bug}), which might be usable for the numerical evaluation of amplitudes.
\item 
\textbf{(Quasi Monte Carlo)}
State of the art implementations for numerical Feynman integral integration employ \emph{quasi Monte Carlo} methods \cite{niederreiter1992random} for the actual integration of the sector integrals instead of traditional Monte Carlo methods. This has the simple and obvious advantage of a significantly increased rate of convergence. The disadvantage of the quasi Monte Carlo approach is that it is mathematically much more challenging to handle. It is plausible that the algorithms introduced in this article can be accelerated using quasi Monte Carlo methods. A challenge will be the handling of the mixture of discrete and continuous probability distributions in the sampling algorithms from Sections~\ref{sec:trop_sampling} and \ref{sec:genperm}.
\item 
\textbf{(Systematic tropical expansions)}
The tropical approximation $p^\tr$ can be interpreted as a certain limit as shown in Section~\ref{sec:trop_approx}. It is natural to ask if one can interpret $p^\tr$ as the `zeroth' order in a systematic expansion and it is plausible that a systematic improvement of the technique can be obtained this way. The ultimate aim of such investigations would be an efficient approximation scheme that gets by without a final Monte Carlo step and immediately yields a deterministic result.

\end{enumerate}

\providecommand\noopsort[1]{}

\providecommand{\href}[2]{#2}\begingroup\raggedright\endgroup
\end{document}